\def\squishlist{\setlength{\itemsep}{0pt}\setlength{\parsep}{0pt}%
  \setlength{\topsep}{0pt}\setlength{\partopsep}{0pt}\setlength{\parskip}{0pt}}
\let\realbfseries=\bfseries
\def\bfseries{\realbfseries\boldmath}
 \gdef\xxxmark{%
   \expandafter\ifx\csname @mpargs\endcsname\relax 
     \expandafter\ifx\csname @captype\endcsname\relax 
       \marginpar{xxx}
     \else
       xxx 
     \fi
   \else
     xxx 
   \fi}
 \gdef\xxx{\@ifnextchar[\xxx@lab\xxx@nolab}
 \long\gdef\xxx@lab[#1]#2{{\bf [\xxxmark #2 ---{\sc #1}]}}
 \long\gdef\xxx@nolab#1{{\bf [\xxxmark #1]}}
\newcommand{\set}[1]{\{#1\}}
\newcommand{\st}{\mathrel :}
\newcommand{\Oof}{\mathcal{O}} 
\newcommand{\tw}{\operatorname{tw}}
\newcommand{\floor}[1]{\left\lfloor#1\right\rfloor}
\renewcommand{\bar}[1]{\overline{#1}}
\newcommand{\isom}{\cong}
\newcommand{\minor}{\preccurlyeq}
\newcommand{\dminor}{\minor^d}
\newcommand{\bminor}{\minor^b}
\newcommand{\tminor}{\minor^t}
\newcommand{\undir}[1]{#1^{u}}
\newcommand{\undirG}{\undir{G}}
\newcommand{\bidir}[1]{#1^{bd}}
\newcommand{\bidirG}{\bidir{G}}
\newcommand{\innode}{\operatorname{in}}
\newcommand{\outnode}{\operatorname{out}}
\newcommand{\N}{{\mathbb N}}
\newcommand{\vrt}[1]{V(#1)}
\newcommand{\vrtG}{\vrt{G}}
\newcommand{\edge}[1]{E(#1)}
\newcommand{\edgeG}{\edge{G}}
\newcommand{\AAA}{\mathcal{A}} \newcommand{\BBB}{\mathcal{B}}
\newcommand{\CCC}{\mathcal{C}}
 \newcommand{\LLL}{\mathcal{L}}
\newcommand{\comment}[1]{}
\newcommand{\nwcrown}{nowhere crownful\xspace}
\newcommand{\Nwcrown}{Nowhere crownful\xspace}
\newcommand{\NwCrown}{Nowhere Crownful\xspace}
\newcommand{\swcrown}{somewhere crownful\xspace}
\newcommand{\rcdbg}{(G,\beta,\lambda,\eta)}
\newcommand{\margin}[1]{\marginpar{\smaller{\itshape{#1}}}}
\newcommand{\pprob}[4]{%
\begin{center}\normalfont\fbox{%
\begin{tabular}[t]{rp{#1}}%
\textit{Input:} & #2\\%
\textit{Parameter:} & #3\\%
\textit{Problem:} & #4%
\end{tabular}}%
\end{center}}
\begin{document}

  \title{Directed Nowhere Dense Classes of Graphs}

\author{Stephan Kreutzer\inst{1} \and Siamak Tazari\inst{2}\fnmsep\thanks{supported by a fellowship within the Postdoc-Programme of the German Academic Exchange Service (DAAD).}}

\authorrunning{S. Kreutzer, S. Tazari}

\institute{University of Oxford, \email{kreutzer@comlab.ox.ac.uk}  
  \and Massachusetts Institute of Technology, \email{stazari@mit.edu} }

\date{} 

\maketitle

\begin{abstract}
  We introduce the concept of shallow directed minors and based on
  this a new classification of classes of directed graphs which is
  diametric to existing directed graph decompositions and width
  measures proposed in~the~literature.

  We then study in depth one type of classes of directed graphs which
  we call \emph{nowhere crownful}. The classes are very general as
  they include, on one hand, all classes of directed graphs whose
  underlying undirected class is nowhere dense, such as planar,
  bounded-genus, and $H$-minor-free graphs; and on the other hand,
  also contain classes of high edge density whose underlying class is
  not nowhere dense. Yet we are able to show that problems such as
  directed dominating set and many others become fixed-parameter
  tractable on nowhere crownful classes of directed graphs. This is
  of particular interest as these problems are not
  tractable on any existing digraph measure for sparse classes.

  \smallskip
  The algorithmic results are established via proving a structural
  equivalence of nowhere crownful classes and classes of graphs which
  are \emph{directed uniformly quasi-wide}. This rather surprising
  result is inspired by~[{Ne\v{s}et{\v r}il} and Ossana de Mendez
  2008] and yet a different and much more involved proof is needed,
  turning it into a particularly significant part of our contribution.
\end{abstract}

\section{Introduction}

Faced with the seeming intractability of problems such as variants of
the dominating set problem, the independent set problem and many other
problems naturally arising in applications, an intensively studied
aspect of complexity theory is to explore the boundary of tractability
of these problems by identifying specific classes of graphs on which
they become tractable (in a parameterised setting, see
Section~\ref{sec:prelim}).  A central objective of this research is
to identify structural properties of graphs or graph classes such that
classes of this structure exhibit a rich algorithmic theory --
i.e.~many otherwise intractable graph problems become tractable --
while at the same time these classes should be general enough so that
graphs of this form do occur in applications.

To find such graph parameters, methods derived from structure
theory for undirected graphs have proved to be extremely useful.
Of particular importance in this context is the concept of
\emph{tree-width} (see e.g.~\cite{Diestel05}) developed by Robertson and Seymour as part of
their celebrated graph minor project. Following the introduction of tree-width, a large
number of generally intractable problems have been shown to become
tractable on graph classes with a fixed upper bound on the tree-width. See
\cite{Bodlaender05,Bodlaender98,Bodlaender97,Bodlaender93} for surveys
of tree-width related results.

Besides graph classes of bounded tree-width, many other structural
parameters of graphs have been studied which allow for more efficient
solutions of otherwise hard problems. Among the most important such
parameters are planar graphs, or much more generally, graph classes
excluding a fixed minor (see e.g.~\cite{Diestel05}). A relatively new addition to the family of
graph parameters studied with algorithmic applications in mind are
\emph{nowhere dense} classes of graphs \cite{NesetrilOss08d} which will be of
special importance for this paper. 

The structural parameters discussed above all relate to
undirected graphs. However, many models naturally occurring in
computer science are directed. 
Given the enormous success width parameters 
had for problems defined on undirected graphs, it is natural to ask
whether they can also be used to analyse the complexity of problems
on directed graphs.
 While in principle it is 
possible to apply the structure theory for undirected graphs to directed
graphs by ignoring the direction of edges, this implies a significant
information loss. Hence,
for computational problems which inherently apply to directed graphs, 
methods based on the structure theory for undirected graphs may not 
always be applicable.

Reed \cite{Reed99} and Johnson, Robertson, Seymour and Thomas
\cite{JohnsonRobSeyTho01} initiated the development of a decomposition
theory for directed graphs with the aim of defining an analogue of the
concept of undirected tree-width for directed graphs. Following their
definition of a \emph{directed tree-width}, several alternative
notions have been introduced, for instance in
\cite{Safari05,Barat06,BerwangerDawHunKre06,Obdrzalek06,HunterKre08,KanteR09}.
For each of these decompositions and associated width measures it has
been shown that several problems become tractable if the width of
digraphs with respect to these measures is bounded by a fixed
constant. However, most examples of problems becoming tractable are
either linkage problems, i.e.~problems asking for the existence of
certain pairwise disjoint paths in the digraph, or certain
combinatorial games arising in verification. The only exception is
bi-rank-width \cite{KanteR09} where all problems definable in monadic
second-order logic become tractable. However, this is a very special
case as it is modelled after the concept of undirected clique-width %
and in fact bounded bi-rank-width implies bounded clique-width of the
underlying undirected graphs. It is therefore a measure aimed to some
extent at dense but homogeneous graphs whereas here we are concerned
with sparse graphs in line with undirected graph structure theory such
as excluded minors and nowhere dense classes of graphs.

Following these initial proposals for directed analogues of tree-width
or other undirected graph measures, several papers investigated how
rich the algorithmic theory of classes of digraphs of bounded width
with respect to these measures is.  Unfortunately, for many
interesting problems other than those mentioned before, strong
intractability results for these width measures were obtained showing
that the algorithmic applicability of the existing directed width
measures is very limited. See
e.g.
\cite{KreutzerOrd08,KaouriLamMit08,DankelmannGK09,GanianHliKneLanObdRos10}
and references therein.

\medskip 

\noindent\textbf{Our contributions. }
In this paper we define new and completely different width measures
for directed graphs tailored towards algorithmic applications and which
overcome the problems with existing measures mentioned above.  The
novelty and fundamental difference of our approach is that we will not
consider acyclic digraphs as the algorithmically simplest class of
digraphs, as it was done for previous proposals of directed width
measures, and instead aim for width measures where the class of all
acyclic graphs has unbounded width.

More specifically, on a conceptual level, we draw inspiration from the
concept of nowhere 
dense classes of undirected graphs \cite{NesetrilOss08d} and introduce
a new classification of classes of directed graphs based on the concept
of shallow directed minors (see Section~\ref{sec:diminors}).  In this
way we introduce classes of directed graphs which we call
\emph{nowhere dense}, \emph{somewhere dense}, \emph{\nwcrown} or of
\emph{directed bounded expansion} and show that these classes form a
strict hierarchy into which all classes of directed graphs can be
classified.  We show various structural properties of these classes.

In the rest of the paper we then concentrate on classes of directed
graphs which are {\nwcrown}.  As the main result in the first,
conceptual part of the paper, we show that these classes can
alternatively be characterised by a property called \emph{uniformly
  quasi-wideness} (see Theorem~\ref{thm:main_equivalence}). This
characterisation in terms of wideness properties yields a new and
different perspective of {\nwcrown} classes which is algorithmically
very useful. A similar characterisation for the undirected case was
given in \cite{NesetrilOss10}. Unfortunately, their proof does not
generalise to the directed graph setting and our proof of
Theorem~\ref{thm:main_equivalence} requires completely new and
different ideas.

This equivalence is indeed quite surprising when seen from the following
perspective. The natural analogon to undirected nowhere dense
classes seems to be our notion of directed nowhere dense, where
tournaments are excluded as shallow minors. However, it turns out that
in order to obtain quasi-wideness we need to exclude shallow
\emph{crowns}, a certain orientation of a subdivision of a clique. But
even in this case, the proof of~\cite{NesetrilOss10} breaks down at a
somewhat unexpected place and can only be fixed via a much more
delicate and, in a sense, fragile argument.

\Nwcrown classes are incomparable to classes of digraphs of bounded
width with respect to existing width measures. They can be very
general. For instance, the class of planar directed graphs is
\nwcrown. More generally, if $\CCC$ is a class of directed graphs such
that the class $\undir{\CCC}$ of undirected graphs obtained from
$\CCC$ by forgetting the direction of edges is nowhere dense, e.g.\
excludes a fixed minor, then $\CCC$ is \nwcrown. However, there are
simple examples for classes $\CCC$ of directed graphs which are
{\nwcrown} but where the class $\undir{\CCC}$ is not nowhere
dense. Hence, \nwcrown classes can be much more general than classes
of bounded directed tree-width or other width measures for directed
graphs introduced so far. On the other hand, the class of acyclic
directed graphs is not \nwcrown but it has small width in all other
digraph decompositions (except bi-rank-width) so that \nwcrown classes
are incomparable to classes defined by other width measures.

In the second, algorithmic part of the paper we then concentrate on
algorithmic applications of the graph classes introduced in the first
part.  Our main algorithmic result is that using the alternative
characterisation of {\nwcrown} classes we can show that on these
classes of digraphs problems such as directed (independent, etc. )
dominating set, independent sets, and many other generally hard
problems become tractable. This is particularly interesting as exactly
these types of problems have been shown to be intractable on all
existing directed width measures for sparse classes and therefore our
definition provides the first structural property for sparse classes
that can successfully be used in the analysis of domination problems
on directed graphs. We finally study the connected dominating set
problem and its directed analogues. In particular we show that the
dominating outbranching problem is tractable on {\nwcrown} classes.

Width measures for undirected graphs have proved to be extremely
successful in tackling the complexity of hard algorithmic problems on
undirected graphs with a huge number of applications.  Developing a
similar approach for directed graphs therefore has the potential for
tremendous impact on the theory of hard problems on directed graphs
and on the design of algorithms for solving these and must therefore
be seen as a crucial aim.

The width measures introduced here allow for the first time to analyse
the complexity of domination type problems on classes of directed graphs and
we believe that the structural classification of directed graphs
developed here may provide an interesting step towards
establishing a structure theory for directed graphs more
fruitful for algorithmic applications than the theory based
on analogues of tree-width. 

\medskip

\noindent\textbf{Organisation and results. } The paper is organised as follows.
We fix notation and introduce some basic concepts in
Section~\ref{sec:prelim}. The concept of directed minors used in this
paper is introduced in Section~\ref{sec:diminors} where we establish
basic properties of the minor relation. In
Section~\ref{sec:classification} we establish our classification of
classes of directed graphs by defining several measures for directed
graphs of increasing complexity.

In the rest of the paper we will then concentrate on \nwcrown classes
of graphs. In Section~\ref{subsec:equiv}, we provide an alternative
characterisation of \nwcrown classes based on the existence of
scattered sets. We use this characterisation in Section~\ref{sec:algo}
to show that many problems become tractable on classes of digraphs
which are \nwcrown. We conclude and state open problems in
Section~\ref{sec:conclusion}.

\section{Preliminaries}
\label{sec:prelim}

We write $\N$ for the set of non-negative integers. If $M$ is a set
and $k\in \N$ we write $[M]^{\leq k}$\margin{$[M]^{\leq k}$} for the
set of all subsets of $M$ of cardinality at most
$k$. $[M]^{k},[M]^{<k}$ is defined analogously. 

A \emph{digraph}
$G=(V,E)$ is a pair where $V$ is its set of vertices and $E \subseteq
V \times V$ is its set of edges. We often use $\vrtG$ and $\edgeG$ to
refer to the set of vertices and edges of $G$, respectively, and write
$uv$ instead of $(u,v)$ to denote an edge of $G$. In an
\emph{undirected graph} $G=(V,E)$, we have $E \subseteq [V]^2$ instead.

\begin{definition}
  For a digraph $G$, we define its \emph{underlying undirected graph}
  $\undirG$ \margin{$\undirG$} to be an undirected graph on the same
  vertex set and with edge set $\set{ \set{u,v} \st uv\in E(G) \text{
      or } vu\in E(G) }$. If $\CCC$ is a class of directed graphs,
  then its underlying undirected class is defined as $\undir{\CCC} :=
  \{ \undirG \st G\in \CCC\}$.

  Similarly, for an undirected graph $G$, we define its
  \emph{underlying bidirected graph} $\bidirG$ \margin{$\bidirG$} as
  the directed graph on the same vertex set and with edge set $\set{
    (u,v), (v,u) \st \set{u,v} \in E(G) }$ and define $\bidir{\CCC}$
  for a class of undirected graphs $\CCC$ analogously.
\end{definition}

A \emph{(directed) path} of length $k$ in a (di)graph $G$ is a
sequence $v_1\dots v_{k+1}$ of distinct vertices of $G$ such that for
each $1 \leq i \leq k$, there is an edge $v_iv_{i+1}$ in $E(G)$. If we
allow and require $v_1=v_{k+1}$ we have a \emph{(directed) cycle} instead. A
\emph{directed acyclic graph (DAG)} is a digraph that contains no
directed cycles.

\begin{definition}
  By $N_d^+(v)$\margin{$N_d^+(v)$} we denote the \emph{$d$-outneighbourhood} of $v$, i.e.~
  \[
     N_d^+(v) := \{ u\in \vrtG \st \text{ there is a directed path
       from $v$ to $u$ of length } \leq d\}.
  \]
  $N_d^-(v)$ is defined analogously as the \emph{$d$-inneighbourhood}
  of $v$. \margin{$N_d^-(v)$} 

  If $X \subseteq V(G)$ then $N_d^+(X) := \bigcup_{x \in X} N_d^+(x)$,
  and $N_d^-(X)$ is defined analogously. We skip the index $d$ when it
  is equal to $1$.
\end{definition}

\begin{definition}
  A \emph{directed bipartite graph} is a directed graph $G := (A
  \dot\cup B, E)$ whose vertex set is partitioned into two sets $A$
  and $B$ and $E\subseteq A\times B$.  
%
\end{definition}

An \emph{orientation} of an undirected graph $G$ is a directed graph
obtained from $G$ by replacing every undirected edge by a directed
edge. A \emph{clique} of order $n$ is the (up to isomorphism) complete
undirected graph $K_n$ on $n$ vertices containing all possible
edges. A \emph{tournament} of order $n$ is a digraph $T_n$ that is an
orientation of the clique $K_n$.

A \emph{subdivision} of a (di)graph $G$ is obtained by replacing some
edges of $G$ by paths; in case of digraphs, the paths must respect the
directions of the replaced edges.  The following class of graphs, which
are acyclic orientations of subdivided cliques, will play a special
role in this paper.

\begin{definition}
  A \emph{crown} of order $q$, for $q>0$, is the graph $S_q$ with \margin{$S_q$}
  \begin{itemize}
  \item $V(S_q) := \set{ v_1, \dots, v_q \} \dot\cup \{ u_{i,j} \st 1
    \leq i < j\leq q}$ and
  \item $E(S_q) := \set{ (u_{i,j}, v_i),(u_{i,j}, v_j) \st 1\leq i<j\leq
    q}$.
  \end{itemize}
  we call the vertices $v_1, \dots, v_n$ the \emph{principal vertices}
  of the crown.
\end{definition}

\begin{definition}
  Let $G$ be a digraph. A \emph{$k$-alternating path} $AP_k$ in $G$,
  for some $k\geq 1$, is an orientation of a path $v_1\dots v_{k+2}$
  such that either all edges are directed towards their incident
  vertex with an odd index or all edges are directed towards their
  incident vertex with an even index.
\end{definition}

The model of complexity we are using is parameterised complexity
\cite{DowneyFel98,FlumGro06}. A problem of size $n$ with parameter $k$
is said to be \emph{fixed-parameter tractable (fpt)}, if it can be
solved by an algorithm in time $\Oof(f(k) n^{\Oof(1)})$, for some
computable function $f$. The class FPT is the set of all parameterised
problems that are fixed-parameter tractable. The class XP is the set
of all parameterised problems that can be solved by an algorithm in
time $\Oof(n^{f(k)})$, for a computable function $f$. If the parameter
of a problem is not explicitly specified, we assume the \emph{standard
  parameterization}, that is, the parameter is the solution size. For
example, the standard parameter of the dominating set problem is the
size of a minimum dominating set in the given graph. Recall that $D
\subseteq V(G)$ is a $d$-dominating set if $N_d^+(D) = V(G)$
and is just called a dominating set for $d=1$.

\section{A Classification of Directed Graph Classes}

In this section we give a classification of directed graphs in terms
of shallow directed minors. We first state our quite general
definition for directed minors in terms of models and compare it with
some other definitions in the literature. In
Section~\ref{sec:classification}, we define several properties of
directed graph classes, show their relationship, and compare them to their
undirected counterparts. 

\subsection{Directed Minors}
\label{sec:diminors}
For undirected graphs, we say that $H$ is a minor of $G$, denoted as
$H \minor G$ \margin{$\minor$}, if $H$ can be obtained from $G$ by a series of vertex
and edge deletions and edge contractions. This is equivalent to $G$
containing a \emph{model} of $H$. We define the minor relation for
directed graphs in terms of models below.

\begin{definition}\label{def:dminor}
  A digraphs $H$ has a \emph{directed model} in a digraph $G$ if there is
  a function $\delta$ mapping vertices $v\in V(H)$ of $H$ to
  sub-graphs $\delta(v) \subseteq G$ and edges $e\in E(H)$ to edges
  $\delta(e) \in E(G)$ such that 
  \begin{itemize} \squishlist
    \item if $v\not= u$ then $\delta(v) \cap \delta(u) = \emptyset$;
    \item if $e := uv$ then $\delta(e)$ has its start point in
      $\delta(u)$ and its end in $\delta(v)$.
  \end{itemize}
  For $v\in V(H)$ we set $\innode(\delta(v)) := V(\delta(v)) \cap
  \bigcup_{e := uv\in E(H)} V(\delta(e))$ and\\
  $\outnode(\delta(v))
  := V(\delta(v)) \cap \bigcup_{e := vw\in E(H)} V(\delta(e))$ and
  require that 
  \begin{itemize} \squishlist
    \item there is a directed path in $\delta(v)$ from any $u\in \innode(\delta(v))$ to
      every $u'\in \outnode(\delta(v))$;
    \item there is at least one source vertex $s_v \in \delta(v)$ that
      reaches every element of $\outnode(\delta(v))$;
    \item there is at least one sink vertex $t_v \in \delta(v)$ that can
      be reached from every element of $\innode(\delta(v))$.
  \end{itemize}
  We write $H\dminor G$ if $H$ has a directed model in $G$ and say $H$
  is a \emph{directed minor} of $G$.\margin{$\dminor$} We call the
  sets $\delta(v)$ for $v \in V(H)$ the \emph{branch-sets} of the
  model.
\end{definition}

It is obvious that this notion generalises the concept of undirected
minors as follows.

\begin{lemma}\label{lem:minor_dminor}
  If $G$, $H$ are undirected graphs, then $H \minor G \Leftrightarrow
  \bidir H \dminor \bidir G$. If $G$, $H$ are digraphs, then $H
  \dminor G \Rightarrow \undir H \minor \undir G$.
\end{lemma}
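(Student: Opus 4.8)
Here is how I would approach proving Lemma~\ref{lem:minor_dminor}. The plan is to translate back and forth between undirected minor models and directed models, exploiting that forgetting orientations undoes bidirecting, i.e.\ $\undir{\bidir G}=G$ and $\undir{\bidir H}=H$. The only genuinely delicate point is the connectivity of branch-sets of a directed model, which I address at the end.

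For the forward part of the first statement, $H\minor G\Rightarrow\bidir H\dminor\bidir G$, I would start from an undirected minor model of $H$ in $G$: pairwise disjoint branch-sets $B_v\subseteq\vrt G$ for $v\in\vrt H$ with each $G[B_v]$ connected, and, for every edge $\set{u,v}\in\edge H$, a chosen edge $\set{a_{uv},b_{uv}}\in\edge G$ with $a_{uv}\in B_u$ and $b_{uv}\in B_v$. I then define a directed model of $\bidir H$ in $\bidir G$ by $\delta(v):=\bidir{G[B_v]}$ and, for each $\set{u,v}\in\edge H$, $\delta((u,v)):=(a_{uv},b_{uv})$ and $\delta((v,u)):=(b_{uv},a_{uv})$, both of which are edges of $\bidir G$. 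Disjointness of the branch-sets and the start/end-point conditions hold by construction. Since $G[B_v]$ is connected, $\bidir{G[B_v]}$ is strongly connected, so inside $\delta(v)$ every vertex reaches every other vertex; hence the internal path condition holds, and any vertex of $\delta(v)$ may be taken as $s_v$ (it reaches all out-nodes) and as $t_v$ (it is reached from all in-nodes). This yields $\bidir H\dminor\bidir G$.

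The two remaining implications, $\bidir H\dminor\bidir G\Rightarrow H\minor G$ and $H\dminor G\Rightarrow\undir H\minor\undir G$, I would treat uniformly. Start from a directed model $\delta$ of $\HHH$ in $\GGG$, where $(\HHH,\GGG)$ is $(\bidir H,\bidir G)$ in the first case and $(H,G)$ in the second; note $\undir\GGG$ equals $G$ resp.\ $\undir G$ and $\undir\HHH$ equals $H$ resp.\ $\undir H$. For each $v\in\vrt\HHH$ define an undirected branch-set $B_v$: if $v$ has no incident edge in $\HHH$, pick any $w\in\vrt{\delta(v)}$ (nonempty since $s_v$ exists) and set $B_v:=\set{w}$; otherwise let $C_v$ be the weakly connected component of $\delta(v)$ that contains $\innode(\delta(v))\cup\outnode(\delta(v))$, and set $B_v:=\vrt{C_v}$. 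That this component is well defined uses exactly the three closure requirements of Definition~\ref{def:dminor}: if $\innode(\delta(v))$ and $\outnode(\delta(v))$ are both nonempty, the guaranteed in-node-to-out-node paths put all of them into one component, and $s_v$, $t_v$ are attached by the source resp.\ sink condition; if only $\outnode(\delta(v))$ is nonempty, all out-nodes lie in the component of $s_v$, and symmetrically if only $\innode(\delta(v))$ is nonempty. Then the $B_v$ are pairwise disjoint (they lie in the pairwise disjoint sets $\vrt{\delta(v)}$); each $B_v$ induces a connected subgraph of $\undir\GGG$, since $C_v$ is weakly connected in $\delta(v)$ and the underlying undirected graph of $\delta(v)$ is a subgraph of $\undir\GGG$; and for every edge of $\undir\HHH$ between $u$ and $v$ there is an edge $e$ of $\HHH$ from $u$ to $v$ (or from $v$ to $u$), so $\delta(e)$ runs from an out-node of one branch-set (which lies in the corresponding $B$) to an in-node of the other (likewise), giving an edge of $\undir\GGG$ between the two branch-sets. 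Hence $\set{B_v}$ is an undirected minor model of $\undir\HHH$ in $\undir\GGG$, i.e.\ $H\minor G$ resp.\ $\undir H\minor\undir G$.

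The step I expect to be the main obstacle — really the only non-formal point — is that branch-sets of a directed model need not even be weakly connected, so one cannot naively take $B_v=\vrt{\delta(v)}$; retreating to the weakly connected component spanned by the in- and out-nodes repairs this, and checking that there is a single such component is precisely where the path and source/sink requirements of Definition~\ref{def:dminor} are used. Everything else is routine bookkeeping relating the underlying-undirected and bidirection operations.
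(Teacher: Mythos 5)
Your proof is correct. The paper gives no proof of this lemma at all (it is introduced with ``It is obvious that\dots''), and your argument is the natural one, with the one genuinely non-obvious point --- that the branch-sets $\delta(v)$ of a directed model need not be weakly connected, so one must pass to the weakly connected component containing $\innode(\delta(v))\cup\outnode(\delta(v))$, whose uniqueness follows from the path and source/sink conditions of Definition~\ref{def:dminor} --- correctly identified and handled.
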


For (di)graphs $G$, $H$, we say that $H$ is a \emph{(directed)
  topological minor} of $G$ if $G$ contains a subdivision of $H$ as a
subgraph and denote it by $H \tminor G$ \margin{$\tminor$}. As with
undirected graphs, we have that $H \tminor G \Rightarrow H \dminor G$
but the reverse is not true in general.

In the literature, another notion of directed minor has been considered~\cite{JohnsonRobSeyTho01}:

\begin{definition}
  A \emph{butterfly contraction} is the operation of contracting an
  edge $e=uv$ where either $u$ has outdegree $1$ or $v$ has indegree
  $1$. A graph $H$ is said to be a \emph{butterfly minor} of $G$,
  i.e.\ $H \bminor G$\margin{$\bminor$}, if it can be obtained
  from $G$ by a series of vertex and edge deletions and butterfly
  contractions.
\end{definition}

\begin{figure}[t]
  \centering
  \subfigure[]{\includegraphics[height=1cm]{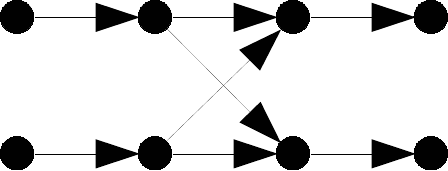}}\hfil\hfil
  \subfigure[]{\includegraphics[height=1cm]{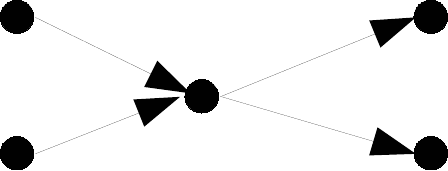}}\hfil\hfil
  \caption{The digraph in (a) contains a model of (b)
    but does not contain it as a butterfly minor.}
  \label{fig:bminor} 
\end{figure}

\begin{lemma}
  For digraphs $G$ and $H$, we have $H \bminor G \Rightarrow H\dminor G$,
  but the converse is not true in general.
\end{lemma}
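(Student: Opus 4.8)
The plan is to prove both parts of the statement separately: first that a butterfly minor is always a directed minor, and then that the converse fails by exhibiting a concrete counterexample.

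\textbf{The forward direction.} I would argue that each of the elementary operations used to form a butterfly minor can be simulated within the model framework of Definition~\ref{def:dminor}. Vertex and edge deletions are immediate: if $H$ has a directed model in $G$, then deleting a vertex or edge of $H$ just means restricting $\delta$ to the smaller graph, and all the reachability conditions are inherited. The only real content is the butterfly contraction of an edge $e = uv$ where (say) $u$ has outdegree~$1$; the case where $v$ has indegree~$1$ is symmetric. Here I would show directly that if $H'$ is obtained from $H$ by contracting such an edge $e$ to a new vertex $w$, then $H' \dminor H$ (and then compose with $H \dminor G$, using transitivity of $\dminor$, which should have been established earlier in the section as a basic property of the minor relation; if not, I would fold the composition into a single explicit model in $G$). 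To build the model of $H'$ in $H$, set the branch-set $\delta(w) := \{u,v\}$ together with the edge $uv$, and $\delta(x) := \{x\}$ for every other vertex $x$. The key point that makes this work is precisely the butterfly condition: since $u$ has outdegree~$1$, the only edge leaving the branch-set $\{u,v\}$ forward must leave from $v$, so $\outnode(\delta(w)) \subseteq \{v\}$, while any edge entering $\{u,v\}$ from $H'$ corresponds to an edge into $u$ or into $v$ in $H$; from either of $u,v$ one reaches $v$ via the path $u \to v$ (or trivially), so the ``$\innode$ reaches $\outnode$'' condition holds, and $u$ serves as a source vertex, $v$ as a sink vertex of $\delta(w)$. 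Without the outdegree-$1$ (or indegree-$1$) assumption, there could be edges leaving both $u$ and $v$, and then $\innode(\delta(w))$ might fail to reach all of $\outnode(\delta(w))$, which is exactly why arbitrary edge contraction does \emph{not} yield a directed minor.

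\textbf{The converse direction.} For this I would point to Figure~\ref{fig:bminor}: the digraph in part~(a) --- two internally disjoint directed paths of length two from a common source $a$ to a common sink $b$, i.e.\ $a \to c \to b$ and $a \to d \to b$ --- contains a directed model of the digraph in part~(b), the ``cherry'' $H$ consisting of a single vertex with two out-edges to two leaves (equivalently, an in-star or out-star on three vertices). I would verify the model explicitly: map the center of the star to the branch-set consisting of $a$ together with, say, the edge $a\to c$ and the edge $a \to d$ --- more carefully, the branch-set should be a subgraph of $G$ that supports the required paths, so one natural choice is $\delta(\text{center}) = \{a\}$ with $\delta$ of the two star-edges being $a\to c$ and $a \to d$, and $\delta(\text{leaf}_1)=\{c\}$, $\delta(\text{leaf}_2)=\{d\}$; all conditions of Definition~\ref{def:dminor} are trivially met since the center branch-set is a single vertex. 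Then I would show that $H$ is \emph{not} a butterfly minor of $G$: starting from $G$, any butterfly-contractible edge either contracts along one of the length-two paths (collapsing $a\to c$ requires $a$ to have out-degree~$1$, which it does not, or $c$ to have in-degree~$1$, which it does, so $a\to c$ \emph{is} contractible — hmm) so I must be careful and instead argue at the level of which three-vertex digraphs on $\{$center,leaf,leaf$\}$ can arise; a clean way is to observe that a butterfly minor on a vertex set of size $3$ obtained from this particular $4$-vertex $G$ can only be a subdigraph of a digraph reachable by legal contractions, and enumerate: contracting $a\to c$ gives $ac$ with edges $ac\to b$, $a\to d\to b$ — a triangle-ish digraph, not the star; similarly for the other single contractions; and no sequence produces the out-star. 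I expect the cleanest formulation is: in any butterfly minor, an edge of the minor comes from a path in $G$ all of whose internal vertices were contracted, and one checks by the small size of $G$ that the cherry cannot be realised this way.

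\textbf{Main obstacle.} The forward direction is routine once the right branch-set is written down; the subtle part is making the non-example airtight. The difficulty is that butterfly minors, unlike directed models, are defined operationally, so ruling out the cherry requires a small but genuine case analysis over the possible contraction sequences in the $4$-vertex graph of Figure~\ref{fig:bminor}(a), or equivalently a clean structural characterisation of butterfly minors (``$H \bminor G$ iff there is a partition of a subset of $V(G)$ into branch-sets each inducing a subdigraph with a specific in/out-degree-$1$ tree structure''). I would develop just enough of such a characterisation — namely that each branch-set of a butterfly model must, when contracted, have all incoming edges funnelling to a single internal ``collector'' and/or all outgoing edges emanating from a single ``distributor'' — to see immediately that the source vertex $a$ of $G$, having out-degree~$2$ with its two out-neighbours in distinct branch-sets, cannot be placed consistently, giving the contradiction.
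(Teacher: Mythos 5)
Your forward direction is sound and isolates the same key point as the paper, namely that the butterfly condition (tail of outdegree $1$ or head of indegree $1$) is exactly what guarantees the two-vertex branch-set $\{u,v\}$ satisfies the innode-to-outnode reachability and source/sink conditions of Definition~\ref{def:dminor}. The paper organises this differently: it inducts on the number of contractions and \emph{lifts} a model of $H$ from $G/e$ back to $G$, arguing that the butterfly condition ensures every path through the contracted vertex $z$ uncontracts to a path in $G$; you instead show each single contraction step is a $\dminor$-step and chain them. Your route leans on transitivity of $\dminor$, which the paper never establishes and which is not completely free given the source/sink clauses (you do flag this and offer to fold the composition into one explicit model, which is essentially what the paper's lifting accomplishes), so the forward half is acceptable.

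The converse is where there is a genuine gap. The example you propose -- the out-star (``cherry'') with centre $a$ and two out-edges inside the digraph consisting of two internally disjoint directed paths $a\to c\to b$ and $a\to d\to b$ -- is not a counterexample: the out-star on $\{a,c,d\}$ is literally a \emph{subgraph} of that digraph, hence a butterfly minor obtained by deleting $b$ and its incident edges with no contractions at all. More structurally, the out-star is a directed bipartite graph, and the paper proves immediately after this lemma that $H\dminor G \Leftrightarrow H\bminor G$ whenever $H$ is directed bipartite; so no graph consisting only of sources and sinks can separate the two notions. Your own case analysis stalls at exactly this point (``$a\to c$ \emph{is} contractible --- hmm''), and the closing ``structural'' argument is also wrong: the centre of a star can sit in a singleton branch-set, so a vertex of outdegree $2$ with its out-neighbours in distinct branch-sets poses no obstruction to being a butterfly minor. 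A correct separating example must involve a vertex of $H$ with both in- and out-edges whose branch-set genuinely needs a nontrivial contraction that the degree conditions forbid even after all unused edges are deleted; as it stands, the second half of the claim is unproved.
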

\begin{proof}
  We prove the claim by induction on the number of butterfly
  contractions necessary to obtain $H$ from $G$.  If no contractions
  are necessary, there is nothing to show. Otherwise, let $e=uv$ be an
  edge of $G$ that is to be butterfly contracted to a vertex $z$ in
  $G':=G/e$. Consider any path in $G'$ that passes through $z$ via
  edges $e_1$ and $e_2$, in this order. If $e_1$ is incident to $v$ in
  $G$ and $e_2$ is incident to $u$ in $G$, then $v$ would have
  indegree $2$ and $u$ would have outdegree $2$ in $G$, making it
  impossible to butterfly contract $e$ in $G$. Hence, every path in
  $G'$ corresponds to a path in $G$ by uncontracting $e$.

  As $H$ is a butterfly minor of $G'$, we know by the induction
  hypothesis that there is a map $\delta'$ that describes a model of
  $H$ in $G'$. Let $z_H \in V(H)$ be such that $z \in
  \delta'(z_H)$. Define $\delta$ to be equal to $\delta'$ except for
  $\delta(z_H) := \delta'(z_H) \setminus \set{z} \cup \set{u,v}$. Then
  $\delta$ describes a valid model of $H$ in $G$ since all paths of
  $G'$ are preserved when uncontracting $e$.

  To see that the converse is not true in general, see
  Figure~\ref{fig:bminor}. \qed
\end{proof}

In the rest of the paper, models of directed bipartite graphs will play
a special role. For these, we can take models to be of a particularly
simple form. Recall that an in-branching is an orientation of a tree
with all edges oriented towards the root; an out-branching is defined
analogously.

\begin{lemma}
  If $H$ is a directed bipartite graph with $H \dminor G$, we can
  choose the branch-sets of the model of $H$ in $G$ to be in- or
  out-branchings. Furthermore, we have $H \dminor G \Leftrightarrow H
  \bminor G$.
\end{lemma}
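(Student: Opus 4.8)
The plan is to prove both statements simultaneously, since the second follows easily once we have the first. Let $H$ be a directed bipartite graph with parts $A \dot\cup B$ and $H \dminor G$ via a model $\delta$. The key observation is that in a bipartite graph every vertex $v \in A$ has only outgoing edges and every vertex $v \in B$ has only incoming edges, so for $v \in A$ we have $\innode(\delta(v)) = \emptyset$, and for $v \in B$ we have $\outnode(\delta(v)) = \emptyset$. First I would handle the branch-sets of $v \in A$: by the model conditions there is a source vertex $s_v \in \delta(v)$ reaching every element of $\outnode(\delta(v))$, so I can replace $\delta(v)$ by the union of directed paths from $s_v$ to each such element; taking a BFS/shortest-path tree from $s_v$ restricted to the relevant endpoints yields an out-branching rooted at $s_v$ whose leaves include all of $\outnode(\delta(v))$. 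Symmetrically, for $v \in B$, using the sink vertex $t_v$ I obtain an in-branching rooted at $t_v$. After shrinking, the branch-sets are pairwise disjoint (we only removed vertices) and all edges $\delta(e)$ still have their endpoints in the correct branch-sets, so $\delta$ is still a valid directed model — this establishes the first claim.

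For the equivalence $H \dminor G \Leftrightarrow H \bminor G$, the direction $H \bminor G \Rightarrow H \dminor G$ is already established in the preceding lemma, so only $H \dminor G \Rightarrow H \bminor G$ needs argument. Here I would use the branching form just obtained. Consider an out-branching branch-set $\delta(v)$ for $v \in A$: I claim its edges can all be butterfly-contracted. Process the tree from the leaves upward; a leaf edge $e = xy$ with $y$ a leaf of the branching has $y$ of out-degree $0$ in $\delta(v)$ — but that is not quite the butterfly condition, so instead I process from the root: the root edge $e = s_v x$ has $s_v$ of in-degree $0$ in $G$ (we chose $s_v$ as a source vertex of $\delta(v)$, and since $v \in A$ there are no edges of $H$ into $v$, so no $\delta$-image edges enter $\delta(v)$ either) hence out-degree of $s_v$ within the whole picture equals its out-degree in $\delta(v)$; more carefully, I'd note $s_v$ has in-degree $0$ in the subgraph formed by $\delta(v)$ together with all model edges, which makes the contraction of any out-edge of $s_v$ a butterfly contraction. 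After contracting, the image of $s_v$ again has in-degree $0$, so I can iterate down the branching, contracting each edge in a BFS order from the root and collapsing all of $\delta(v)$ to a single vertex. The symmetric argument using in-degree-$0$ at sinks handles the $B$-side in-branchings. Deleting leftover vertices and edges then yields exactly $H$, so $H \bminor G$.

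The main obstacle I anticipate is verifying the butterfly condition is genuinely available at each contraction step — the condition "$u$ has out-degree $1$ or $v$ has in-degree $1$" must hold in the current digraph, which changes as we contract, and one must be careful that the model edges $\delta(e)$ incident to a branch-set do not spoil the in-degree-$0$ property of the root. The crucial point making this work is precisely the bipartiteness: for $v \in A$ no model edge enters $\delta(v)$ at all, so the root $s_v$ — and, inductively, the contracted blob — retains in-degree $0$ throughout, and every contraction of one of its out-edges is legal; dually for $v \in B$. I would also remark that this is why the statement is restricted to bipartite $H$: for general digraphs both an in-edge and an out-edge may hit a branch-set, and neither the branching simplification nor the butterfly-reduction need go through (cf.\ Figure~\ref{fig:bminor}).
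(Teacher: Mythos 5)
Your overall strategy matches the paper's: reduce the branch-sets to out-/in-branchings using that every vertex of a bipartite $H$ is a source or a sink, then obtain the butterfly minor by contracting those branchings. The first half is fine. In the second half, however, your justification of the butterfly condition is incorrect. The definition allows contracting $e=uv$ only when $u$ has out-degree $1$ or $v$ has in-degree $1$; in-degree $0$ of the tail $s_v$ is \emph{not} one of these and does not by itself license contracting an out-edge of $s_v$ (the root of an out-branching may well have out-degree $>1$). The observation that actually makes the contractions legal --- and which also rescues the leaf-first order you discarded --- is this: once all edges of $G$ outside the model are deleted, every non-root vertex of an out-branching $\delta(v)$ with $v\in A$ has in-degree exactly $1$ (its parent edge), because no model edge $\delta(e)$ enters $\delta(v)$ when $v$ is a source of $H$. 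Hence every tree edge of the out-branching satisfies the head condition, in whatever order you contract. Dually, every non-root vertex of a $B$-side in-branching has out-degree exactly $1$, satisfying the tail condition. This is precisely why the paper contracts the branchings starting from the leaves.

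A second, related problem is the order of operations: you defer the deletion of ``leftover vertices and edges'' to the very end, yet you evaluate degrees ``in the subgraph formed by $\delta(v)$ together with all model edges.'' Butterfly contractions must be legal in the \emph{current} digraph, so the non-model edges of $G$ have to be deleted \emph{before} contracting; otherwise extraneous edges of $G$ entering $\delta(v)$ can destroy the in-degree-$1$ (resp.\ out-degree-$1$) property on which the contractions rely. With these two repairs your argument coincides with the paper's proof.
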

\begin{proof}
  The first claim follows immediately from the definition of a
  directed model since $H$ consists solely of sources and sinks. As
  for the second claim, note that in order to obtain $H$ from $G$ as a
  butterfly minor, we can first delete all edges that are not in the model
  of $H$ in $G$ and then butterfly contract the in- and out-branchings
  of the model starting from the leaves.\qed
\end{proof}


If $G$ is acyclic, we can detect fixed minors in polynomial time (more
precisely, XP-time w.r.t.\ $|H|$). In order to prove this
fact, we first show the following lemma.

\begin{lemma}\label{lem:disjointpaths}
  Let $G$ be a DAG and let $s_1,\dots,s_k,t_1,\dots,t_k$ be given
  vertices of $G$. Furthermore, let $\LLL=\set{(z_0=0,z_1],
    (z_1,z_2],\dots,(z_{\ell-1},z_{\ell}=k]}$ be a set of $\ell$ given
  intervals partitioning $[1,k]$. The goal is to find, for each $1
  \leq i \leq k$, a path $P_i$ between $s_i$ and $t_i$ such that if $1
  \leq i < j \leq k$ belong to different intervals of $\LLL$, then
  $P_i$ and $P_j$ are disjoint.
  \begin{itemize} \squishlist
    \item[(a)] There is an algorithm running in time $n^{\Oof(k)}$ that
      finds such paths if they exist.
    \item[(b)] If we are additionally given an integer $r$ to bound
      the length that each path is allowed to have, then in time
      $(nr)^{\Oof(k)} = n^{\Oof(k)}$ we can find the required paths or report that
      they do not exist.
    \end{itemize}
\end{lemma}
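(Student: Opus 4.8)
The plan is to turn both parts into a reachability question in an auxiliary \emph{configuration graph} of size $n^{\Oof(k)}$, using acyclicity of $G$ in an essential way. First I would fix a topological ordering $v_1,\dots,v_n$ of $\vrtG$, so that every edge $v_av_b\in\edgeG$ has $a<b$; consequently every directed path of $G$ visits its vertices in strictly increasing order of index.

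I then scan $v_1,\dots,v_n$ in this order and maintain, after each prefix $v_1,\dots,v_j$, the set of reachable \emph{configurations}. A configuration is a tuple $(h_1,\dots,h_k)$ recording for each $i$ the current ``head'' $h_i$ of the partial path $P_i$ built so far for the pair $(s_i,t_i)$: $h_i=\star$ if $P_i$ has not been started, and otherwise $h_i\in\{v_1,\dots,v_j\}$ is the last vertex committed to $P_i$. A transition from the prefix of length $j-1$ to that of length $j$ amounts to guessing the set $S_j\subseteq\{1,\dots,k\}$ of indices whose path runs through $v_j$; it is admissible iff $S_j$ is contained in a single interval of $\LLL$ and, for every $i\in S_j$, either $v_j=s_i$ and $h_i=\star$ (so $P_i$ starts at $v_j$), or $h_iv_j\in\edgeG$ and $h_i\neq t_i$ (so $P_i$ is extended to $v_j$); the new configuration sets $h_i:=v_j$ for $i\in S_j$ and leaves the other coordinates unchanged. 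After $v_n$ is processed I accept a configuration iff $h_i=t_i$ for all $i$, and the desired paths are then read off by tracing back through the configuration graph.

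The step I expect to be the main obstacle is correctness, namely that a family of paths as required exists iff some accepting configuration is reachable, since a configuration remembers only the current heads and not the full history of the $P_i$'s. For the forward direction, given such paths put $S_j:=\{i:v_j\in P_i\}$; because indices increase along each path, just before $v_j$ is processed the head of $P_i$ is exactly $\star$ (if $v_j=s_i$) or the predecessor of $v_j$ on $P_i$ (otherwise), so this is an admissible accepting run, and each $S_j$ lies in one interval because paths belonging to different intervals are disjoint. Conversely, from an accepting run one recovers $P_i$ as the set $\{v_j:i\in S_j\}$ ordered by index; the admissibility conditions make each $P_i$ a directed $s_i$--$t_i$ path (automatically simple, as $G$ is acyclic), and the crucial point is that two such paths from different intervals cannot meet: a vertex $v_j$ is committed to paths only at the single step that processes it, and at that step $S_j$ sits inside one interval, so no cross-interval collision can arise later. (Only that $\LLL$ is a partition is used, not that its parts are intervals.)

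Finally I would bound the running time: there are $n$ steps, at most $(n+1)^k=n^{\Oof(k)}$ configurations, and for each at most $\sum_{I\in\LLL}2^{|I|}\le k\cdot 2^k$ admissible choices of $S_j$, which gives total time $n^{\Oof(k)}$ and hence~(a). For~(b) I would enlarge a configuration by also storing, for each $i$, the number of edges already on $P_i$, a value in $\{0,1,\dots,r\}$ (we may assume $r\le n-1$, since any path has fewer than $n$ edges), and forbid any transition pushing some $P_i$ past length $r$. This multiplies the number of configurations by $(r+1)^k$, for a total of $(nr)^{\Oof(k)}=n^{\Oof(k)}$, as claimed.\qed
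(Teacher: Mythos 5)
Your proof is correct and follows essentially the same route as the paper's: both build an Eppstein-style configuration graph over $k$-tuples of current path positions, advance in topological order so that each vertex of $G$ is introduced only once and only for indices lying in a single interval of $\LLL$, and handle part~(b) by augmenting configurations with per-path length counters in $\set{0,\dots,r}$. The only cosmetic differences are that the paper adds a super-source $s$ to get a single start configuration and phrases the sweep as reachability in an explicit product DAG $D$ on $V(G)^k$, whereas you do a prefix-by-prefix scan with a $\star$ marker for unstarted paths.
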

\begin{proof}
  Our ideas generalise that of Eppstein's algorithm~\cite{Eppstein95} (see
  also Perl and Shiloach~\cite{PerlShiloach78}) for finding disjoint
  paths in DAGs. We first augment $G$ by a vertex $s$ and add edges
  $(s,s_i)$ for all $1 \leq i \leq k$. Then we construct a new DAG $D$
  that contains a vertex for each tuple $(v_1,\dots,v_k) \in
  V(G)^k$. Consider a topological order of the vertices of $G$ and let
  $f(v)$ denote the position of vertex $v$ in this order. We add an
  edge from a tuple $(v_1,\dots,v_k)$ to a tuple $(w_1,\dots,w_k)$ in
  $D$ if the following conditions apply for some $w \in
  \set{w_1,\dots,w_k}$ and $d \in [1,\ell]$:
  \begin{itemize} \squishlist
    \item[(i)] for $1 \leq i \leq k$, if $w_i=w$, then there exists an edge $(v_i,w_i) \in E(G)$;
    \item[(ii)] for $1 \leq i \leq k$, if $w_i \neq w$, then $v_i=w_i$, i.e.\ $w$ is the only new vertex introduced;
    \item[(iii)] for $i \not\in (z_{d-1},z_d]$, $v_i=w_i$, i.e.\ we may change only elements out of one interval of $\LLL$ at a time;
    \item[(iv)] for $1 \leq i \leq k$, $f(w) > f(v_i)$.
  \end{itemize}
  We claim that there is a one-to-one correspondence between the paths
  $P_i$ that we are looking for in $G$ and paths from $(s,\dots,s)$ to
  $(t_1,\dots,t_k)$ in $D$. Indeed, we can transform the latter to the
  former by taking the projection to the $i$th coordinate for each
  $i$; condition (iii) and (iv) above guarantee that paths from
  different intervals will be disjoint since $w$ is required to be a
  vertex that has not appeared on the path so far and may be used by
  only one interval. In order to transform paths in $G$ to a path in
  $D$, we consider the set of all vertices appearing on the paths
  $P_1,\dots,P_k$ and introduce them to our path in $D$ in topological
  order. For further details, we refer to~\cite{Eppstein95}\footnote{the proof
  therein is for the all disjoint case but can be easily adapted.}.

  In order to prove (b), we additionally keep a table of size $(r+1)^k$ at
  each node of $D$ indexed by tuples $(r_1,\dots,r_k) \in
  \set{0,\dots,r}^k$. At a vertex $v \in V(D)$, an entry
  $(r_1,\dots,r_k)$ should indicate whether there exists a path $P$ from
  $(s,\dots,s)$ to $v$ in which the length of the path in $G$
  corresponding to the projection of the $i$th component of $P$ has
  length $r_i$. If we traverse the vertices of $D$ in topological order,
  these tables can be filled easily at each vertex using the tables of
  its predecessors, resulting in the claimed overall running time.\qed
\end{proof}

\begin{theorem}\label{lem:dminor_alg}
  Let $G$ be a directed acyclic graph. There is a polynomial time
  algorithm to decide whether $H \dminor G$ for any fixed digraph
  $H$.
\end{theorem}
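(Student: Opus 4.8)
The plan is to reduce deciding $H \dminor G$ to a bounded number of applications of Lemma~\ref{lem:disjointpaths}. The key observation is that a directed model of $H$ in a DAG $G$ can be described by a bounded amount of combinatorial data: for each vertex $v \in V(H)$, a source $s_v$ and a sink $t_v$ of the branch-set $\delta(v)$, and for each edge $e = uv \in E(H)$, the endpoints of the edge $\delta(e)$ — an edge of $G$ whose tail lies in $\delta(u)$ and whose head lies in $\delta(v)$. Since $|V(H)|$ and $|E(H)|$ are fixed, there are only $n^{\Oof(|H|)}$ choices for all of this data, and we can enumerate them. So first I would guess, for each $v \in V(H)$, a candidate source vertex $s_v$ and sink vertex $t_v$, and for each edge $e = uv$ of $H$ a candidate edge $a_e b_e \in E(G)$ (intended to be $\delta(e)$, with $a_e \in \delta(u)$, $b_e \in \delta(v)$).

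Having fixed this guess, the remaining task is purely a disjoint-paths problem. For each $v \in V(H)$, the branch-set $\delta(v)$ must contain: the source $s_v$, the sink $t_v$, all the tails $a_e$ for edges $e = vw$ leaving $v$ (these form $\outnode(\delta(v))$), and all the heads $b_e$ for edges $e = uv$ entering $v$ (these form $\innode(\delta(v))$); moreover $s_v$ must reach every out-node, every in-node must reach $t_v$, and every in-node must reach every out-node, all by directed paths inside $\delta(v)$. In a DAG this last family of reachability requirements can be realised by a system of paths, and the crucial point is that paths belonging to different branch-sets $\delta(v), \delta(u)$ must be vertex-disjoint, while paths inside the same branch-set are allowed (indeed required) to share vertices. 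This is exactly the setting of Lemma~\ref{lem:disjointpaths}: we create one path-request per reachability demand, and we group the requests into intervals of $\LLL$ according to which vertex of $H$ they belong to, so that disjointness is enforced precisely between requests from different branch-sets. Concretely, within $\delta(v)$ one can route $s_v$ to $t_v$ through all in-nodes and out-nodes in a fixed topological order, so a constant (in $n$) number of path-requests per vertex of $H$ suffices; together with the fixed guessed edges $\delta(e)$ this yields an instance of Lemma~\ref{lem:disjointpaths}(a) with $k = \Oof(|H|)$ path-requests and $\ell = |V(H)|$ intervals, solvable in time $n^{\Oof(|H|)}$.

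One technical point to handle carefully is that the guessed edges $\delta(e)$ must have their endpoints land in the correct branch-sets and must themselves be vertex-disjoint from everything except at the single shared branch-set endpoint; I would incorporate each $a_e$ and $b_e$ as a fixed terminal of the appropriate path-requests (so $a_e$ is an out-node of $\delta(u)$ and $b_e$ an in-node of $\delta(v)$), and simply delete from $G$ — or rather, forbid in the disjoint-paths instance — any accidental reuse, which the disjointness across intervals already takes care of, except that the edge $\delta(e)$ itself and the paths within $\delta(u)$ and $\delta(v)$ must not collide outside their intended intersection points. Since all terminals are guessed, this amounts to checking a bounded list of equalities/inequalities among the guessed vertices before invoking the lemma, plus possibly splitting vertices of $G$ into in- and out-copies to separate "used as a relay of edge $\delta(e)$" from "used inside a branch-set" — a standard gadget. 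If the guess is consistent and Lemma~\ref{lem:disjointpaths} returns the required disjoint path system, we output that $H \dminor G$; if no guess works, we output that $H$ is not a directed minor of $G$. Correctness in one direction is immediate (the guessed data plus the returned paths assemble into a model satisfying Definition~\ref{def:dminor}); in the other direction, given any model we read off the guess and the internal paths and verify they satisfy the interval-disjointness hypothesis.

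The main obstacle I anticipate is precisely the bookkeeping that ensures the routed paths form \emph{valid} branch-sets in the sense of Definition~\ref{def:dminor} — in particular the requirement that there be a single source $s_v$ reaching all out-nodes and a single sink $t_v$ reachable from all in-nodes, rather than just pairwise in-to-out reachability. In a general digraph these conditions are genuinely stronger, but in a DAG one can exploit the topological order: routing $s_v \to (\text{first in/out node}) \to \cdots \to (\text{last}) \to t_v$ along a single path visiting the guessed in- and out-nodes in topological order automatically witnesses all the required reachabilities simultaneously, so the acyclicity of $G$ is what makes the reduction go through with only $\Oof(|H|)$ path-requests. Verifying that this single-path-per-branch-set routing captures \emph{every} model (up to the freedom in choosing $s_v, t_v$) is the step that needs the most care, and it is also where the hypothesis that $G$ is acyclic is essential — as the paper notes, detecting directed minors is intractable in general digraphs.
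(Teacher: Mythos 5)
Your overall skeleton is the same as the paper's: guess the image edges $\delta(e)$ and, for each $v\in V(H)$, a source $s_v$ and sink $t_v$ (there are $n^{\Oof(h)}$ such guesses), observe that this fixes $\innode(\delta(v))$ and $\outnode(\delta(v))$, and then reduce the remaining connectivity requirements to Lemma~\ref{lem:disjointpaths}, using one interval of $\LLL$ per vertex of $H$ so that disjointness is enforced exactly across distinct branch-sets. Up to that point your argument matches the paper and is fine.

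The genuine gap is your ``single path per branch-set'' routing. You claim that whenever a model exists, $\delta(v)$ can be realised by one directed path $s_v \to (\text{in/out nodes in topological order}) \to t_v$, so that $k=\Oof(|H|)$ path-requests suffice. This is false, and acyclicity does not rescue it. Take $H$ to be a single vertex $u$ with two out-edges, and $G$ the DAG with $s\to o_1$, $s\to o_2$, $o_1\to x$, $o_2\to y$. The (essentially unique) model has $\outnode(\delta(u))=\{o_1,o_2\}$, and Definition~\ref{def:dminor} only demands that $s$ reach each of $o_1,o_2$ separately; no single directed path visits both $o_1$ and $o_2$, so your routing scheme rejects a yes-instance. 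In general, $\delta(v)$ is a DAG-shaped branch-set, not a path, and the source/sink and in-to-out conditions are pairwise reachability conditions, each of which may need its own internal path. The fix is exactly what the paper does: issue one path-request per reachability demand --- $s_v$ to each out-node, each in-node to $t_v$, and each in-node to each out-node --- and place all requests for a given $v$ in the same interval of $\LLL$, so that paths within one branch-set may freely share vertices while paths of different branch-sets must be disjoint. This gives $k=\Oof(h^2)$ requests and a running time of $n^{\Oof(h^2)}$ via Lemma~\ref{lem:disjointpaths}(a), still polynomial for fixed $H$. (Two minor omissions: you should reject outright if $H$ contains a directed cycle, and your worry about the edges $\delta(e)$ colliding with the paths is already handled by making $a_e$ and $b_e$ terminals of the appropriate intervals --- no vertex-splitting gadget is needed.)
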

\begin{proof}
  Obviously, if $H$ contains a cycle, we can reject. Otherwise, assume
  $G$ has $n$ vertices and $H$ has $h$ edges. We try to construct a
  map $\delta$ that defines a model of $H$ in $G$ as follows. First,
  we guess the mapping of the edges of $H$, that is, for each $e \in
  E(H)$, we guess an edge $e' \in E(G)$ as the value of
  $\delta(e)$. Furthermore, for each $v \in V(H)$, we guess vertices
  $s_v,t_v \in V(G)$ that act as a source and a sink for $\delta(v)$,
  respectively. There are $n^{\Oof(h)}$ possibilities for these
  choices and we can try all of them. This way, we have also partially
  guessed some vertices of $\delta(v)$, for $v \in V(H)$, namely, all
  vertices of $\innode(\delta(v)), \outnode(\delta(v))$, and its
  source and sink vertices. What remains is to connect these vertices
  by some paths as specified in Definition~\ref{def:dminor} such that
  the paths belonging to $\delta(v)$ and $\delta(w)$ for distinct
  $v,w \in V(H)$ are disjoint. But this is exactly what we can check
  for using Lemma~\ref{lem:disjointpaths} in $n^{\Oof(h^2)}$ time,
  which is polynomial for constant $h$ (note that the number of paths
  we are looking for can be $\Oof(h^2)$).\qed
\end{proof}

A further concept that is central to our work is that of \emph{shallow} minors:

\begin{definition}
  A digraph $H$ is a \emph{(shallow) depth $r$ minor} of a digraph
  $G$, denoted as $H \dminor_r G$, if there exists a directed model of
  $H$ in $G$ in which the length of all the paths in the branch-sets
  of the model are bounded by $r$.
\end{definition}

Hence, a depth $0$ minor is simply a subgraph and any minor of $G$ is
a depth $n$ minor. This is analogous to the definition of depth $r$
minors in undirected graphs, denoted by $H \minor_r G$. 

\begin{theorem}
  Let $G, H$ be digraphs and $r$ an integer. There is an algorithm
  that decides whether $H \dminor_r G$ in time $n^{\Oof(rh^2)}$,
  where $h$ denotes the number of edges of $H$. If $G$ is acylcic,
  then the problem can be decided in time $(nr)^{\Oof(h^2)} = n^{\Oof(h^2)}$.
\end{theorem}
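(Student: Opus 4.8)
The plan is to follow the proof of Theorem~\ref{lem:dminor_alg}, now additionally enforcing the length bound~$r$ on the connecting paths. As there, we try to build a map $\delta$ witnessing a depth-$r$ model of $H$ in $G$ by guessing, for every edge $e=uv\in\edgeH$, an edge $\delta(e)\in\edgeG$, and for every vertex $v\in\vrtH$, two vertices $s_v,t_v\in\vrtG$ playing the roles of the required source and sink of $\delta(v)$. After discarding isolated vertices of $H$ (which are mapped to arbitrary distinct vertices of $G$ and impose no further constraint) we may assume $|\vrtH|=\Oof(h)$, so there are $n^{\Oof(h)}$ such guesses. Once $\delta$ is fixed on the edges, the sets $\innode(\delta(v))$ and $\outnode(\delta(v))$ are determined (each of size at most $h$), and what remains is to realise, inside each $\delta(v)$, a directed path of length at most $r$ from every in-node to every out-node, from $s_v$ to every out-node, and from every in-node to $t_v$, so that the branch-sets of distinct vertices of $H$ are vertex-disjoint.

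For arbitrary $G$ we simply guess these paths. Summing over $v\in\vrtH$, the number of required paths is $\sum_{v}\big(\deg^+(v)+\deg^-(v)+\deg^-(v)\deg^+(v)\big)\le 2h+h^2=\Oof(h^2)$, and each of them is a sequence of at most $r+1$ vertices of $G$, so there are $n^{\Oof(rh^2)}$ guesses in total. For each guess we check in polynomial time that every guessed sequence is a directed path of $G$, that the union $\delta(v)$ of the paths guessed for $v$ meets $\delta(w)$ in no vertex for $w\neq v$, and that $s_v$ is a source and $t_v$ a sink of $\delta(v)$. Correctness in this direction is immediate; for the converse, note that any depth-$r$ model may be shrunk so that each branch-set is exactly the union of the (length $\le r$) paths witnessing the connectivity conditions of Definition~\ref{def:dminor}, and a source of $\delta(v)$ remains a source of any such sub-branch-set, so some valid guess exists whenever $H\dminor_r G$. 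The total running time is $n^{\Oof(rh^2)}$.

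If $G$ is acyclic we instead feed the $k=\Oof(h^2)$ terminal pairs described above into Lemma~\ref{lem:disjointpaths}(b), grouping them into $\ell\le|\vrtH|$ intervals — one per vertex of $H$, so that paths of the same branch-set may overlap while paths of different branch-sets are forced disjoint — and using the length bound $r$ (which we may assume is at most $n-1$, since simple paths are shorter). This runs in $(nr)^{\Oof(k)}=(nr)^{\Oof(h^2)}=n^{\Oof(h^2)}$, and over all $n^{\Oof(h)}$ outer guesses gives $n^{\Oof(h^2)}$ overall. Here the source/sink clauses are automatic: in a DAG, if $\delta(v)$ has an in-node, then following edges backwards from it reaches a source of $\delta(v)$, which in turn reaches every out-node (dually for sinks); and if $\delta(v)$ has no in-node, then the guessed $s_v$ is itself a source of the union of the $s_v$-rooted paths.

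The one genuinely new point compared with Theorem~\ref{lem:dminor_alg} is that for cyclic $G$ the disjoint-paths subroutine of Lemma~\ref{lem:disjointpaths} is unavailable — its auxiliary DAG relies on a topological order of $G$ to stay acyclic — which is precisely why we must brute-force the short connecting paths, and why the exponent picks up the extra factor $r$; the acyclic case keeps the cleaner bound because Lemma~\ref{lem:disjointpaths}(b) already absorbs the length constraint. The only other thing to be careful about is discharging the source/sink requirements of Definition~\ref{def:dminor}, and, as indicated above, these come for free once branch-sets are taken to be unions of witnessing paths.
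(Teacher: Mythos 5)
Your proof is correct and follows essentially the same route as the paper's (much terser) argument: guess the edge images and designated source/sink vertices as in Theorem~\ref{lem:dminor_alg}, then for general $G$ brute-force the $\Oof(h^2)$ connecting paths of length at most $r$ (giving $n^{\Oof(rh^2)}$), and for acyclic $G$ invoke Lemma~\ref{lem:disjointpaths}(b) with one interval per branch-set. Your additional care about the count of required paths and about shrinking a model to a union of witnessing paths only fills in details the paper leaves implicit.
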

\begin{proof}
  As in the proof of Lemma~\ref{lem:dminor_alg}, we start by guessing
  the image of the edges of $H$ in $G$ and then check if the required
  paths to complete the model exist. But as there are at most $n^r$
  paths of length at most $r$ starting at any given vertex of $G$, we
  can accomplish this in total time $n^{\Oof(rh^2)}$.

  If $G$ is a DAG, we can search for the required paths using Lemma~\ref{lem:disjointpaths}~(b). \qed
\end{proof}


\subsection{Classes of Digraphs Excluding Shallow Minors}
\label{sec:classification}


In the realm of undirected graphs, classes of graphs excluding minors
have been studied extensively, on one hand because some of these
classes, like planar or bounded genus graphs, are very natural and
important, and on the other hand due to their structural richness
revealed by Robertson and Seymour's graph minor
theory~\cite{GM-series}. More recently, {Ne\v{s}et{\v r}il} and Ossana
de Mendez~\cite{NesetrilOss08d} offered a generalization of these
classes based on the notion of shallow minors. They define a class of
undirected graphs to be \emph{somewhere dense} if there exists an
integer $r$, such that the set of depth $r$ minors of $\CCC$ contains
arbitrarily large cliques; otherwise, the class is called
\emph{nowhere dense}. It turns out that nowhere dense classes of
graphs are the most general classes of (almost) sparse graphs that are
structurally still very rich and allow for efficient (parameterised)
algorithms~\cite{DawarKre09,NesetrilOss10}.

In order to obtain an analogous theory for digraphs, we use the notion
of shallow directed minors as introduced above. One of our design
goals is to obtain proper generalizations of the undirected classes in
the sense that is given in our Classification and Comparison
Theorems~\ref{thm:classification} and~\ref{thm:comparison}
below. However, from an algorithmic point of view, we face the
following important issue: every acyclic graph excludes any cycle as a
minor, and every directed bipartite graph excludes a path of length
$2$ as a minor. But acyclic graphs are already algorithmically very
hard for many problems; and directed bipartite graphs (together with a
super-source) easily encode, for example, the general dominating set
problem in undirected graphs, which is W[2]-hard. Hence it seems that
even excluding an acylic graph will not facilitate finding efficient
algorithms.  Therefore, it appears that, algorithmically, it might be
beneficial to additionally introduce and study classes of digraphs
excluding a bipartite graph as a (shallow) minor.

\begin{definition}
  Let $\CCC$ be a class of digraphs. 
  \begin{enumerate}
  \item $\CCC$ is \emph{directed somewhere dense} if there is a radius
    $r\geq 0$ so that the set of depth $r$ minors of $\CCC$ contains
    arbitrarily large tournaments.
  \item $\CCC$ is \emph{directed nowhere dense} if for every $r$,
    there exists an $n$ and an \emph{acyclic} tournament $T_n$ so that
    for all $G\in\CCC$, $T_n \not\dminor_r G$.
  \item $\CCC$ is \emph{\swcrown}if there is a radius $r
    \geq 0$ so that every crown $S_q$ occurs as a depth $r$ minor of a
    graph $G \in \CCC$.
  \item $\CCC$ is \emph{\nwcrown}if for every $r$, there
    exists a $q$ so that for all all $G \in \CCC$, $S_q \not\dminor_r
    G$.
  \item $\CCC$ has \emph{directed bounded expansion} if there is a
    function $f\st \N\rightarrow \N$ such that $\nabla^d_r(G) \leq
    f(r)$ for all $G\in \CCC$, where $\nabla^d_r(G) := \{
    \frac{|E(H)|}{|V(H)|} \st H \dminor_r G\}$ (cf.~\cite{NesetrilOss08}).
  \item $\CCC$ is \emph{crown-minor-free} if there exists a
    $q$ such that $S_q \not\dminor G$ for any $G \in \CCC$.
  \item $\CCC$ is \emph{alternating path-minor-free} if there exists a
    $k$ such that $AP_k \not\dminor G$ for any $G \in \CCC$.
  \end{enumerate}
\end{definition}

\begin{theorem}[Classification Theorem]\label{thm:classification}~
\begin{itemize} \squishlist
  \item[(a)] A class $\CCC$ is directed somewhere dense if and only if
    it is not directed nowhere dense, i.e.\ there exists a radius $r$
    so that every acyclic tournament occurs as a depth $r$ minor of a
    graph $G \in \CCC$.
  \item[(b)] The property of being directed nowhere dense is more
    general than being \nwcrown, which is in turn more general
    than being crown-minor-free, and the latter being in turn more
    general than being alternating-path-minor-free.
  \item[(c)] The property of being directed nowhere dense is also more
    general than being of directed bounded expansion, which is in turn
    more general than being alternating-path-minor-free. The property
    of being of directed bounded expansion is neither comparable to
    being crown-minor-free, nor to being \nwcrown.
\end{itemize}
\end{theorem}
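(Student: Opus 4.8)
The plan is to prove the Classification Theorem part by part, establishing each containment among the defined properties and then exhibiting separating examples.

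\textbf{Part (a).} I would argue both directions of the equivalence. If $\CCC$ is directed somewhere dense then by definition there is a radius $r$ with arbitrarily large tournaments among the depth-$r$ minors; since every tournament on $n$ vertices contains a transitive (acyclic) subtournament on $\Omega(\log n)$ vertices (a classical Ramsey-type fact about tournaments), arbitrarily large acyclic tournaments also appear as depth-$r$ minors, so $\CCC$ fails to be directed nowhere dense. Conversely, if $\CCC$ is not directed nowhere dense, the negation of the definition directly gives a radius $r$ such that for every $n$ there is $G\in\CCC$ with $T_n\dminor_r G$ for \emph{every} acyclic $T_n$ — in particular arbitrarily large acyclic tournaments occur as depth-$r$ minors, and these are in particular tournaments, so $\CCC$ is directed somewhere dense.

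\textbf{Parts (b) and (c): containments.} For the chain in (b), I would show: (i) crown-minor-free $\Rightarrow$ \nwcrown — if $S_q\not\dminor G$ for all $G\in\CCC$ for a fixed $q$, then a fortiori $S_q\not\dminor_r G$ for every $r$, so the witness $q$ works uniformly; (ii) \nwcrown $\Rightarrow$ directed nowhere dense — here I would use that a large acyclic tournament $T_n$ contains a large crown $S_q$ as a \emph{topological} (hence directed) minor, by subdividing: a crown $S_q$ is an acyclic orientation of a subdivided clique, and inside a sufficiently large transitive tournament one can route the required internally-disjoint $2$-vertex paths $u_{i,j}v_i, u_{i,j}v_j$ with consistent orientation, so a depth-$r$ crown minor is forced once a depth-$r'$ tournament minor of sufficient size exists (with a controlled blow-up in $r$); (iii) alternating-path-minor-free $\Rightarrow$ crown-minor-free — since $AP_k$ (a $k$-alternating path) embeds into $S_q$ for $q$ large: within a crown one finds long alternating paths through principal vertices and the $u_{i,j}$'s, so excluding $AP_k$ excludes all large crowns. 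For (c), directed bounded expansion $\Rightarrow$ directed nowhere dense follows because bounded $\nabla^d_r$ bounds the edge density of all depth-$r$ minors, while a tournament on $n$ vertices has density $(n-1)/2\to\infty$, so arbitrarily large tournament minors are impossible. And alternating-path-minor-free $\Rightarrow$ directed bounded expansion: excluding $AP_k$ as a (non-shallow) minor bounds the structure of all shallow minors; I would derive a uniform edge-density bound on any $H\dminor_r G$ by observing that a directed bipartite graph of high average degree contains a large $AP_k$ as a minor (an averaging/greedy argument picking alternately high-degree vertices on the two sides), and reduce the general shallow-minor density bound to the bipartite case via the in-/out-branching normal form from the earlier lemma.

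\textbf{Parts (b)–(c): strictness and incomparability.} I would then give explicit classes. For directed nowhere dense but somewhere crownful: take, for each $q$ and a slowly growing radius, graphs realizing $S_q$ as a shallow minor but never an acyclic tournament of the corresponding size — e.g.\ disjoint unions / bounded-degree subdivisions of crowns, tuned so the tournament-exclusion function still grows. For \nwcrown but not crown-minor-free: a class containing $S_q$ as a depth-$r$ minor only for $r$ growing with $q$ (take the $r$-subdivision of $S_q$), which is \nwcrown yet contains every crown as a minor. For crown-minor-free but not alternating-path-minor-free: a class of long alternating paths (or subdivisions thereof) contains arbitrarily long $AP_k$ as minors but no crown, since crowns require a high-connectivity ``clique-like'' core absent from path-like graphs. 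For directed bounded expansion but not crown-minor-free: subdivided crowns $S_q$ subdivided enough times to keep $\nabla^d_r$ bounded for each fixed $r$ while still having every crown as a (deep) minor. Conversely, for crown-minor-free (even alternating-path-minor-free is too strong here, so pick just crown-minor-free) but not of directed bounded expansion: a class whose underlying undirected graphs have unbounded expansion yet which is crown-minor-free by an orientation obstruction — e.g.\ orient dense-ish graphs so that no crown (an acyclic subdivided-clique orientation) can be routed; one checks $\nabla^d_r$ is unbounded. Finally for directed bounded expansion but somewhere crownful, reuse the subdivided-crown family above.

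\textbf{Main obstacle.} The delicate point I expect to be hardest is step (ii) of (b): showing that excluding shallow crowns is genuinely weaker than excluding shallow tournaments, i.e.\ that a large \emph{acyclic} tournament as a depth-$r$ minor forces a large \emph{crown} as a depth-$O(r)$ minor. The routing of the $\binom{q}{2}$ internally disjoint oriented $2$-paths inside a transitive tournament model, with all branch-set paths of bounded length, needs care — naively the paths collide or require unbounded length. I would handle this by working inside a sufficiently large transitive subtournament (guaranteed by the Ramsey bound), using the linear order to assign each $u_{i,j}$ a distinct ``low'' vertex and each $v_i$ a distinct ``high'' vertex so that all required edges go forward in the order and are pairwise vertex-disjoint, which keeps branch-set paths of length $1$ and only costs a polynomial blow-up in the tournament size, hence no blow-up in $r$ at all. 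Symmetric care is needed in the separating examples to make sure the tournament-exclusion and $\nabla^d_r$ bounds really hold for every fixed $r$, which is where the constructions must be tuned rather than guessed.
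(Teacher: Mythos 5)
Your part (a) and the containments in (b)--(c) essentially match the paper. Part (a) uses the same fact (every tournament of order $2^n$ contains the transitive tournament of order $n$; the paper proves this by a max-outdegree induction, you cite it as classical). But you make far too much of ``\nwcrown $\Rightarrow$ directed nowhere dense'': no disjoint-path routing and no subdivision is needed, because $S_q$ is literally a \emph{subgraph} of the acyclic tournament on $q+\binom{q}{2}$ vertices (place all $u_{i,j}$ before all $v_i$ in the linear order; every required edge then points forward). Branch sets are single vertices, there is no blow-up in $r$ whatsoever, and your announced ``main obstacle'' dissolves --- the paper correctly dismisses all of (b) as immediate from the definitions. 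For ``alternating-path-minor-free $\Rightarrow$ directed bounded expansion'' you take a genuinely different route: the paper goes through Theorem~\ref{thm:alternating} (unbounded treewidth forces a grid minor, which contains a long alternating path under \emph{any} orientation). Your direct density argument can be made to work --- a digraph of average degree $d$ has a spanning directed bipartite subgraph of average degree $d/4$ (random bipartition, keep only $A\to B$ edges), and any long path in such a subgraph is automatically alternating --- but the reduction you name (``the in-/out-branching normal form'') is not the relevant tool; what you need is the bipartition step.

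The genuine gap is in the incomparability claims of (c). You must exhibit a class that is crown-minor-free but \emph{not} of directed bounded expansion, and here your proposal --- ``orient dense-ish graphs so that no crown can be routed; one checks $\nabla^d_r$ is unbounded'' --- is not a construction: arranging that no crown can be routed in a dense digraph is the entire difficulty. The paper devotes Theorem~\ref{thm:density} to exactly this point, via a probabilistic construction of directed bipartite graphs with out-degrees $\Theta(n^{1/2-\epsilon})$ containing no $S_q$; since directed bipartite graphs have no directed minors other than their subgraphs, excluding $S_q$ as a subgraph suffices, and the first-moment computation is what certifies existence. Note that the natural deterministic candidates fail: the reversed crowns used elsewhere in the paper are crown-minor-free but have edge density at most $2$, so they cannot witness unbounded expansion. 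Without a concrete dense crown-free family, the ``neither comparable'' assertion is unproved. (The converse direction of incomparability is easier than your subdivided-crown suggestion: the class of crowns themselves is directed bipartite, has density at most $2$, hence directed bounded expansion, yet is \swcrown.)
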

\begin{proof}
  (a) This follows from the fact that any tournament of order $2^n$
  contains all acyclic tournaments of order $n$ as a subgraph. For,
  let $T$ be a tournament of order $2^n$ and $T'$ an acyclic
  tournament of order $n$. Let $v$ be a vertex of highest outdegree in
  $T$ and let $v'$ be the first vertex of $T'$ in a topological
  order. Let $T_v$ be the tournament induced by the out-neighbours of
  $v$ in $T$. As the out-degree of $v$ is at least $2^{n-1}$, we have
  by induction that there exists a copy of $T'-v'$ in $T_v$. Now
  matching $v$ with $v'$ (note that $v'$ can only have out-neighbours
  in $T'$) and adding it to this copy gives us a copy of $T'$ in $T$.

  Claim (b) is immediate by definition.

  (c) A class that is of directed bounded expansion cannot have
  arbitrarily dense graphs as depth $r$ minors, in particular all
  acyclic tournaments, and hence any such class is directed nowhere
  dense. Also, as graphs of bounded treewidth and their minors are
  sparse, it follows from Theorem~\ref{thm:alternating} that
  alternating-path-minor-free classes are of directed bounded
  expansion. In Theorem~\ref{thm:density}, we show that there
  exist crown-minor-free classes of digraphs that have high edge
  density -- and are hence not of bounded expansion. Conversely, the
  set of all crowns and their minors (which are only their subgraphs)
  have edge density at most $2$ and are hence of bounded expansion but
  clearly not \nwcrown.\qed
\end{proof}

This theorem shows that our classes are robust; the next theorem shows
that they properly generalise the respective undirected concepts and
capture  much larger  classes of digraphs than what we would
get by only considering their underlying undirected classes.

\begin{theorem}[Comparison Theorem]\label{thm:comparison}
\begin{itemize} \squishlist
  \item[(a)] If $\CCC$ is a class of undirected graphs, then $\CCC$ is
    somewhere dense/nowhere dense/of bounded expansion/$H$-minor-free if
    and only if $\bidir \CCC$ is directed somewhere dense/nowhere
    dense/of bounded expansion/crown-minor-free,
    respectively. Furthermore, $\bidir \CCC$ is directed somewhere dense
    if and only if it is \swcrown.
  \item[(b)] Let $\CCC$ be a class of digraphs. If the underlying
    undirected class $\undir \CCC$ is nowhere dense/of bounded
    expansion/$H$-minor-free, then $\CCC$ is \nwcrown/of directed
    bounded expansion/crown-minor-free, respectively.
  \item[(c)] The reverse of (b) is not true in general. There exist
    classes of digraphs that are crown-minor-free but whose underlying
    undirected class is somewhere dense. 
\end{itemize}
\end{theorem}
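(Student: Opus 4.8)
\medskip\noindent\textbf{Proof plan.}
The plan is to reduce all three parts to a \emph{shallow} refinement of Lemma~\ref{lem:minor_dminor} together with a handful of elementary observations relating cliques, tournaments and crowns. The first step — and the place where I expect the main obstacle to lie — is to establish that there is an absolute constant $c$ such that, for undirected graphs, $H\minor_r G\Rightarrow\bidir H\dminor_{cr}\bidir G$, and, for digraphs, $H\dminor_r G\Rightarrow\undir H\minor_{cr}\undir G$. For the first implication one bidirects each induced subgraph $G[B_v]$ on the branch sets $B_v$ of a depth-$r$ model: the result is strongly connected, and the radius-$r$ centre of $B_v$ serves simultaneously as the required source and sink, so all relevant directed paths have length $O(r)$. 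For the second one passes to the underlying undirected graphs of the branch sets, which are connected of radius $O(r)$ because in the directed model every in-node reaches every out-node (and the designated source and sink) within $r$ steps. One must then check — routine, but it needs care — that this constant distortion of the radius is immaterial for each of the four notions, each being a ``for all $r$'' or a ``there exists $r$'' statement stable under $r\mapsto\Theta(r)$, with the expansion function absorbing the change.

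\noindent\emph{Part (a).} Beyond the shallow Lemma~\ref{lem:minor_dminor} I would use: (i) $\undir{T_n}=K_n$ for every tournament $T_n$, and $\bidir{K_n}$ contains every $n$-vertex tournament — in particular the acyclic one — as a subgraph, i.e.\ as a depth-$0$ minor; (ii) $\undir{S_q}$ is the $1$-subdivision of $K_q$, hence has $K_q$ as a depth-$1$ minor; (iii) $S_q\dminor_0\bidir{K_N}$ whenever $N\ge\binom{q+1}{2}$, realised by sending the $q$ principal and $\binom q2$ pair vertices of $S_q$ to distinct vertices of $\bidir{K_N}$ with all branch sets singletons. For the nowhere/somewhere-dense equivalence: $K_n\minor_r G$ gives $T_n\dminor_{cr}\bidir G$ by (i), and $T_n\dminor_r\bidir G$ gives $K_n\minor_{cr}G$ by the shallow Lemma; hence $\bidir\CCC$ has arbitrarily large tournaments as bounded-depth minors exactly when $\CCC$ has arbitrarily large cliques as bounded-depth minors. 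For bounded expansion: $H\dminor_r\bidir G$ yields $\undir H\minor_{cr}G$ with $|\edge{\undir H}|\ge\tfrac12|\edge H|$ on the same vertex set, so $\nabla^d_r(\bidir G)\le 2\,\nabla_{cr}(G)$, and symmetrically $\nabla_r(G)\le\tfrac12\,\nabla^d_{cr}(\bidir G)$ using $H\minor_r G\Rightarrow\bidir H\dminor_{cr}\bidir G$ and $|\edge{\bidir H}|=2|\edge H|$. For the minor-free statement: $\CCC$ is $H$-minor-free iff the orders of its clique minors are bounded (one direction trivial, the other since $H\minor K_{|\vrtH|}$), $\bidir\CCC$ is crown-minor-free iff the orders of its crown directed minors are bounded, and by (ii) an order-$q$ crown minor forces an order-$q$ clique minor while by (i),(iii) an order-$N$ clique minor forces an order-$\Theta(\sqrt N)$ crown minor; so the two boundedness conditions coincide. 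Finally $\bidir\CCC$ directed somewhere dense $\Leftrightarrow$ $\CCC$ somewhere dense (just shown) $\Leftrightarrow$ (again by (i),(iii), tracking depths) every $S_q$ is a depth-$cr$ minor of some member of $\bidir\CCC$, i.e.\ $\bidir\CCC$ is \swcrown.

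\noindent\emph{Part (b).} Here only ``one direction'' of the arguments of (a) is needed, applied to $\undir\CCC$ through $H\dminor_r G\Rightarrow\undir H\minor_{cr}\undir G$. If $\CCC$ were not \nwcrown, some fixed $r$ would yield all crowns $S_q$ as depth-$r$ directed minors of members of $\CCC$; then the $1$-subdivisions of all $K_q$, and hence all $K_q$, would be bounded-depth minors of members of $\undir\CCC$, contradicting nowhere-denseness of $\undir\CCC$. The bounded-expansion and $H$-minor-free claims follow likewise from $\nabla^d_r(G)\le 2\,\nabla_{cr}(\undir G)$ and from the fact that an order-$q$ crown directed minor of $G$ yields an order-$q$ clique minor of $\undir G$.

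\noindent\emph{Part (c).} I would exhibit an explicit class. Let $G_n$ be the $1$-subdivision of $K_n$ with every edge oriented towards the subdivision vertex, i.e.\ $\vrt{G_n}=\set{v_1,\dots,v_n}\cup\set{w_{i,j}\st 1\le i<j\le n}$ and $\edge{G_n}=\set{(v_i,w_{i,j}),(v_j,w_{i,j})\st 1\le i<j\le n}$; so every $w_{i,j}$ is a sink, every $v_i$ a source, and $\undir{G_n}$ is the $1$-subdivision of $K_n$, which makes $\set{\undir{G_n}\st n\in\N}$ somewhere dense (each $K_n$ is a depth-$1$ minor of $\undir{G_n}$). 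I claim $S_4\not\dminor G_n$ for every $n$, so $\set{G_n\st n\in\N}$ is crown-minor-free. Suppose $S_q\dminor G_n$, with branch sets $U_{a,b}$ for the pair vertices $u_{a,b}$ and $P_a$ for the principal vertices $p_a$ of $S_q$. Every edge of $G_n$ starts at a $v$-vertex, so the two out-nodes of $U_{a,b}$ are $v$-vertices; since the only directed path in $G_n$ ending at a $v$-vertex is trivial, a source of $U_{a,b}$ reaching both out-nodes must coincide with both, so a single $v$-vertex $v_{k_{a,b}}\in U_{a,b}$ is the tail of $\delta(u_{a,b}p_a)$ and of $\delta(u_{a,b}p_b)$, and the indices $k_{a,b}$ are pairwise distinct as the $U_{a,b}$ are disjoint. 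On the other side $P_a$ receives the $q-1$ edges $\delta(u_{a,b}p_a)$, $b\ne a$, each ending at a $w$-vertex; since $w$-vertices are sinks, the sink condition on $P_a$ forces all these $q-1$ endpoints to be one and the same vertex $w^{*}_a=w_{\alpha,\beta}$. But $(v_{k_{a,b}},w^{*}_a)$ is an edge of $G_n$ only if $k_{a,b}\in\set{\alpha,\beta}$, so the $q-1$ distinct indices $k_{a,b}$ ($b\ne a$) all lie in a $2$-element set, forcing $q\le 3$. (Alternatively one may invoke the dense crown-minor-free class of Theorem~\ref{thm:density}, whose underlying undirected class is somewhere dense because it has high edge density.)
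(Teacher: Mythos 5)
Your proposal is correct and follows essentially the same route as the paper: parts (a) and (b) via a depth-tracking refinement of Lemma~\ref{lem:minor_dminor} together with the clique/tournament/crown translations (which the paper compresses into ``immediate''), and part (c) via exactly the paper's witness class --- your $G_n$ is the reversed crown $S'_n$, for which the paper argues non-containment of $S_4$ via the indegree bound on directed bipartite graphs while you verify the model conditions directly. No gaps; your write-up simply supplies the details the paper leaves implicit.
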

\begin{proof}
  (a,b) All claims are immediate by Lemma~\ref{lem:minor_dminor} and
  the fact that the underlying undirected graph of a crown can be
  contracted to a complete minor.

  (c) Consider the class $\CCC := \set{S'_q | q \geq 1}$ where $S'_q$
  is obtained by reversing the direction of all the edges of
  $S_q$. Since $S'_q$ is a directed bipartite graph, it has no
  directed minors except for its subgraphs; but since the indegree of
  each vertex is at most $2$, it does not contain any crown of order
  greater than $3$ as a subgraph and is hence crown-minor-free. But
  $\undir \CCC$ obviously contains all cliques as minors and is hence
  somewhere dense. \qed
\end{proof}

The previous theorem already indicated that the concepts of directed
graphs introduced here properly generalise their undirected
counterparts. The following theorem gives further evidence of this,
where we show that that crown-minor-free graphs can in fact be quite
dense -- something that classes of undirected graphs which are nowhere
dense cannot be.

\begin{theorem}\label{thm:density}
  For every $\epsilon$, there exists a $q$ and
  an $S_q$-minor-free digraph that has edge density at least
  $\Omega(n^{\frac12 - \epsilon})$ (whereas undirected nowhere dense
  classes have edge density at most
  $n^{o(1)}$~\cite{NesetrilOss08d}).
\end{theorem}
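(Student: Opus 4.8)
The plan is to exhibit, for a suitable $q=q(\epsilon)$, a digraph $G$ on $n$ vertices of edge density $\Omega(n^{1/2-\epsilon})$ with $S_q\not\dminor G$, by taking $G$ to be an orientation of a bipartite graph with \emph{all} edges pointing from one side $A$ to the other side $B$, i.e.\ a directed bipartite graph in the sense of Section~\ref{sec:prelim}. The first step is to reduce ``$S_q\dminor G$'' to an undirected subgraph condition on $H:=\undir{G}$. Since in such a $G$ every vertex of $A$ is a source and every vertex of $B$ is a sink, and $G$ has no directed path of length $2$, an analysis of the source/sink and reachability requirements of Definition~\ref{def:dminor} shows that in any directed model of $S_q$ the branch-set of a subdivision vertex $u_{ij}$ (in-degree $0$, out-degree $2$) may be taken to be a single vertex of $A$, and the branch-set of a principal vertex $v_i$ (a sink of in-degree $q-1$) a single vertex of $B$; hence $S_q\dminor G$ is equivalent to $H$ containing $\undir{S_q}$ — the graph obtained from $K_q$ by subdividing every edge once — as a subgraph with the $q$ branch-vertices on the $B$-side. (This uses the same observation as in the proof of Theorem~\ref{thm:comparison}(c), that a directed bipartite graph has no directed minors other than its subgraphs.) In particular it suffices to construct a bipartite $H$ of the desired density that contains \emph{no} copy of $\undir{S_q}$ at all.

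For the construction I would take $H$ to be a random bipartite graph $G(N,N,p)$ on parts $A,B$ with $N=\floor{n/2}$ and $p:=n^{-1/2-\epsilon}$, and orient all edges from $A$ to $B$ to obtain $G$. The crucial numerical point is that $\undir{S_q}$ has $v:=\binom q2+q=\tfrac{q(q+1)}2$ vertices and $e:=2\binom q2=q(q-1)$ edges, and is strictly balanced with maximum subgraph density $e/v=\tfrac{2(q-1)}{q+1}$, whose reciprocal $\tfrac{q+1}{2(q-1)}=\tfrac12+\tfrac1{q-1}$ tends to $\tfrac12$ as $q\to\infty$. Choosing $q$ large enough that $\tfrac1{q-1}<\epsilon$ puts $p$ below the appearance threshold $n^{-1/2-1/(q-1)}$ of $\undir{S_q}$ in the random bipartite graph, so a first-moment bound $\mathbb{E}[\#\text{copies of }\undir{S_q}\text{ in }H]=O(N^{v}p^{e})=o(1)$ shows that with high probability $H$ has no copy of $\undir{S_q}$; meanwhile a Chernoff bound gives $|E(H)|\ge\tfrac12N^2p=\Omega(n^{3/2-\epsilon})$ with high probability, i.e.\ edge density $\Omega(n^{1/2-\epsilon})$. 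Any $G$ in the intersection of these two events is $S_q$-minor-free by the first paragraph and has the required density, proving the theorem. (If an explicit example is desired, a $\undir{S_q}$-free bipartite graph of essentially the same density can be obtained from an explicit $C_4$-free-type construction after deleting its few ``arc'' configurations, or via a random algebraic construction.)

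I expect the only genuine content to be the reduction in the first paragraph — verifying that, for the all-$A\to B$ orientation, a directed model of the bipartite digraph $S_q$ necessarily degenerates to a plain subgraph embedding of $\undir{S_q}$, which is where one invokes the source/sink and reachability clauses of Definition~\ref{def:dminor} together with the absence of directed $2$-paths in $G$ — and the observation that it is the \emph{one}-subdivision of $K_q$, with maximum density approaching $2$, that controls the exponent, so the parameter $q$ must grow as $\epsilon\to 0$; that growth is exactly the source of the ``$-\epsilon$''. The probabilistic estimates (first moment for the absence of $\undir{S_q}$, concentration of the edge count) are then routine.
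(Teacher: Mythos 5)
Your proof is correct and follows essentially the same route as the paper: a random bipartite digraph with all edges oriented from $A$ to $B$, the observation that such a digraph has no directed minors beyond its subgraphs, and a first-moment count of copies of $S_q$ that yields exactly the paper's exponent condition $q>1+\tfrac1\epsilon$. The only cosmetic difference is that the paper assigns each vertex of $A$ exactly $d$ random out-neighbours (making the edge count deterministic), whereas your binomial model needs an additional Chernoff bound to concentrate the number of edges.
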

\begin{proof}
  We use a probabilistic construction to obtain a directed
  bipartite graph $G = (A \dot\cup B, E)$ with $|A|=|B|=n$ as
  follows. For a given density $d := d(\epsilon)$, we choose $d$
  neighbours for every vertex $a \in A$ uniformly at random. This way,
  our graph $G$ has exactly $dn$ edges and since it is a directed
  bipartite graph it has no directed minors except for its
  subgraphs. Now, let us count the expected number of $S_q$'s in this
  graph. For a set $\beta=\set{b_1,\dots,b_q} \subseteq B$ and an
  ordered sequence $\alpha=(a_{12},\dots,a_{(q-1)q}) \in
  A^{\binom{q}{2}}$, let $X^\alpha_\beta$ be the indicator variable
  that $\alpha \cup \beta$ forms an $S_q$ such that $a_{ij}$ is
  adjacent to $b_i$ and $b_j$. By assuming $d < n/2$, we have

  $$
  P[X^\alpha_\beta = 1] =
  \big(\frac{\binom{n}{d-2}}{\binom{n}{d}}\big)^{\binom{q}{2}} = \big(
  \frac{d(d-1)}{(n-d+2)(n-d+1)} \big) ^{\binom{q}{2}} \leq \big(
  \frac{2d}{n} \big)^{q(q-1)}
  $$
  
  \noindent By linearity of expectation, and choosing $d
  :=\floor{\frac{n^{\frac12 - \epsilon}}{2}}$ and $q > 1 +
  \frac{1}{\epsilon}$, we obtain:
  
  \begin{align*}
    E[\sum_{\alpha,\beta} X^\alpha_\beta] &= \sum_{\alpha,\beta} E[X^\alpha_\beta] \leq \binom{n}{q} \binom{n}{\binom{q}{2}} \binom{q}{2}! \cdot \big( \frac{2d}{n} \big) ^{q(q-1)}\\
    &\leq n^q \cdot n^{\frac{q(q-1)}{2}} \cdot \frac{(2d)^{q(q-1)}}{n^{q(q-1)}} 
    = \frac{(2d)^{q(q-1)}}{n^{\frac{q(q-3)}{2}}}\\
    &\leq n^{(\frac12 - \epsilon)q(q-1) - \frac{q(q-3)}{2}} = n^{q - \epsilon q (q-1)} < 1
  \end{align*}
  Hence, there exists a graph $G$ with edge density $d =
  \Omega(n^{\frac12 - \epsilon})$ that contains no
  $S_q$. \qed
\end{proof}

One might notice that both the examples exhibited in the proofs of
Theorems~\ref{thm:comparison}~(c) and~\ref{thm:density} contain graphs
with very long alternating path. One might think that alternating
paths are a rather special case. But in fact, it is exactly the
existence of such paths that make directed graphs complicated, as
shown in the next theorem.  The main idea is that if $\CCC$ has large
tree-width, then it contains large grid minors which in turn contain
large alternating paths, independent of how they are oriented.

\begin{theorem}\label{thm:alternating}
  If $\CCC$ is a class that is alternating path-minor-free
  then $\CCC$ (or, more precisely, $\undir{\CCC}$) has bounded treewidth.
\end{theorem}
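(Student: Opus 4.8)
The plan is to prove the contrapositive: if $\undir{\CCC}$ has unbounded treewidth, then $\CCC$ is not alternating path-minor-free, i.e.\ for every $k$ there is a $G\in\CCC$ with $AP_k\dminor G$. So fix $k$ and, using unbounded treewidth, choose $G\in\CCC$ with $\tw(\undir G)$ exceeding any prescribed bound. By the Excluded Grid Theorem of Robertson and Seymour, $\undir G$ then contains an arbitrarily large grid as a minor. Since a wall of order $n$ contains, and is contained in, grids of order $\Theta(n)$ as a minor, and since walls have maximum degree $3$ so that for them being a minor coincides with being a topological minor, $\undir G$ contains as a subgraph a subdivision $S$ of the $n\times n$ wall with $n$ as large as we wish. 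Orienting the edges of $S$ as they are oriented in $G$ gives a subdigraph $\vec S\subseteq G$ which is an orientation of a (subdivided) large grid-like graph. As a directed model inside a subdigraph is also one in $G$, and a directed topological minor is in particular a directed minor, it suffices to prove the combinatorial core statement:

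\begin{quote}
  for every $k$ there is an $n=n(k)$ such that every orientation of a subdivision of the $n\times n$ wall contains $AP_k$ as a directed topological minor.
\end{quote}

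For the core statement I would regard the ``rows'' and ``columns'' of the subdivided wall as oriented paths. First the easy cases: if any single subdivided edge, or any row, or any column contains at least $k+2$ vertices that are local sources or local sinks along that path, then — since along any oriented path the local sources and sinks alternate and consecutive ones are joined by a maximal directed subpath — these vertices and subpaths already form a subdivision of $AP_k$, and we are done. Hence we may assume that every row and every column, read as an oriented path, splits into at most $k+1$ maximal directed segments, so that each row carries a directed segment meeting $\Omega(n/k)$ of its branch vertices, either ``left-to-right'' or ``right-to-left'', and likewise each column ``top-to-bottom'' or ``bottom-to-top''. Pigeonholing over the two directions, we may assume $\Omega(n)$ rows carry a long left-to-right directed segment and $\Omega(n)$ columns carry a long top-to-bottom one; a further pigeonholing over the positions of these segments passes to a large subgrid in which every row has a left-to-right directed path through all its columns and every column a top-to-bottom directed path through all its rows.

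Inside such a subgrid I would build $AP_k$ as a \emph{staircase}: place the source vertices $w_2,w_4,\dots$ along a short down-left diagonal; for each $i$ route one out-segment of $w_{2i}$ rightwards along its row and the other downwards along its column, arranged so that the down-segment of $w_{2i}$ and the right-segment of $w_{2i+2}$ first meet at a common vertex, which we declare to be the sink $w_{2i+1}$ (reached by directed paths from both $w_{2i}$ and $w_{2i+2}$, as required); the leftover out-segment of $w_{2i}$ yields $w_{2i-1}$ analogously. Choosing the diagonal spacing large enough that all pieces used are pairwise internally disjoint — this is where the segment lengths $\Omega(n/k)$ are spent — produces a subdivision of $AP_k$ inside $\vec S$, hence $AP_k\dminor G$. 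As $k$ was arbitrary, $\CCC$ is not alternating path-minor-free, which proves the contrapositive.

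I expect the main obstacle to lie in the last two steps of the core statement: extracting, by iterated pigeonholing, a subgrid on which the long directed row-segments and column-segments are simultaneously ``aligned'', and then fixing the staircase so that all its pieces are pairwise disjoint while accounting for how the available lengths $\Omega(n/k)$ are consumed by these disjointness constraints (and absorbing the extra bookkeeping caused by the subdivisions of the wall). By contrast, the reduction through the Excluded Grid Theorem is routine.
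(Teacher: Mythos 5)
Your top-level strategy coincides with the paper's: pass to the contrapositive, invoke the Excluded Grid Theorem, and reduce everything to the combinatorial core that every orientation of a large grid contains a long alternating path. The detour through walls is unnecessary, though: the paper handles the minor-versus-subgraph issue with the observation that when an alternating path found in the oriented grid is lifted back through the minor model, the connecting paths inside branch sets can only introduce \emph{additional} alternations, so a grid minor of $\undir{G}$ already suffices. Where you genuinely diverge is in the proof of the core claim. The paper's argument is a short induction on orientations of $2l\times 3$ strips: in any orientation of a $2\times 3$ block, either the path $(1,1)(1,2)(1,3)(2,3)(2,2)(2,1)$ already alternates, or it is directed, in which case the first and last edges of $(1,1)(1,2)(2,2)(2,3)$ inherit opposite senses and force an alternation there; either way one obtains a subdivision of $AP_1$ ending on the bottom row, and stacking $l$ such blocks yields $AP_l$. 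This needs only a $2k\times 3$ grid and no Ramsey-type extraction, whereas your route needs a grid of size polynomial in $k$ and several extraction steps.

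Those extraction steps are where the gaps sit, and they are the ones you flag yourself. First, the alignment step is not a further pigeonhole as stated: pigeonholing the rows over segment positions gives many rows whose long rightward segments cover a common column block $C_0$, and pigeonholing the columns gives many columns whose long downward segments cover a common row block $R_0$, but nothing forces the selected rows to lie inside $R_0$ or the selected columns inside $C_0$, so the two selections need not assemble into a single subgrid on which both properties hold simultaneously. This is repairable (restrict to a block first and recount segments inside it, losing a factor of $k$ per round, or build the staircase greedily one row and one column at a time), but a genuine additional argument is required. Second, the wall causes a real problem for the staircase: in a wall the ``columns'' are zigzag paths that reuse horizontal edges of the rows, so the row segments and column segments you route are not automatically internally disjoint, and the fact that a row segment and a column segment meet only at the single designated source or sink is a property of the grid, not of the wall. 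Since the only purpose of the walls was to obtain a subgraph rather than a minor, and that concern is already disposed of by the lifting observation above, the cleanest fix is to drop the walls entirely and either run your staircase on the oriented grid or adopt the paper's strip induction.
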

\begin{proof}
  For $l_1,l_2\geq 1$, the $l_1\times l_2$-grid is the undirected
  graph with vertex set $\{ (i,j) \st 1\leq i \leq l_1, 1 \leq j \leq
  l_2\}$ and edge set $\{ \big( (i,j), (i', j') \st |i-i'|
  +|j-j'|=1\}$.

  We claim that if $H$ is an orientation of a $2l \times 3$-grid then
  $H$ contains a subdivision of an $l$-alternating path beginning at
  the vertex $(1, 1)$ and terminating either at $(2l,1)$ or $(2l,3)$.

  The claim is proved by induction on $l$. For $l=1$, let $H$ be an
  orientation of a $2\times 3$-grid. Consider the path
  $P_1 := (1,1)(1,2)(1,3)(2,3)(2,2)(2,1)$; if it contains no alternations,
  then the path $P_3 := (1,1)(1,2)(2,2)(2,3)$ must have at least one
  alternation.

  Now let $l > 1$ and consider the paths $P_1$ and $P_3$ described
  above. By the induction hypothesis, there exists a path $P'_1$
  starting at $(3,1)$ and ending in $(2l, z_1)$ containing at least
  $l-1$ alternations; by symmetry, there exists also a path $P'_3$
  starting at $(3,3)$ and ending in $(2l,z_3)$ containing at least
  $l-1$ alternations (where $z_1,z_3 \in \set{1,3}$). Now either the
  path $P_1P'_1$ or the path $P_3P'_3$ fulfill our claim.

  Now suppose $G$ is a graph of tree-width $\tw(G) \geq f(2k)$. Then,
  by the excluded grid theorem~\cite{GMV}, $G$ contains a $2k\times
  2k$-grid as a minor and therefore a $k$-alternating
  path.\footnote{Note that by routing the path through a model, we can
    only get more alternations.} \qed
\end{proof}

\begin{corollary}
  Let $\CCC_k$ be the class of $AP_k$-free digraphs. Then sub-graph
  isomorphism (and every other problem that is tractable on bounded
  treewidth graphs) is fixed-parameter tractable on $\CCC_k$.
\end{corollary}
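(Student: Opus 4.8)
The plan is to reduce subgraph isomorphism (and, more generally, any problem solvable in fpt time on graphs of bounded treewidth) on $\CCC_k$ to the corresponding problem on a bounded-treewidth graph, and then invoke the standard algorithmic toolkit for bounded treewidth (Courcelle's theorem, or the classical dynamic-programming algorithm for subgraph isomorphism). First I would observe that by Theorem~\ref{thm:alternating}, the class $\CCC_k$ of $AP_k$-free digraphs has the property that $\undir{\CCC_k}$ has treewidth bounded by some constant $w=w(k)$; concretely, by the contrapositive of that theorem, every $G\in\CCC_k$ satisfies $\tw(\undirG) < f(2k)$ for the function $f$ coming from the excluded grid theorem. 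So the input digraphs, once we forget orientations, live in a class of bounded treewidth.

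Next, for the subgraph isomorphism problem with pattern $H$ (the parameter being $|H|$), I would note that a subgraph-isomorphism witness in a digraph $G$ projects to a subgraph-isomorphism witness of $\undir H$ in $\undir G$, but of course not conversely, so we cannot simply throw away directions. Instead I would run the standard dynamic programming for subgraph isomorphism along a tree decomposition of $\undirG$ of width $<f(2k)$: the states of the DP record, for each bag, which pattern vertices are realised by which bag vertices and the partial adjacency pattern among them; since $G$ is directed, one simply keeps track of directed rather than undirected adjacencies in the partial solution. Because the treewidth is bounded by a function of $k$ only and $|H|$ is the parameter, the number of states per bag is bounded by a function of $k$ and $|H|$, so the DP runs in time $g(k,|H|)\cdot n^{O(1)}$, which is fpt. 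The cleaner route, which I would probably state explicitly, is to encode ``$H$ is a subgraph of $G$'' as a monadic second-order sentence $\varphi_H$ over the signature of digraphs (existentially quantifying $|V(H)|$ vertices and asserting the required edges and non-edges, or just the required edges); since $\tw(\undirG)$ is bounded by $f(2k)$, Courcelle's theorem applied to the (MSO-interpretable) structure $G$ together with its tree decomposition decides $\varphi_H$ in linear time for fixed $H$ and fixed $k$. The same argument works verbatim for any problem expressible in MSO or, more generally, any problem that is fpt parameterised by treewidth, since we have a fixed bound on the treewidth of $\undir G$.

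The only mild subtlety — and the step I would flag as needing a sentence of care rather than being a real obstacle — is to confirm that algorithmic meta-theorems for bounded treewidth transfer to digraphs: this is routine, since a digraph $G$ is coded as the relational structure $(\vrtG, E)$ with a binary (non-symmetric) relation $E$, and $\tw$ of this structure is by definition $\tw(\undirG)$, which we have bounded; MSO over this structure can express directed adjacency, so all the usual machinery (Courcelle, Bodlaender's linear-time treewidth algorithm to actually produce the decomposition) applies directly. Thus the proof is essentially: bounded treewidth of $\undir{\CCC_k}$ by Theorem~\ref{thm:alternating}, plus the standard fact that bounded-treewidth problems (subgraph isomorphism included, parameterised by pattern size) are fpt, observing that the treewidth bound depends only on $k$. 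I do not expect any genuine difficulty here; the corollary is really just a packaging of Theorem~\ref{thm:alternating} with known results.

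\begin{proof}
  By Theorem~\ref{thm:alternating}, there is a constant $w=w(k)$ such
  that every $G\in\CCC_k$ satisfies $\tw(\undirG)\leq w$. Given such a
  $G$ and a pattern digraph $H$, first compute a tree decomposition of
  $\undirG$ of width $O(w)$ in linear time (Bodlaender's algorithm).
  The statement ``$H$ is isomorphic to a subgraph of $G$'' is
  expressible by an MSO sentence $\varphi_H$ over the relational
  structure $(\vrtG,E)$ (existentially quantify $|V(H)|$ vertices, one
  for each vertex of $H$, assert they are pairwise distinct, and
  assert $E(x_u,x_v)$ for every edge $uv\in E(H)$); note $\varphi_H$
  respects edge directions. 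Since $\tw$ of this structure equals
  $\tw(\undirG)\leq w$, Courcelle's theorem decides $\varphi_H$ in time
  $g(|H|,w)\cdot|V(G)|$ for a computable $g$. As $w$ depends only on
  $k$, this is $g'(|H|,k)\cdot|V(G)|$, i.e.\ fixed-parameter tractable
  with parameter $|H|$. The identical argument, replacing $\varphi_H$
  by the relevant MSO sentence or by the corresponding
  treewidth-based dynamic program, shows that every problem that is
  fixed-parameter tractable on graphs of bounded treewidth is
  fixed-parameter tractable on $\CCC_k$.\qed
\end{proof}
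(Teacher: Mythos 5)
Your proof is correct and follows exactly the route the paper intends: the corollary is stated as an immediate consequence of Theorem~\ref{thm:alternating} (bounded treewidth of $\undir{\CCC_k}$) combined with the standard bounded-treewidth machinery, and your handling of the directed edge relation as a binary non-symmetric relation over the same tree decomposition is the right (and routine) way to make that transfer precise.
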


Note that the sub-graph isomorphism problem is of particular interest,
as it is as hard on crown-minor-free classes as it is in the general
undirected case: we can transform an undirected instance to a
crown-minor-free instance simply by replacing every edge with an
alternating path of length $2$. The same is true for classes with
directed bounded expansion and even bounded directed path- or
tree-width.

\section{An Alternative Characterisation of \NwCrown Classes}
\label{subsec:equiv}

In this section, we introduce the notion of directed uniformly
quasi-wide classes as a generalization of undirected uniformly
quasi-wide classes introduced by Dawar~\cite{Dawar10}. Our main
purpose is to prove Theorem~\ref{thm:main_equivalence} stating that
this (rather algorithmic and seemingly unrelated) concept is exactly
equivalent to the concept of \nwcrown classes of digraphs. This is
similar to the theorem of {Ne\v{s}et{\v r}il} and Ossana de
Mendez~\cite{NesetrilOss10} showing that undirected nowhere dense
classes are equivalent to uniformly quasi-wide classes. However, their
proof does not generalise to the directed case, and hence we
introduce significantly different ideas and a much more involved
analysis to obtain this result.

\begin{definition}
  Let $G$ be a digraph and $d\in\N \cup \set{0}$. A set $U\subseteq \vrtG$ is
  \emph{$d$-scattered} if there is no $v\in\vrtG$ and $u_1\not=u_2\in
  U$ with $u, u'\in N_d^+(v)$.
\end{definition}

\noindent Note that as $N_0^+(v) = \set{v}$, any subset of $\vrtG$ is
$0$-scattered.

\begin{definition}
  A class $\CCC$ of directed graphs is \emph{uniformly quasi-wide} if
  there are functions $s\st\N\rightarrow\N$ and $N\st
  \N\times\N\rightarrow \N$ such that for every $G\in\CCC$ and all $d,
  m\in\N$ and $W\subseteq \vrtG$ with $|W|>N(d,m)$ there is a set
  $S\subseteq\vrtG$ with $|S|\leq s(d)$ and $U\subseteq W$ with $|U| =
  m$ such that $U$ is $d$-scattered in $G - S$. $s, N$ are called the
  margin of $\CCC$.
    
  If $s$ and $N$ are computable then we call $\CCC$ effectively
  uniformly quasi-wide.
\end{definition}

Note that the class of reversed crowns described in the proof of
Theorem~\ref{thm:comparison}~(c) is an example of a class that
is directed uniformly quasi-wide but whose underlying undirected graph
is not uniformly quasi-wide. The fact that the directed case properly
generalises the undirected case also follows from
Theorem~\ref{thm:main_equivalence} and the equivalence of undirected
nowhere dense classes with undirected uniformly quasi-wide classes.

%
%

\begin{theorem}\label{thm:main_equivalence}
  A class $\CCC$ of digraphs is \nwcrown if and only if it
  is directed uniformly quasi-wide.
\end{theorem}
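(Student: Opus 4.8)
The plan is to prove both directions of the equivalence, and in both cases the key structural insight must be that a long directed path in a branch-set, if one tries to ``fold'' it, creates an alternating path and eventually a crown. For the easy direction, suppose $\CCC$ is directed uniformly quasi-wide with margin $s,N$; I want to show it is \nwcrown. Suppose not: then there is a radius $r$ so that for every $q$ some $G\in\CCC$ has $S_q\dminor_r G$. Fix $q$ large (to be chosen against $s(r)$ and $N(r,\cdot)$), take such a $G$ together with a depth-$r$ model $\delta$ of $S_q$, and let $W$ be the set of source vertices $s_{v_i}$ of the branch-sets $\delta(v_i)$ of the principal vertices $v_1,\dots,v_q$. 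Since each $u_{i,j}$ has out-edges into both $\delta(v_i)$ and $\delta(v_j)$, and the branch-set $\delta(u_{i,j})$ together with the two model-edges and the paths inside $\delta(v_i),\delta(v_j)$ gives short directed paths, I will argue that for \emph{every} pair $v_i,v_j$ there is a single vertex (roughly, a vertex in $\delta(u_{i,j})$, or its source $s_{u_{i,j}}$) whose $O(r)$-outneighbourhood contains representatives reaching both $s_{v_i}$ and $s_{v_j}$. Then no set $S$ of size $s(d)$ with $d:=O(r)$ can make any $m$-subset of (a bounded-size cleaned-up version of) $W$ scattered, because the crown gives a ``witness'' vertex for every one of the $\binom{q}{2}$ pairs and $q$ can be taken far larger than any $N(d,m)+s(d)$; this contradicts uniform quasi-wideness. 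A little care is needed because scatteredness is defined via out-neighbourhoods of a common vertex $v$, so I must make sure the witnessing vertex reaches the chosen representatives in $W$ itself, not merely sits on a path through them — this is exactly where the in-/out-branching normal form for bipartite-ish parts of the model (the lemma that branch-sets of sinks/sources can be taken to be branchings) will be used.

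The hard direction is: if $\CCC$ is \nwcrown, then it is directed uniformly quasi-wide, and this is where essentially all the work lies. Following the spirit of \cite{NesetrilOss10}, I would argue by contradiction and induction on a ``rank'' parameter: assume $\CCC$ is not uniformly quasi-wide, so there is a $d$ and an $m$ such that for arbitrarily large $W\subseteq V(G)$ (with $G\in\CCC$), every small deletion set $S$ leaves no $d$-scattered $m$-subset of $W$. The goal is to bootstrap this failure into a shallow crown $S_q$ of bounded depth for every $q$, contradicting \nwcrown{}-ness. The strategy is to process the set $W$ greedily: repeatedly pick a vertex of $W$, look at which other vertices of $W$ share a common $d$-out-``ancestor'', and use a Ramsey/sunflower-type argument on these shared-ancestor relations to extract either a large scattered set (contradiction with the assumed failure being ``tight'', i.e.\ the parameters were chosen so a scattered set of that size would already suffice) or a large highly-connected substructure. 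The connecting vertices, being $d$-out-ancestors of pairs of $W$-vertices, are precisely the candidate vertices $u_{i,j}$ of a crown; iterating and cleaning (removing over-used connectors into the deletion set $S$, which must stay bounded — this is the delicate bookkeeping) yields for each pair of surviving principal vertices its own private connector, i.e.\ a shallow $S_q$.

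The main obstacle, and the place where the authors explicitly warn that the undirected proof ``breaks down at a somewhat unexpected place,'' is controlling the \emph{orientation and overlap} of these connecting paths. In the undirected setting a connector is just a short path joining two of the $W$-vertices and distinctness of connectors is a clean disjointness condition; here a connector is a vertex with a directed out-path to \emph{each} of two $W$-vertices, these out-paths can share prefixes, and a single vertex can serve many pairs, so one does not immediately get the $\binom{q}{2}$ \emph{distinct} vertices $u_{i,j}$ that the definition of a crown demands — one might only get an alternating path or a badly degenerate orientation of a subdivided clique. The fix is presumably a two-level argument: first extract a large set of $W$-vertices pairwise linked through connectors that agree only on short shared prefixes (pushing the ``shared'' part down toward a bounded set $S$ to be deleted), and then, within that, apply a further Ramsey step on the \emph{branching pattern} of the out-paths to force genuinely distinct connectors with the correct crown orientation; the depth of the resulting crown is $O(d)$, independent of $q$. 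I expect the bulk of the proof, and all of its fragility, to be in quantifying this prefix-sharing control so that the deletion set $S$ stays of size depending only on $d$ while the extracted crown order $q$ grows unboundedly with $|W|$.
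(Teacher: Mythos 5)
Your easy direction is essentially the paper's argument: take $W$ to be the roots of the in-branchings modelling the principal vertices (note the principal vertices of $S_q$ are sinks, so the representative you want in $W$ is the sink/root $t_{v_i}$ of each branch-set, reachable from the connectors, not a source $s_{v_i}$), apply quasi-wideness at distance $2r+1$, and show $S$ cannot hit all surviving connectors. The one quantitative point you leave vague is which parameter must beat $s$: it is $m$, not $q$. The paper takes $m := s(2r+1)+2$, observes that at least $t\geq 2$ principal branch-sets meeting $U$ survive $S$, that then at most $t-2$ connectors can be hit, and that $\binom{t}{2}>t-2$; taking $q$ large only guarantees $|W|>N$.

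The hard direction has a genuine gap. Your outer architecture is right and matches the paper: iterate over the radius, at each step either extract a shallow crown or boost an $i$-scattered set to an $(i+1)$-scattered set after deleting at most $\binom{q}{2}$ vertices (the paper does this directly, constructing the margin by induction on $i$, rather than by contradiction). But the step you defer --- ``presumably a two-level argument \dots pushing the shared part down toward a bounded set $S$'' --- is exactly where all of the paper's machinery lives, and the one concrete fix you propose would fail: across $\binom{q}{2}$ pairs there can be unboundedly many distinct shared prefixes and interior collisions among the connecting paths, so they cannot be absorbed into a deletion set whose size depends only on $d$. The paper never deletes them. Instead it fixes one shortest path per (connector, target) pair, records its interior as an edge label $\eta(e)$, and equips each connector with a base $\beta$ and a level $\lambda$ (distance to its base); the target becomes finding a \emph{controlled} crown, i.e.\ one whose vertices avoid every chosen edge's recorded interior (Lemma~\ref{lem:rcdbg}). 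This is achieved by stratifying connectors by level, splitting pair-connections into ``red'' and ``yellow'' according to whether the connector's base is one of its two targets, and applying a clique-cleaning Ramsey lemma (Lemma~\ref{lem:clique}) twice --- which only works because constant level forces at most one problematic interior vertex per edge. Without this controlled-crown condition your ``genuinely distinct connectors'' could still lie on one another's paths, the branch-sets of the putative $S_q$-model would intersect, and the minor would be invalid. So the proposal is a correct roadmap for the outer induction but is missing the central construction that makes the extracted crown an actual depth-$r$ minor.
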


The if direction is simple.

\begin{lemma}
  If $\CCC$ is uniformly quasi-wide then it is \nwcrown.
\end{lemma}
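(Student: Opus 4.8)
The plan is to prove the contrapositive: if $\CCC$ is \swcrown, then it is not uniformly quasi-wide. So suppose there is a radius $r$ such that every crown $S_q$ occurs as a depth-$r$ minor of some $G_q \in \CCC$. I want to extract from a depth-$r$ model of a large crown a large witness set $W$ that cannot be made $d$-scattered after deleting any bounded-size set $S$, for a suitable $d$ depending only on $r$.

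First I would fix $q$ large (to be determined) and take $G = G_q \in \CCC$ together with a directed model $\delta$ of $S_q$ in $G$ with all branch-set paths of length $\le r$. Let $W := \{ s_{v_1}, \dots, s_{v_q} \}$ be a choice of source vertices of the branch-sets $\delta(v_1),\dots,\delta(v_q)$ of the principal vertices of the crown; so $|W| = q$. The key observation is that for each pair $i<j$, the branch-set $\delta(u_{i,j})$ contains a source vertex $s_{u_{i,j}}$ which, by the model conditions, reaches both the endpoint of $\delta(u_{i,j} v_i)$ inside $\delta(v_i)$ and the endpoint of $\delta(u_{i,j}v_j)$ inside $\delta(v_j)$; following further along the branch-sets $\delta(v_i)$ and $\delta(v_j)$ down to the sinks $t_{v_i}, t_{v_j}$ is the wrong direction, but going \emph{up} from $\innode(\delta(v_i))$ we cannot reach $s_{v_i}$ in general — so here I need to be a bit careful about how the crown is oriented. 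Since the crown's edges point \emph{from} the $u_{i,j}$ \emph{to} the principal vertices, the source $s_{v_i}$ of $\delta(v_i)$ reaches $\innode(\delta(v_i))$, and hence $s_{v_i}$ together with $s_{u_{i,j}}$ both lie in the in-neighbourhood structure; concretely, $s_{v_i} \in N_r^-$ of the entry point and $s_{u_{i,j}}$ reaches that entry point within $r$ steps, so both $s_{v_i}$ and $s_{v_j}$ lie in $N_{2r}^+(s_{u_{i,j}})$ after we note the branch-set $\delta(v_i)$ has length $\le r$ so the source reaches all of it. Thus with $d := 2r$, for every pair $i \ne j$ there is a vertex (namely $s_{u_{i,j}}$, or a suitable vertex on its path) witnessing that $\{s_{v_i}, s_{v_j}\}$ is not $d$-scattered.

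Now suppose, for contradiction, that $\CCC$ is uniformly quasi-wide with margin $s, N$. Set $d := 2r$ and $m := s(d) + 2$. Choose $q := N(d,m) + 1$, and take $G, \delta, W$ as above with $|W| = q > N(d,m)$. By uniform quasi-wideness there is $S \subseteq V(G)$ with $|S| \le s(d)$ and $U \subseteq W$ with $|U| = m = s(d)+2$ such that $U$ is $d$-scattered in $G - S$. Since $|U| > |S|$, at least two vertices of $U$, say $s_{v_i}$ and $s_{v_j}$, correspond to principal vertices such that the witnessing branch-set $\delta(u_{i,j})$ and the connecting paths avoid $S$ — wait, that needs $2$ vertices of $U$ whose \emph{pair-path} avoids $S$, not just the vertices themselves. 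The cleaner counting: each element of $S$ lies on the connecting structure for at most $\binom{q-1}{1}$... this is the step that needs real care, so let me phrase it as the main obstacle below. Granting that I can find $i \ne j$ with $s_{v_i}, s_{v_j} \in U$ and a common $d$-in-reaching vertex of $G - S$, we contradict $d$-scatteredness of $U$ in $G-S$, completing the proof.

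\textbf{Main obstacle.} The delicate point is the counting argument that survives deletion of $S$: after removing up to $s(d)$ vertices, I must still find two principal source vertices in the scattered set $U$ whose pairwise "connector" (the branch-set $\delta(u_{i,j})$ plus the two paths into $\delta(v_i), \delta(v_j)$) is entirely disjoint from $S$. A single vertex of $S$ could in principle lie on many connectors, so a naive union bound does not immediately work; this is presumably exactly the place where, as the authors warn, the undirected argument of \cite{NesetrilOss10} breaks down and a "much more delicate and fragile" argument is needed. For the \emph{easy} direction being asked here, though, I expect the saving grace is that we get to choose $q$ after $s, N$ are fixed, and we can first pass to a large sub-crown: by a Ramsey-type or averaging argument on which connectors meet $S$, shrink $W$ to a set $W'$ of principal sources of size $m$ all of whose pairwise connectors avoid $S$ before invoking scatteredness — equivalently, iterate the quasi-wideness bound with $q$ chosen large enough (e.g.\ $q \ge$ a tower/Ramsey bound in $N(d,m)$ and $s(d)$) that some $m$-subset of $W$ has all $\binom{m}{2}$ connectors $S$-free. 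Making this quantitative is the only real work; everything else is bookkeeping about the depth-$r$ model and the value $d = 2r$.
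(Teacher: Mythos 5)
Your overall plan --- argue the contrapositive, pull a witness set $W$ out of a depth-$r$ model of a large crown, and contradict scatteredness --- is exactly the paper's, but the proposal has a genuine gap at the step you yourself flag, plus a concrete error in the setup. First, the setup: since the crown's edges are $(u_{i,j},v_i)$, the principal vertices are \emph{sinks} of $S_q$, so $\outnode(\delta(v_i))=\emptyset$ and the source condition on $\delta(v_i)$ is vacuous; what $s_{u_{i,j}}$ actually reaches inside $\delta(v_i)$ is the sink $t_{v_i}$ (reachable from every in-node), not the source $s_{v_i}$. So $W$ must be the set of sinks (roots of the in-branchings) of the principal branch-sets, and the distance is $2r+1$, not $2r$ --- you lose one step crossing the edge $\delta(e)$ between the two branch-sets. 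Your text oscillates on exactly this point and settles on the wrong choice.

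Second, and more importantly, the ``main obstacle'' you leave open is not an obstacle, and your diagnosis of it is mistaken: a vertex of $S$ \emph{cannot} lie on many connectors, because the branch-sets of a model are pairwise disjoint by definition, so each deleted vertex lies in at most one branch-set and hence kills at most one $\delta(u_{i,j})$ or at most one principal branch-set. The paper's counting is then elementary: take $m:=s(2r+1)+2$ and $q:=N(2r+1,m)+1$, obtain $S$ and $U$ with $|U|=m$; since $|S|\le s(2r+1)$ and branch-sets are disjoint, at least $t\ge 2$ of the $m$ principal branch-sets rooted at vertices of $U$ avoid $S$, leaving at most $t-2$ vertices of $S$ to hit the $\binom{t}{2}$ connectors $\delta(u_{i,j})$ between them; as $\binom{t}{2}>t-2$ for $t\ge 2$, some connector survives in $G-S$ and yields a path of length $\le 2r+1$ from its root to two elements of $U$, a contradiction. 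No Ramsey or iteration argument is needed, and your proposed fix of ``shrinking $W$ to avoid $S$ before invoking scatteredness'' is circular, since $S$ only exists after quasi-wideness is applied. (Also, the authors' warning about a ``delicate and fragile'' argument concerns the converse implication; they explicitly call this direction simple.)
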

\begin{proof}
  Let $s, N$ be the margin of $\CCC$.  Towards a contradiction,
  suppose there is $r>0$ such that for all $q$ there is $G_q\in\CCC$
  with $S_q\dminor_r G_q$. Let $s := s(2r+1)$, $N = N(2r+1, s+2)$, and
  let $G\in \CCC$ be such that $S_{N+1} \dminor_r G$. W.l.o.g.~we
  assume $N\geq s+2$.

  Then there are out-branchings $\AAA := \set{A_1, \dots,
    A_{\binom{N+1}{2}}}$ and in-branchings $\BBB=\set{B_1,\dots,
    B_{N+1}}$ witnessing $S_{N+1} \dminor_r G$. Let $W$ be the set of
  principal vertices of the in-branchings. As $|W|>N$ there is a set
  $S\subseteq V(G)$ with $|S|\leq s$ and $U\subseteq W$ with $|U| =
  s+2$ such that $U$ is $(2r+1)$-scattered in $G - S$.

  Let $\BBB' := \set{B'_1,\dots,B'_t} \subseteq \BBB$ be such that
  each $B'_i \in \BBB'$ contains a vertex of $U$ but no vertex of $S$;
  obviously, we have $t \geq 2$. Also, note that at most $t-2$
  elements out of $\AAA$ can be hit by $S$. But since $\binom{t}{2} >
  t-2$ for $t \geq 2$, there must exist an $A_k \in \AAA$ and $B'_i,
  B'_j \in \BBB'$ none of which are hit by $S$. But then $G-S$
  contains a path of length at most $2r+1$ from the root of $A_k$ to
  the principal vertices of $B'_i$ and $B'_j$ contradicting the
  assumption that $U$ is $(2r+1)$-scattered.\qed
\end{proof}

In the remainder of this section we prove the converse. For this, we
first introduce some auxiliary concepts.

\begin{definition}\label{def:rcdbg}
  \begin{itemize}
  \item[(a)] An \emph{$r$-controlled directed bipartite graph} is a
    tuple $\rcdbg$ such that
    \begin{itemize} \squishlist
      \item $G := (A \dot\cup B, E)$ is a directed bipartite graph;
      \item $\beta \st A\rightarrow [B]^{\leq 1}$ is a function
        assigning a \emph{base} $\beta(a) \in B$ to some vertices $a \in
        A$;
      \item $\lambda \st A \rightarrow \set{0,\dots,r+1}$ assigns a
        \emph{level} to each vertex in $A$; and
      \item $\eta \st E \rightarrow [A]^{\leq r+1}$ is a function that
        specifies for each $e=ab \in E$ a subset $\eta(e) \in
        [A]^{=\lambda(a)}$ s.t.\ if $\eta(e) = \set{a_0,\dots,a_i}$
        then $\beta(a_0) = \dots = \beta(a_i) = b$ and $\lambda(a_0) <
        \lambda(a_1) < \dots < \lambda(a_i)$; furthermore, if
        $\beta(a) = b$, we have $\lambda(a_i) < \lambda(a)$.
    \end{itemize}
  \item[(b)] A \emph{controlled crown} of order $q$ in an
    $r$-controlled directed bipartite graph $\rcdbg$ is a crown $S_q
    \subseteq G$ such that for all $e \in E(S_q)$ we have $\eta(e)
    \cap V(S_q) = \emptyset$.
  \end{itemize}
\end{definition}

The main lemma we are going to prove about controlled directed bipartite graphs is the following:

\begin{lemma}\label{lem:rcdbg}
  There exists a function $F \st \N^3 \rightarrow \N $ such that if
  $(G := (A \dot\cup B, E), \beta, \lambda, \eta)$ is an
  $r$-controlled directed bipartite graph with $|B|\geq F(r,p, q)$
  then
  \begin{itemize} \squishlist
  \item $\exists S \in [A]^{\leq \binom{q}{2}}$ such that $G - S$
    contains a $1$-scattered set of size $p$; or
  \item $G$ contains a controlled crown of order $q$.
  \end{itemize}
\end{lemma}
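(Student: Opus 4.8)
The plan is to run a Ramsey-type argument on the set $B$, which we think of as the ``ground set'' of a large directed bipartite configuration. Assume $G$ does not contain a controlled crown of order $q$; I want to conclude that $B$ can be used to extract a large $1$-scattered set after deleting at most $\binom{q}{2}$ vertices of $A$. The first step is to colour pairs (or small tuples) of vertices of $B$ according to the ``pattern'' by which they interact with $A$. Concretely, for $b,b' \in B$ the colour should record: which vertices $a \in A$ send an edge to both $b$ and $b'$ (a common in-neighbour witnesses a potential crown edge), together with the relevant $\beta$-, $\lambda$- and $\eta$-data of those vertices, truncated to a bounded amount of information. Since $\lambda$ takes only $r+2$ values and $\eta(e)$ lies in $[A]^{\leq r+1}$, the number of ``local types'' of a common in-neighbour is bounded in terms of $r$; by choosing $F(r,p,q)$ large enough (an appropriate iterated Ramsey number) we obtain a large homogeneous subset $B_0 \subseteq B$ on which all pairs have the same colour.

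On $B_0$ there are then two cases. Either every pair $b,b' \in B_0$ has a common in-neighbour $a_{b,b'} \in A$ — in which case these $a$-vertices, together with the principal vertices $B_0$, form a subdivided-clique configuration, i.e.\ essentially a crown $S_{|B_0|}$; the homogeneity of the colouring is exactly what lets us argue that, after possibly discarding a bounded ($\leq \binom{q}{2}$, governed by the $\eta$-sets) number of problematic vertices, the $\eta$-condition ``$\eta(e)\cap V(S_q)=\emptyset$'' can be met, producing a \emph{controlled} crown of order $q$ and contradicting our assumption. Or, in the homogeneous colour, pairs of $B_0$ have \emph{no} common in-neighbour (outside a bounded exceptional set $S$ coming from the $\eta$-data, which we delete; this $S$ is where the bound $\binom{q}{2}$ on $|S|$ comes from). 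In that case the principal vertices of $B_0$ are already pairwise non-co-reachable from any single vertex of $A$ at distance $1$ in $G-S$ — because $G$ is directed bipartite, a length-$1$ out-neighbourhood of any vertex is just its set of out-neighbours, and distinct $b,b'\in B_0$ share no in-neighbour — so $B_0$ (trimmed to size $p$) is the desired $1$-scattered set.

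The main obstacle I expect is the bookkeeping in the crown-construction case: turning a monochromatic family of common in-neighbours into a genuine crown requires the chosen witnesses $a_{b,b'}$ to be \emph{distinct} for distinct pairs and to avoid collisions with the principal vertices, and it requires that the $\eta$-sets of the $\binom{q}{2}$ crown edges can simultaneously be pushed off $V(S_q)$. This is precisely the delicate point where, as the authors remark, the undirected argument of \cite{NesetrilOss10} breaks down: in the directed, ``controlled'' setting the $\eta$-bookkeeping interacts with the level function $\lambda$, and one must set up the colouring so that the levels and bases of potential witnesses are frozen across $B_0$, then use a further pigeonhole on $\lambda$-values to guarantee the witnesses are fresh. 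I would isolate this as a self-contained combinatorial sub-claim, proved by one more layer of Ramsey/pigeonhole inside $B_0$, and feed its output into the two-case split above; the function $F(r,p,q)$ is then the composition of these successive Ramsey bounds, which is computable, giving the effective version for free.
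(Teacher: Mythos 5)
Your high-level dichotomy (crown versus scattered-after-deleting-few) is right, and you correctly locate the delicate point in the $\eta$/$\lambda$ bookkeeping, but the mechanism you propose has two gaps. First, the homogenisation step is not available: to verify later that $\eta(e)\cap V(S_q)=\emptyset$ you must know \emph{which} vertices lie in $\eta(e)$, and these range over $[A]^{\leq r+1}$, an unbounded set; any truncation ``to a bounded amount of information'' destroys exactly the data you need, so a bounded-colour Ramsey argument on pairs of $B$ cannot record it. The paper does not homogenise here at all; instead it proves a bespoke recursive lemma (Lemma~\ref{lem:clique}): in a clique whose edges are labelled by single vertices of the ambient graph, one can find a large sub-clique whose vertex set avoids all labels of its own edges. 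That lemma is applied twice with two different, carefully chosen labellings $\gamma$ (a ``red'' case where the witness $a$ has $\beta(a)\notin\set{u,v}$, and a much more delicate ``yellow'' case where $\beta(a)\in\set{u,v}$ and one must chase a level-$c$ vertex $z\in\eta(e_2)$ to the pair that $z$ itself witnesses); this is precisely the part the authors flag as the fragile core of the proof, and your plan has no counterpart to it.

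Second, and more fundamentally, the split ``every pair of $B_0$ has a common in-neighbour / no pair does, modulo a set of size $\binom{q}{2}$'' cannot be extracted from a homogeneous colouring, and your attribution of the $\binom{q}{2}$ deletions to the $\eta$-data is not where they come from. Consider a single vertex $a\in A$ adjacent to all of $B$: every pair then has a common in-neighbour, there is no crown (the witnesses cannot be made distinct), and no large $1$-scattered set exists until $a$ is deleted. The paper handles this with a degree-reduction loop that is itself a crown-or-scattered dichotomy: repeatedly pull out a vertex of $A$ whose out-neighbourhood covers almost all of the current $B_i$; if $\binom{q}{2}$ such vertices accumulate, they \emph{are} the subdivision vertices of a crown on any $q$ remaining elements of $B$ (controlled via Lemma~\ref{lem:cc2}, since their bases were removed from the nested $B_i$); otherwise all out-degrees are bounded by some $n$, and a greedy, non-Ramsey extraction (Lemma~\ref{lem:1}) --- pick $v$ and either most of $B_i$ avoids the out-neighbourhoods of all in-neighbours of $v$, so $v$ joins the scattered set, or a $\frac{1}{r+3}$-fraction shares with $v$ a predecessor of one fixed level $j$, so $v$ joins $D^j$ --- yields either the $1$-scattered set or a large set on which every two vertices share a witness of a single level, which is what feeds Lemma~\ref{lem:0}. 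Your ``further pigeonhole on $\lambda$'' is in the right spirit for that last step, but without the degree reduction and the distinct-witness bookkeeping (the paper deletes $N^+(a)$ from the pool each time a witness $a$ is consumed) the argument does not get off the ground.
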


To see what this is good for, we first go on proving
Theorem~\ref{thm:main_equivalence} assuming Lemma~\ref{lem:rcdbg} and
present the proof of Lemma~\ref{lem:rcdbg} afterwards. Both of these
proofs are based on several other intermediate lemmata.

\begin{lemma}\label{lem:main-tec}
  Let $G$ be a directed graph, $r \geq 0$, and $p, q>0$. Let $I$ be an
  $r$-scattered set in $G$ of size at least $F(r,p,q)$ where $F$ is the
  function defined in Lemma~\ref{lem:rcdbg}.  Then $S_q\dminor_{r} G$ or
  there is a set $S \in [V(G)]^{\leq \binom{q}{2}}$ s.t.\ $G - S$
  contains an $(r+1)$-scattered set of size $p$.
\end{lemma}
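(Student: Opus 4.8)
The plan is to deduce Lemma~\ref{lem:main-tec} from Lemma~\ref{lem:rcdbg} by packaging the $r$-scattered set $I$ together with a BFS-type layering of $G$ into an $r$-controlled directed bipartite graph, in such a way that a controlled crown in that auxiliary structure can be "unfolded" into a genuine depth-$r$ directed model of $S_q$ in $G$, while the other alternative of Lemma~\ref{lem:rcdbg} (a large $1$-scattered set after deleting few vertices) translates back into the desired $(r+1)$-scattered set in $G-S$.

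First I would build the auxiliary graph. Take $B := I$ as one side of the bipartite graph. For the $A$-side, the idea is that each vertex $a$ of $A$ should represent a vertex $v$ of $G$ that reaches certain elements of $I$ by short directed paths, recording in $\lambda(a)$ the relevant distance (a value in $\set{0,\dots,r+1}$), in $\beta(a)$ a designated "base" element of $I$, and in $\eta$ the "history" of ancestor vertices along the path that were already used — this is precisely the bookkeeping that the definition of an $r$-controlled directed bipartite graph is set up to carry. Concretely, for $v\in V(G)$ and $b\in I$ with a directed $v\to b$ path of length $\le r$, one creates an $A$-vertex, makes it adjacent to every $b'\in I$ that this same path-system reaches, assigns its level according to how far $v$ is from its base, and lets $\eta$ record the at most $r+1$ vertices on the branch above it. I would set things up so that $|B| = |I| \geq F(r,p,q)$, which is exactly the hypothesis of Lemma~\ref{lem:rcdbg}.

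Now apply Lemma~\ref{lem:rcdbg}. In the first case we get $S\in[A]^{\le\binom{q}{2}}$ such that $G-S$ — after identifying the $A$-vertices with the $G$-vertices they represent — contains a $1$-scattered set of size $p$; because the $A$-vertices at nonzero level stand for vertices at distance $\le r$ from $I$ along the recorded branches, being $1$-scattered in this contracted picture lifts to being $(r+1)$-scattered in $G-S$ (a common vertex $v$ with two short out-paths to two elements of the set would induce a common neighbour in the auxiliary graph). This yields the second conclusion of the lemma, with $S$ of the required size $\le\binom{q}{2}$. In the second case we obtain a controlled crown $S_q\subseteq G$ with the property that for every edge $e$ of $S_q$, $\eta(e)\cap V(S_q)=\emptyset$; the condition $\eta(e)\cap V(S_q)=\emptyset$ is exactly what guarantees that the short directed paths encoded by the $\eta$'s and the base assignments, used to realise the $q$ principal vertices and the $\binom{q}{2}$ connector vertices of the crown, are vertex-disjoint from the crown's own branch-sets and from each other — so they assemble into in-branchings and out-branchings of radius $\le r$ witnessing $S_q\dminor_r G$.

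The main obstacle I expect is the disjointness argument in the second case: one must show that the $\eta$-histories attached to the crown's edges, the crown's own vertices, and the paths used to connect the branch-sets really can be chosen pairwise disjoint, and that this is what the "controlled" conditions ($\lambda$ strictly increasing along $\eta(e)$, $\beta$-values all equal to the relevant endpoint, $\eta(e)\cap V(S_q)=\emptyset$) buy us. Getting the encoding of $G$ into an $r$-controlled directed bipartite graph precise enough that both directions go through cleanly — in particular that the level and base bookkeeping is consistent along every branch — is the delicate part; once the auxiliary structure is correctly defined, both conclusions follow from Lemma~\ref{lem:rcdbg} by unwinding definitions.
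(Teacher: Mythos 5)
Your proposal follows essentially the same route as the paper: set $B:=I$, turn each vertex of $G$ that reaches several elements of $I$ by short directed paths into an $A$-vertex whose level, base and $\eta$-history record a fixed shortest path to its (unique, by $r$-scatteredness) base in $I$, apply Lemma~\ref{lem:rcdbg}, and translate a $1$-scattered set back to an $(r+1)$-scattered set and a controlled crown back to a depth-$r$ model of $S_q$ whose $B$-side branch-sets are in-branchings built from the pairwise disjoint sets $\eta(e)$. The only imprecision is the path-length bookkeeping (adjacency in the auxiliary graph must capture reachability within $r+1$ steps, while bases are assigned via paths of length at most $r$), but this is exactly the construction the paper carries out.
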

\begin{proof}
  For $u\in I$ let 
  \[
    P(u) := \{ v \st \text{there is a path of length $\leq r$ from
      $v$ to $u$ } \}.
  \]
  By construction, $P(v) \cap P(u) = \emptyset$ whenever $u\not=v\in
  I$.

  We construct the following controlled directed bipartite graph $(H
  := (A\dot\cup B, E), \beta, \lambda, \eta)$. Set $B := I$. For each
  vertex $v\in V(G)$, if there are at least $2$ distinct $u, u'\in I$
  reachable from $v$ in $G$ in at most $r+1$ steps, then we add
  $\tilde v$ to $A$ and initialise $\lambda(\tilde v) := r+1$. Then we
  do the following: for each pair $\tilde v \in A$ and $u \in B$ such
  that $v$ can reach $u$ in $G$ in at most $r+1$ steps, we fix a
  shortest path $v=v_i \dots v_0=u$ of length $i \leq r+1$ in $G$. We
  add an edge $\tilde vu$ to $E$, label $e$ by $\eta(e) :=
  \set{v_{i-1},\dots,v_0}$, and update $\lambda(\tilde v) := \min
  \set{i, \lambda(\tilde v)}$. Note that in particular, we include
  edges $e = \tilde uu$ for all $u \in I$ and label it by $\eta(e) =
  \emptyset$.  If $v\in P(u)$ for some $u\in I$, then we define
  $\beta(\tilde v) := \set{u}$ and otherwise $\beta(\tilde v) :=
  \emptyset$. Note that by construction, (i) $v$ can be in at most one set
  $P(u)$; (ii) $\beta(\tilde v) = \emptyset \Leftrightarrow
  \lambda(\tilde v) = r+1$; and (iii) our construction of $\beta$ and
  $\eta$ is fully conform to Definition~\ref{def:rcdbg}~(a).

  By Lemma~\ref{lem:rcdbg}, either there is a set $\tilde S\subseteq A$ of
  size at most $({q\atop 2})$ and a set $I'\subseteq B$ of size $p$
  which is $1$-scattered in $H - S$, or there is a controlled crown
  $S_q := (A' \cup B', E')\subseteq H$ of order $q$ contained in $H$.

  In the first case, let $S \subseteq V(G)$ be the vertices in $G$
  corresponding to $\tilde S$. We claim that $I'\subseteq I$ is
  $(r+1)$-scattered in $G - S$. For, suppose there were $v, u, u'$
  with $v\in V(G)\setminus S$ and $u,u'\in I'$ such that both $u, u'$
  are reachable from $v$ in at most $r+1$ steps. Then $\tilde v\in A$
  and there are edges from $\tilde v$ to $u$ and $u'$ in $H$,
  contradicting the fact that $I'$ was $1$-scattered in $H - S$.

  So suppose $S_q := (A' \cup B', E') \subseteq H$.  We claim that
  $S_q \dminor_r G$ as follows. For $\tilde a\in A'$ let $T_a$ be the
  tree consisting only of $a$. For $b\in B'$ let $T_b$ be a spanning
  in-branching of $\bigcup_{e=ab \in E'} \eta(e)$, and note that $T_b
  \subseteq P(b)$ clearly exists. By Definition~\ref{def:rcdbg}~(b),
  $A' \cap \eta(e) = \emptyset$ whenever $e \in E'$ and therefore $T_a
  \cap T_b = \emptyset$ for each $\tilde a\in A'$ and $b\in B'$. Also,
  if $\tilde ab\in E'$ then $a$ has an edge to $\eta(e)$ and hence
  $T_a$ has an edge to $T_b$. This shows that $\set{T_a, T_b \st
    \tilde a\in A', b\in B'}$ are the branch sets of an $S_q$-minor of
  $G$ of depth $r$.\qed
\end{proof}

We would like to emphasise that Lemma~\ref{lem:main-tec} works in
particular for $r=0$, i.e.\ when $I$ is an arbitrary set in $G$ which
is large enough. Now we obtain our main theorem by repeated
application of the above.

\begin{lemma}
  If $\CCC$ is \nwcrown then it is uniformly quasi-wide.
\end{lemma}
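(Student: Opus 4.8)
The plan is to iterate Lemma~\ref{lem:main-tec} $d$ times to build up a scattered set step by step, starting from an arbitrary large subset of $W$ (which is trivially $0$-scattered) and increasing the scattering radius by one at each application, while accumulating the deleted vertices into the separator $S$.

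Concretely, suppose $\CCC$ is \nwcrown. For a given radius $d$ we must exhibit functions $s(d)$ and $N(d,m)$ with the required property. Since $\CCC$ is \nwcrown, for the radius $r = d$ there is a bound $q_d$ such that no $G \in \CCC$ has $S_{q_d}$ as a depth-$d$ minor; in fact we need such a $q_r$ for every $r \in \set{0,1,\dots,d}$, so set $q := \max_{0 \le r \le d} q_r$ (or just take the bound for the largest relevant radius, since a depth-$r$ minor is a depth-$r'$ minor for $r' \ge r$, one has to be a little careful about the direction of monotonicity here, but excluding $S_q$ at all depths $\le d$ costs only a finite maximum). Now define a sequence of sizes backwards: we want the final scattered set to have size $m$, so set $m_d := m$, and having defined $m_{r+1}$ set $m_r := F(r, m_{r+1}, q)$ where $F$ is the function from Lemma~\ref{lem:rcdbg}. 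Finally set $N(d,m) := m_0$ and $s(d) := (d+1)\binom{q}{2}$.

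Given $G \in \CCC$ and $W \subseteq V(G)$ with $|W| > N(d,m) = m_0$, we proceed by induction on $r$ from $0$ to $d$, maintaining the invariant that there is a set $S_r \subseteq V(G)$ with $|S_r| \le r\binom{q}{2}$ and a set $I_r \subseteq W$ with $|I_r| = m_r$ that is $r$-scattered in $G - S_r$. For the base case $r=0$, take $S_0 := \emptyset$ and $I_0$ any subset of $W$ of size $m_0$; this is $0$-scattered since $N_0^+(v) = \set{v}$. For the inductive step, apply Lemma~\ref{lem:main-tec} inside the graph $G - S_r$ to the $r$-scattered set $I_r$, which has size $m_r = F(r, m_{r+1}, q)$. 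The lemma yields one of two outcomes: either $S_q \dminor_r (G - S_r)$, which is impossible since $G$ (and hence $G - S_r$) has no $S_q$ as a depth-$r$ minor by choice of $q$; or there is a set $S' \in [V(G-S_r)]^{\le \binom{q}{2}}$ such that $(G - S_r) - S'$ contains an $(r+1)$-scattered set $I_{r+1}$ of size $m_{r+1}$. Note $I_{r+1} \subseteq I_r \subseteq W$ since Lemma~\ref{lem:main-tec} (via Lemma~\ref{lem:rcdbg}) selects the new scattered set as a subset of the old one. Setting $S_{r+1} := S_r \cup S'$ gives $|S_{r+1}| \le (r+1)\binom{q}{2}$ and maintains the invariant. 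After $d$ steps we obtain $S := S_d$ with $|S| \le d\binom{q}{2} \le s(d)$ and $U := I_d \subseteq W$ with $|U| = m$ that is $d$-scattered in $G - S$, as required; effectiveness of $s$ and $N$ follows since $F$ is computable and the $q_r$ can be found by brute-force search over candidate exclusion orders (or are given, if $\CCC$ is effectively \nwcrown).

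The main obstacle — and the reason this short argument rests on real work — is entirely inside Lemma~\ref{lem:main-tec}, and behind it Lemma~\ref{lem:rcdbg}: one has to verify that Lemma~\ref{lem:main-tec} genuinely applies in the \emph{modified} graph $G - S_r$ rather than only in the original $G$, which is fine because it is stated for an arbitrary digraph, but one must be careful that paths witnessing scatteredness and paths witnessing the excluded minor all live in $G - S_r$ consistently, and that the accumulated separator does not interfere with later applications (it does not, because at each stage we simply work in the current graph). A secondary point requiring care is the bookkeeping of the radii $q_r$ and the monotonicity of ``$S_q$ is not a depth-$r$ minor'' in both $q$ and $r$, so that a single $q$ suffices for all $d+1$ applications.
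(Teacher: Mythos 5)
Your proof is correct and follows essentially the same route as the paper: iterating Lemma~\ref{lem:main-tec} from radius $0$ up to $d$, defining the thresholds by backwards composition of $F$, and accumulating the deleted vertices into $S$ (the paper uses a depth-dependent excluded crown order $f(i)$ where you take a single maximum $q$, but this is immaterial). Your observation that $I_{r+1}\subseteq I_r$ must be read off from the \emph{proof} of Lemma~\ref{lem:main-tec} rather than its statement is accurate, and the paper's own argument relies on the same fact.
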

\begin{proof}
  Let $f \st \N \cup \set{0} \rightarrow \N$ be the function such that
  $S_{f(r)} \not\dminor_r G$ for all $G\in \CCC$. We construct the
  margins $s(r)$ and $N(r, m)$ witnessing that $\CCC$ is uniformly
  quasi-wide. Let $F$ be the function from Lemma~\ref{lem:rcdbg}. We
  define $\tilde N(r,m,r) := m$, and $\tilde N(r,m,i) :=
  F(r,\tilde N(r,m,i+1), f(i))$. We let $N(r,m) := \tilde N(r,m,0)$ and
  $s(r) := \sum_{i=0}^{r-1}\binom{f(i)}{2}$. Let $W$ be a set of size
  at least $N(r,m)$ in $G \in \CCC$.

  We let $I_0 := W$, $Z_0 := \emptyset$. Assuming $I_i$ and $Z_i$ are
  given and that $I_i$ is $i$-scattered in $G-Z_i$ with $|I_i| \geq
  \tilde N(r,m,i)$, we construct $I_{i+1}$ and $Z_{i+1}$ as
  follows. By Lemma~\ref{lem:main-tec}, we know that either $S_{f(i)}
  \dminor_i G-Z_i \dminor_i G$ or there exists a set $Z'_i$ of size at
  most $\binom{f(i)}{2}$ and $I_{i+1} \subseteq I_i$ of size $\tilde
  N(r,m,i+1)$ such that $I_{i+1}$ is $(i+1)$-scatterd in
  $G-Z_i-Z'_i$. Since the former cannot happen in our \nwcrown class,
  the latter must be the case, and our claim is proved by defining
  $Z_{i+1} := Z_i \cup Z'_i$ and letting $S := Z_r$ and $U := I_r$
  be the output of the algorithm.\qed
%
\end{proof}

This finishes the proof of Theorem~\ref{thm:main_equivalence} based on
Lemma~\ref{lem:rcdbg}. We proceed by proving the latter in a bottom-up
fashion. Let us first state the following simple observation:

\begin{lemma}\label{lem:cc2}
  If $S_q=(A' \dot\cup B', E')$ is a crown in an $r$-controlled
  directed bipartite graph $\rcdbg$ such that for all $a \in A'$,
  $\beta(a) \cap B' = \emptyset$, then $S_q$ is a controlled crown.
\end{lemma}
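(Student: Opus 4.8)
The plan is to unwind the definitions and show that the condition ``$\beta(a)\cap B'=\emptyset$ for all $a\in A'$'' forces, for every edge $e\in E(S_q)$, that $\eta(e)$ is disjoint from $V(S_q)$, which is exactly what it means for $S_q$ to be a \emph{controlled} crown in the sense of Definition~\ref{def:rcdbg}~(b). So fix an arbitrary edge $e=ab\in E(S_q)$, where $a\in A'$ and $b\in B'$, and suppose towards a contradiction that some vertex $x\in\eta(e)\cap V(S_q)$. Since $\eta(e)\subseteq[A]^{\leq r+1}$, in fact $\eta(e)\subseteq A$, so $x\in A$, hence $x\in A'$ (because $V(S_q)\cap A=A'$).

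Now invoke the last clause of Definition~\ref{def:rcdbg}~(a): writing $\eta(e)=\{a_0,\dots,a_i\}$ with $\lambda(a_0)<\dots<\lambda(a_i)$, we have $\beta(a_0)=\dots=\beta(a_i)=b$. In particular $\beta(x)=\{b\}$. But $x\in A'$ and $b\in B'$, so $b\in\beta(x)\cap B'$, contradicting the hypothesis that $\beta(a)\cap B'=\emptyset$ for all $a\in A'$ (applied with $a:=x$). Therefore $\eta(e)\cap V(S_q)=\emptyset$. Since $e$ was an arbitrary edge of $S_q$, the defining condition of a controlled crown is met, and $S_q$ is a controlled crown.

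There is essentially no obstacle here — the statement is a direct consequence of the fact that the base assignment $\eta$ of any edge $ab$ consists solely of vertices whose $\beta$-value is exactly $b$; one just has to notice that a principal/crown vertex in $\eta(e)\cap V(S_q)$ would have to live in $A'$ and would then witness $b\in\beta(\cdot)\cap B'$. The only point worth stating carefully is the identification $V(S_q)\cap A=A'$ (and the fact that $\eta(e)$ lands in $A$, not in $B$), so that membership of $x$ in both $\eta(e)$ and $V(S_q)$ really does put $x$ into $A'$.
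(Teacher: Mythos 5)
Your proof is correct and is essentially the paper's own argument: assume a crown vertex lies in $\eta(e)$ for some edge $e=ab$ of the crown, note that it must then belong to $A'$ and have base $b\in B'$, contradicting the hypothesis. The extra care you take in verifying $\eta(e)\subseteq A$ and $V(S_q)\cap A=A'$ is a harmless elaboration of the same one-line contradiction.
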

\begin{proof}
  Suppose there is a $v \in A'$ and $e=ab \in E'$ such that $v \in
  \eta(e)$. But then we know by definition that $\beta(v) = b \in B'$,
  a contradiction.\qed
\end{proof}

Our plan is to apply several Ramsey-type arguments -- which we state
from an algorithmic viewpoint -- to either find a large $1$-scattered
set, a controlled crown by way of Lemma~\ref{lem:cc2}, or a controlled
crown in which \emph{all vertices of $A$ have the same level according
  to $\lambda$}. The latter restriction is crucial to our proof and
will be applied when we invoke Lemma~\ref{lem:clique}, a very useful
auxiliary lemma stated below.

\begin{lemma}\label{lem:clique}
  There is a computable function $f \st \N\rightarrow \N$ such that if
  $G := K_{f(n)}$ and $\gamma\st E(G) \rightarrow [V(G)]^{\leq 1}$
  s.t. $\gamma(e) \cap e = \emptyset$ for all $e\in E(G)$ then
  there is $H \isom K_n \subseteq G$ such that $\gamma(e) \cap V(H) =
  \emptyset$ for all $e\in E(H)$.
\end{lemma}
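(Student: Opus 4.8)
The plan is to think of the function $\gamma$ as a colouring-type constraint on $K_{f(n)}$ and extract a large clique avoiding all the "forbidden vertices" by iterated cleaning combined with Ramsey's theorem. First I would set up notation: for a subset $W\subseteq V(G)$, say $W$ is \emph{clean} if $\gamma(e)\cap W=\emptyset$ for every $e\in E(G[W])$. We want a clean $W$ with $|W|\ge n$. The naive obstruction is that the set $\bigcup_{e}\gamma(e)$ may be large, so we cannot simply delete all forbidden vertices. Instead, I would proceed greedily: pick vertices $v_1,v_2,\dots$ one at a time, at each step restricting to the current candidate set and throwing away the (at most $|e|=2$ choices of) endpoints that could point at $v_i$ — but this also does not directly work because $\gamma(e)$ for an edge $e$ not incident to $v_i$ can still equal $v_i$.

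The cleaner approach is a two-level Ramsey argument. Orient/encode the data as a colouring of pairs: to each edge $e=\{x,y\}$ with $x<y$ (in some fixed linear order on $V(G)$) associate the "position" of $\gamma(e)$ relative to $x$ and $y$ — for instance, colour $e$ by whether $\gamma(e)=\emptyset$, or $\gamma(e)$ lies before $x$, strictly between $x$ and $y$, or after $y$ in the order. By Ramsey's theorem, if $f(n)$ is large enough we find a monochromatic clique $K_m\subseteq G$ on vertices $w_1<\dots<w_m$. In the "empty" colour class we are immediately done. In each of the three remaining classes, $\gamma$ restricted to the clique behaves uniformly, and I would then run a second extraction: from $w_1<\dots<w_m$ keep every other vertex, or more carefully, argue that within a monochromatic block the forbidden vertex $\gamma(\{w_i,w_j\})$ always lands \emph{outside} the interval $[w_i,w_j]$ — so by passing to a sub-clique that is "spread out" (e.g. taking $w_1, w_3, w_5,\dots$, or recursively halving) one forces $\gamma(e)\cap V(H)=\emptyset$ for all edges $e$ of the sub-clique $H$. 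Setting $f(n)$ to be the appropriate Ramsey number $R_{?}(\dots)$ composed with a factor of $2$ (or an iterated logarithm's worth of halvings) gives a computable bound.

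The step I expect to be the main obstacle is verifying that a single application of Ramsey on pairs, followed by one "sparsification" pass, genuinely yields a clean sub-clique — the colour classes "before $x$" and "after $y$" are the delicate ones, since there $\gamma(e)$ depends on the endpoints and one must check that no $\gamma(\{w_i,w_j\})$ coincidentally equals some retained $w_k$. If a single pass is not enough, the fallback is to iterate: each round either certifies cleanliness or identifies a vertex that must be discarded, and since each edge contributes at most one forbidden vertex, after boundedly many rounds (bounded in terms of $n$) the process terminates; this still gives a computable $f$, merely with a worse bound. Either way the existence of a computable $f$ follows, which is all the statement requires.
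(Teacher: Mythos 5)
Your central step does not work as stated. The positional $4$-colouring (recording whether $\gamma(e)$ is empty, lies before $x$, between $x$ and $y$, or after $y$) only controls \emph{where} $\gamma(e)$ sits relative to the endpoints of $e$, not \emph{which} vertex it is. In the monochromatic clique $w_1<\dots<w_m$ for, say, the ``before $x$'' colour, $\gamma(\{w_i,w_j\})$ can be any vertex preceding $w_i$, in particular any of $w_1,\dots,w_{i-1}$; taking every other vertex or recursively halving changes nothing, since e.g.\ $\gamma(\{w_3,w_5\})$ may equal $w_1$. The same failure occurs for ``after $y$'', and even for ``between'': $\gamma(\{w_1,w_5\})$ may equal $w_3$, which survives the halving. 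You flag exactly this as the likely obstacle, but your fallback does not close it: discarding one offending vertex per round removes only the constraints pointing \emph{at} that vertex, while up to $\binom{m}{2}$ edges may collectively point at every vertex of the clique, so there is no bound of the form ``boundedly many rounds in terms of $n$'' and no guarantee that the surviving set is simultaneously clean and of size at least $n$. As written, the argument is not a proof.

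The paper avoids this by neutralising the constraints one vertex at a time. Its $f$ is defined by $f(1):=1$ and $f(n+1):=1+R(2f(n))$; one fixes a single vertex $v$, two-colours the edges of $G-v$ according to whether $\gamma(e)=\{v\}$, and extracts a monochromatic set $X$ of size $2f(n)$. If all those edges point at $v$, any $n+1$ vertices of $X$ already form a clean clique (their forbidden vertex $v$ lies outside). Otherwise one greedily prunes $X$ so that the values $\gamma(\{v,u\})$ for retained $u$ all lie outside the retained set (removing at most two vertices per step, hence keeping $f(n)$ of them), recurses on the pruned set, and adds $v$ back. The point is that at each level only the constraints involving the one distinguished vertex $v$ need to be controlled, which a star-local greedy pass can do; a single global positional Ramsey pass cannot. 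If you want to stay closer to your idea, the canonical Ramsey theorem of Erd\H{o}s and Rado, applied to $\gamma$ as a pair-colouring by actual vertex identities, would give the uniformity you need, but that is a genuinely different and heavier tool than the one you invoked, and the min-dependent and max-dependent canonical cases would still require a separate cleaning argument.
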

\begin{proof}
  Let $R(n)$ denote the $n^{\mbox{\smaller th}}$ Ramsey number, i.e.\ the least integer
  $m$ such that if we $2$-color the edges of $K_m$, there always
  exists a monochromatic $K_n$.  We define the function $f(n)$
  recursively as $f(1) := 1$ and $f(n+1) := 1+R(2\cdot f(n))$.

  The lemma is proved by induction on $n$. For $n=1$ there is nothing
  to prove as any $K_1$ contains a $K_1$ as required.

  Suppose the statement is proved for $n$ and let $G \isom K_{f(n+1)}$
  and $\gamma$ be as required. Choose a vertex $v\in V(G)$ and colour all
  edges $e$ in $G_v := G - v$ by $v$ if $\gamma(e) = \{v\}$ and by $\bar v$
  otherwise. By Ramsey's theorem, as $|G_v| = R(2\cdot f(n))$ either
  there is a set $X\subseteq V(G_v)$ of size $2\cdot f(n)$ such that
  all edges between elements of $X$ are coloured $v$ or there is such
  a set where all edges are coloured $\bar v$. In the first case, any
  $X'\in [X]^{n+1}$ induces the required graph $H\isom K_{n+1}\subseteq G$.

  So suppose $X$ induces a sub-graph where all edges are labelled
  $\bar v$. 
  We construct a set $X'$ as follows. Initially, set $X' :=
  \emptyset$. While $X\not=\emptyset$, choose a vertex $u\in X$. Add
  $u$ to $X'$ and remove $u$ and $\gamma(\{v,u\})$ from $X$. As we are
  removing at most $2$ elements of $X$ in each step, we get a set $X'$
  of size at least $f(n)$ with the property that for each $u\in X'$,
  $\gamma(\{v,u\}) \cap X' = \emptyset$. 

  Let $G' := G[X']$ and $\gamma'(e) := \gamma(e)$ for all $e\in E[G']$. As $|G'|
  \geq f(n)$, by the induction hypothesis, $G'$ contains a subgraph
  $H'\isom K_{n}$ such that $\gamma(e) \cap V(H') = \emptyset$ for all
  $e\in E(H')$. Hence, $H := G[V(H')\cup \{ v\}]$ is the required
  subgraph of $G$ isomorphic to $K_{n+1}$ with $\gamma(e)\cap V(H) =
  \emptyset$ for all $e\in E(H)$.\qed
\end{proof}

Next, we state and prove Lemma~\ref{lem:0}. An interesting point about
this lemma is that it somewhat surprisingly includes two unrelated
applications of Lemma~\ref{lem:clique}, and in a sense, bears the
delicate part that makes our overall proof work correctly.

\begin{lemma}\label{lem:0}
  Let $(G:=(A\dot\cup B,E),\beta,\lambda,\eta)$ be an $r$-controlled
  directed bipartite graph such that  
  \begin{itemize} \squishlist
    \item[(i)] $\lambda$ is a constant function equal to $c \in \set{0,\dots,r+1}$;
    \item[(ii)] any two vertices in $B$ have a common neighbour in $A$; and
    \item[(iii)] every vertex in $A$ has at most $n$ successors. 
  \end{itemize}
  If $|B| \geq g(q,n):=(2n)^{2f(q)}$, where $f$ is as in
  Lemma~\ref{lem:clique}, then $G$ contains a controlled crown of
  order $q$.
\end{lemma}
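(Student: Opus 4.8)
The plan is to build the crown explicitly. Since every edge of a crown runs from a subdivision vertex to a principal vertex, while every edge of $G$ runs from $A$ to $B$, any copy of $S_q$ inside $G$ must place its $q$ principal vertices in $B$ and its $\binom{q}{2}$ subdivision vertices in $A$. So I will look for vertices $b_1,\dots,b_q\in B$ together with pairwise distinct vertices $a_{ij}\in A$ ($1\le i<j\le q$), each $a_{ij}$ a common neighbour of $b_i$ and $b_j$; by Lemma~\ref{lem:cc2} the resulting crown is automatically \emph{controlled} as soon as $\beta(a_{ij})\notin\{b_1,\dots,b_q\}$ for all $i<j$. Thus the whole task reduces to finding $q$ principals whose pairwise common neighbours can be chosen pairwise distinct and with their bases lying outside the principal set.

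The one place where hypothesis~(i) will be used is the remark that for every pair $\{b,b'\}\subseteq B$ and every common neighbour $a$ of $b$ and $b'$ one has $\beta(a)\notin\{b,b'\}$. Indeed, if $\beta(a)=b$ then the last clause of Definition~\ref{def:rcdbg}(a), applied to the edge $ab$, forces the top element of $\eta(ab)$ to have $\lambda$-value strictly below $\lambda(a)=c$; but by~(i) every vertex of $A$ has $\lambda$-value exactly $c$, so this is impossible unless $\eta(ab)=\emptyset$, i.e.\ $c=0$ --- and when $c=0$ we have $\eta\equiv\emptyset$, so every crown is already controlled and the base condition is vacuous. Hence the forbidden vertex $\beta(a_{ij})$ attached to a pair $\{b_i,b_j\}$ is a single vertex of $B$ avoiding both endpoints of the pair, which is exactly the situation to which Lemma~\ref{lem:clique} applies; it is precisely the constant-level hypothesis that makes this legitimate.

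The argument then splits into two Ramsey-type stages, each an invocation of Lemma~\ref{lem:clique} --- these are the two unrelated uses of that lemma anticipated before the statement. \textbf{Stage~1} makes the common neighbours choosable as pairwise distinct connectors and is where condition~(iii), hence the base $2n$ of $g(q,n)$, enters. Fix one common neighbour $c(\{b,b'\})$ for each pair (condition~(ii)); since each vertex of $A$ has at most $n$ successors, committing a connector removes at most $n$ vertices from the pool of admissible future principals before that connector could be reused, and an iterated selection with branching at most $2n$, run until $f(q)$ principals survive, extracts --- from $|B|\ge (2n)^{2f(q)}$ --- a set $B_1\subseteq B$ with $|B_1|\ge f(q)$ on which the pairwise common neighbours can be chosen pairwise distinct (for instance so that every pair $\{b,b'\}\subseteq B_1$ has a common neighbour $a$ with $N^+(a)\cap B_1=\{b,b'\}$, whence two pairs sharing a connector would force three principals into one successor set). \textbf{Stage~2} then applies Lemma~\ref{lem:clique} to the complete graph on $B_1$ with the edge-labelling $\gamma(\{b,b'\}):=\{\beta(c(\{b,b'\}))\}\cap B_1$ (empty when $c=0$): by the remark above $\gamma(e)\cap e=\emptyset$, and since $|B_1|\ge f(q)$ we obtain $B'\subseteq B_1$ with $|B'|=q$ and $\beta(c(\{b,b'\}))\notin B'$ for every pair inside $B'$. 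Setting $\{b_1,\dots,b_q\}:=B'$ and $a_{ij}:=c(\{b_i,b_j\})$, Stage~1 makes the $a_{ij}$ pairwise distinct, so $B'$ together with the $a_{ij}$ spans a copy of $S_q$ in $G$, and Stage~2 together with Lemma~\ref{lem:cc2} turns that copy into a controlled crown of order $q$.

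\textbf{Main obstacle.} The delicate step is Stage~1. Condition~(iii) constrains the successor set of a single connector only, and one cannot in general force $|N^+(a)\cap B'|\le 2$ for \emph{all} $a\in A$ at once (if, for instance, every triple of $B$ already has a common neighbour, this is impossible); the selection must therefore commit only the connectors incident to each newly added principal, delete only the few ($\le n$ per connector) vertices those particular connectors see, and still arrive at a set on which the connector map can be made injective while retaining at least $f(q)$ vertices. Organising this bookkeeping correctly --- together with the realisation that the constant-level hypothesis~(i) is exactly what licenses the labelling $\gamma$ of Stage~2 --- is the heart of the matter; everything else is a direct appeal to Lemmas~\ref{lem:cc2} and~\ref{lem:clique}.
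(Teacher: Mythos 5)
There is a genuine gap, and it sits exactly at the point the paper itself flags as the delicate one. Your whole argument hinges on the remark that hypothesis~(i) forces $\beta(a)\notin\{b,b'\}$ for every common neighbour $a$ of a pair $b,b'\in B$, which lets you treat every connection as what the paper calls a \emph{red} connection and finish with a single application of Lemma~\ref{lem:clique} followed by Lemma~\ref{lem:cc2}. That remark rests on reading $\eta(e)$ as a subset of the $A$ of the graph at hand, so that every element of $\eta(e)$ has level $c$. But in the setting where Lemma~\ref{lem:0} is invoked (via Lemma~\ref{lem:1}, itself fed by the construction in Lemma~\ref{lem:main-tec}), the $A$ with constant level $c$ is a single level-slice of a larger structure, while $\eta(e)$ records intermediate vertices of a path and contains vertices of \emph{strictly smaller} levels that simply do not belong to this $A$; the paper's own proof says explicitly ``for all $z'\in\eta(e_1)$ \dots\ $\lambda(z')<\lambda(a)=c$, and hence $z'\notin A$''. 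So $\beta(a)=b$ with $c\ge 1$ is perfectly possible, and your deduction collapses. A warning sign internal to your own reading: if $\eta(e)\subseteq A$ and $|\eta(e)|=\lambda(a)=c$ with strictly increasing levels all equal to $c$, then for $c\ge 2$ no edge can exist at all and hypothesis~(ii) is unsatisfiable for $|B|\ge 2$, so the lemma would be vacuous precisely in the cases that matter.

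Once connectors with $\beta(a)\in\{u,v\}$ (the paper's \emph{yellow} connections) are admitted, your proof has no way to handle a pair all of whose common neighbours are yellow: Lemma~\ref{lem:cc2} is then inapplicable, and one must verify $\eta(e)\cap V(S_q)=\emptyset$ directly. This is where the constant-level hypothesis is really used: for an edge $e_2=av$ with $\beta(a)=u$, at most one element $z\in\eta(e_2)$ has level $c$ and hence could lie in $A$; that $z$ satisfies $\beta(z)=v$ and so induces a yellow connection from $v$ to some third principal $w$, and it is \emph{this} $w$ that the second, genuinely separate application of Lemma~\ref{lem:clique} must exclude. Your Stage~1 (greedily purifying connectors using the degree bound $n$, which is where the base $2n$ of $g(q,n)$ comes from) and your red-case Stage~2 do match the paper's Case~(i), but the dichotomy into red and yellow connections and the entire yellow branch are missing, and without them the argument does not go through.
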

\begin{proof}
  For $u,v \in B$, let us say that $u$ and $v$ have a \emph{red
    connection} if there is an $a \in A$ with $u,v \in N^+(a)$ such
  that $\beta(a) \neq u$ and $\beta(a) \neq v$. If $u$ and $v$ are
  both neighbors to a vertex $a' \in A$ with $\beta(a') = u$ or
  $\beta(a') = v$, we say $u$ and $v$ have a \emph{yellow
    connection}. Note that $u$ and $v$ can have both red and yellow
  connections. If the vertices $a,a'$ do not have any other
  neighbours, we say the connection is \emph{pure}.

  Let $D_0 := B$ and let $R_0 := Y_0 := \emptyset$. Suppose $D_i$,
  $R_i$, and $Y_i$ have been defined maintaining the invariants that
  \begin{itemize} \squishlist
    \item any vertex $v$ in $R_i$ has a pure red connection to any vertex
      in $R_i \cup D_i \setminus \set{v}$; and
    \item any vertex $v$ in $Y_i$ has a pure yellow connection to any
      vertex in $Y_i \cup D_i \setminus \set{v}$.
  \end{itemize}
  Choose an arbitrary vertex $v_{i+1}\in D_i$, remove it from $D_i$,
  and set $D_R=D_Y=\emptyset$. For $j = 1\dots 2 \cdot
  (2n)^{2f(q)-(i+1)}$ do the following.  Choose $u_j\in D_i$ and $a\in
  A$ with $\set{v_{i+1}, u_j} \subseteq N^+(a)$. If $a$ induces a red
  connection between $v_{i+1}$ and $u_j$, put $u_j$ in $D_R$,
  otherwise put $u_j$ in $D_Y$. Delete $N^+(a)$ from $D_i$.

  As no $a\in A$ has more than $n$ successors, this process can
  complete successfully. At this time, either $D_R$ or $D_Y$ have at
  least $(2n)^{2f(q)-(i+1)}$ elements. If $|D_R| \geq |D_Y|$, define
  $D_{i+1} := D_R$, $R_{i+1} := R_i \cup \set{v_{i+1}}$, and $Y_{i+1}
  := Y_i$; otherwise let $D_{i+1} := D_Y$, $Y_{i+1} := Y_i \cup
  \set{v_{i+1}}$, and $R_{i+1} := R_i$. After $2f(q)$ iterations,
  either $R_{2f(q)}$ or $Y_{2f(q)}$ contain at least $f(q)$ vertices.

  Case~(i), $|R_{2f(q)}| \geq f(q)$: We construct a complete
  undirected graph $G_R := (V_R,E_R)$ with $V_R := R_{2f(q)}$. For
  vertices $u,v \in V_R$ and $e=uv \in E_R$, let $a \in A$ be the
  vertex inducing the pure red connection between $u$ and $v$, and
  define $\gamma(e) := \beta(a)$. As $\beta(a) \neq u$ and $\beta(a)
  \neq v$, we have that $G$ and $\gamma$ fulfil the requirements of
  Lemma~\ref{lem:clique}, and we obtain a clique $H_R \subseteq G_R$ of
  size $q$ such that $\gamma(e) \cap V(H_R) = \emptyset$ for all $e
  \in E(H_R)$. This translates to a controlled crown of order $q$ in
  $G$ by way of Lemma~\ref{lem:cc2}.

\begin{figure}[t]
  \centering
  \includegraphics[height=2.5cm]{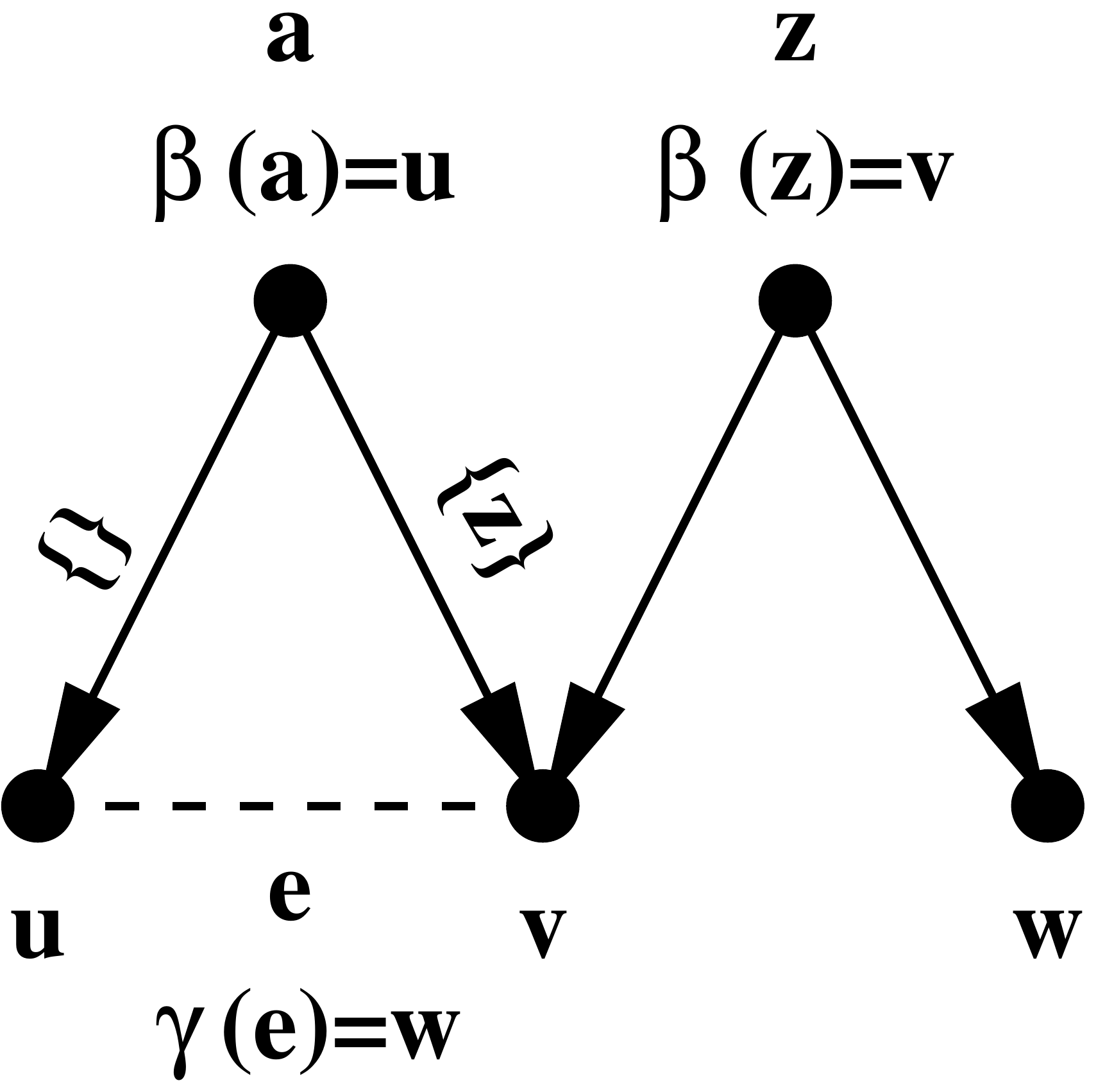}
  \caption{Illustration of how to define $\gamma$ in case~(ii) of the
    proof of Lemma~\ref{lem:0}.}
  \label{fig:rcdbg-proof} 
\end{figure}

Case~(ii), $|Y_{2f(q)}| \geq f(q)$: We construct a complete undirected
graph $G_Y := (V_Y,E_Y)$ with $V_Y := Y_{2f(q)}$. As shown in
Figure~\ref{fig:rcdbg-proof}, for vertices $u,v \in V_Y$ and $e=uv \in
E_Y$, let $a \in A$ be the vertex inducing the pure yellow connection
between $u$ and $v$, and w.l.o.g.\ assume $\beta(a) = u$. Let $e_1 =
au$ and $e_2 = av$. Note that for all $z' \in \eta(e_1)$, we know that
$\lambda(z') < \lambda(a)=c$, and hence $z' \notin A$. In $\eta(e_2)$,
there exists at most one $z$ with $\lambda(z)=c$. If no such $z$
exists or $z$ does not occur as a pure yellow connection between
vertices of $Y_{2f(q)}$, define $\gamma(e) := \emptyset$. Otherwise,
we know by Definition~\ref{def:rcdbg}~(a) that $\beta(z)=v$, and hence
$z$ induces a pure yellow connection between $v$ and some vertex $w
\in Y_{2f(q)}$. If $w=u$, then $z$ is redundant and we delete it from
$G$ and set $\gamma(e) = \emptyset$. Otherwise, we define $\gamma(e) =
w$. Now, if we apply Lemma~\ref{lem:clique} to obtain a clique $H_Y
\subseteq G_Y$ of size $q$ such that $\gamma(e) \cap V(H_Y) =
\emptyset$ for all $e \in E(H_Y)$, it means that for all $e \in
E(H_Y)$ such a vertex $w$ and therefore its pure connection via $z$ do
not occur in the crown induced by $V(H_Y)$ in $G$, i.e.\ $V(H_Y)$
induces a controlled crown in $G$.\qed
\end{proof}

\begin{lemma}\label{lem:1}
  Let $f \st \N^4 \rightarrow \N$ be defined as $f(r, p, q, n) :=
  (r+3)^{p+(r+2)\cdot g(q,n)}$, where $g(q,n)$ is the function defined in
  Lemma~\ref{lem:0}. Let $(G := (A,\dot \cup B, E), \beta, \lambda,
  \eta)$ be an $r$-controlled directed bipartite graph.  For all $r
  \geq 0$ and $p, q, n>0$, if $|B| > f(r, p, q, n)$ then
  \begin{enumerate}\squishlist
  \item $A$ contains a vertex with $n+1$ successors; or
  \item $B$ contains a $1$-scattered set of size $p$; or
  \item $G$ contains a controlled crown of order $q$.
  \end{enumerate}
\end{lemma}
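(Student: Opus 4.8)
The plan is to iteratively build a large subset of $B$ in which the "multiplicity" with which vertices get reached is controlled, while setting aside a few special vertices that could witness a $1$-scattered set, and then feed what remains into Lemma~\ref{lem:0}. The key quantitative observation is that a vertex $a \in A$ with few successors can be adjacent to at most $\binom{n}{2}$ pairs of vertices of $B$ it "connects"; and since $\lambda$ takes only $r+2$ values, each edge $e$ carries a label $\eta(e)$ of bounded size. These two facts together give the factor $(r+3)^{\cdots}$ in the definition of $f$: each time we process a vertex of $B$ we lose at most a factor $(r+3)$ from the pool, and we can afford $p + (r+2)\cdot g(q,n)$ such steps before being forced into outcome~1.

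First I would argue that if outcome~1 fails, every vertex of $A$ has at most $n$ successors, so condition~(iii) of Lemma~\ref{lem:0} holds globally. The task then becomes to extract from $B$ a subset $B'$ of size at least $g(q,n)$ on which condition~(ii) of Lemma~\ref{lem:0} also holds, i.e.\ any two vertices of $B'$ share a common neighbour in $A$ — unless along the way we expose a $1$-scattered set of size $p$. I would process $B$ greedily: maintain a current pool $D$ (initially $B$) and a set $U$ of "isolated" vertices already committed to the scattered set. At each step pick $v \in D$; either $v$ has a common neighbour with many vertices still in $D$ (enough to continue with the factor-$(r+3)$ loss bookkeeping), in which case we restrict $D$ to those and put $v$ in $B'$; or $v$ fails to share a neighbour with all but a few vertices, in which case we can place $v$ into $U$, delete from $D$ the few vertices it does connect to (again a bounded number, controlled by $n$ and the $\eta$-labels, whence the $(r+3)$ factor), and continue. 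Since every committed vertex of $B'$ can "absorb" only a bounded pool of non-co-neighbours, the count $p + (r+2)\cdot g(q,n)$ is exactly what is needed: if we ever reach $|U| = p$ we output outcome~2, and otherwise after fewer than $p + (r+2)\cdot g(q,n)$ steps we still have at least $g(q,n)$ vertices left forming $B'$ with the co-neighbour property. The $(r+2)\cdot g(q,n)$ summand is the extra slack needed because the final extraction of $B'$ from the pool may itself cost a factor per step as labels $\eta(e)$ (of size $\le r+1$, so $r+2$ possibilities for their cardinality) are taken into account.

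Then I would apply Lemma~\ref{lem:0} to the restricted $r$-controlled directed bipartite graph induced on $A \dot\cup B'$ — checking that $\lambda$ restricted there need not be constant, so one first applies a pigeonhole on the (at most $r+2$) levels to pass to a sub-structure where $\lambda$ is constant, shrinking $B'$ by a further factor $r+2$ (again accounted for in the exponent of $f$). Lemma~\ref{lem:0} then yields a controlled crown of order $q$, which is outcome~3, completing the case analysis.

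The main obstacle I expect is the bookkeeping: making the greedy pool-shrinking argument's loss genuinely bounded by the single factor $(r+3)$ per step. This requires carefully counting, for a chosen $v \in D$, how many other pool vertices can fail to co-neighbour $v$ yet still be "relevant" — the bound must come from the combination of the $\le n$ successors of each $a \in A$ and the $\le r+1$ size of the labels $\eta(e)$ via Definition~\ref{def:rcdbg}~(a), so that deleting $N^+(a)$ and the $\eta$-controlled vertices removes only $O(n)$ candidates and decisions split into $r+3$ types. Getting this accounting to line up exactly with $(r+3)^{p+(r+2)g(q,n)}$, and making sure the pigeonhole over levels and the final invocation of Lemma~\ref{lem:0} consume only the slack already built into $f$, is the delicate part.
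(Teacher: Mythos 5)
Your high-level shape is right --- greedily shrink a pool inside $B$, set aside candidates for a $1$-scattered set, and hand the survivors to Lemma~\ref{lem:0} --- and this is indeed how the paper proceeds. But both of the mechanisms you rely on are flawed. First, your dichotomy at each greedy step is not exhaustive, and the quantitative claim behind its second branch is false: you assert that a vertex $v$ that fails to co-neighbour ``many'' pool vertices connects to only ``a bounded number, controlled by $n$ and the $\eta$-labels''. There is no bound on the in-degree of a vertex of $B$, so $v$ can share a common predecessor with up to $n\cdot|N^-(v)|$ other pool vertices --- an unbounded quantity that can still be smaller than any fixed fraction of the pool. The paper's step instead partitions the current pool $B_i$ into the set $B'$ of vertices sharing no predecessor with $v$ and, for each of the $r+2$ levels $j$, the set $B^j$ of vertices sharing a level-$j$ predecessor with $v$; one of these $r+3$ parts has size at least $|B_i|/(r+3)$, and $v$ is committed to the scattered candidate set $I$ or to a level-specific set $D^j$ accordingly. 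That pigeonhole over $r+3$ parts is where the factor $r+3$ really comes from, not from deleting $O(n)$ vertices per step --- an additive loss could never yield a multiplicative $(r+3)$ factor anyway.

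Second, and more seriously, your plan to enforce condition~(i) of Lemma~\ref{lem:0} (constant $\lambda$) by ``a pigeonhole on the levels, shrinking $B'$ by a further factor $r+2$'' at the very end cannot work: the level is a property of the common neighbours in $A$ witnessing each \emph{pair} of vertices of $B'$, not of the vertices of $B'$ themselves, so different pairs may be connected only through different levels. Extracting a subset of $B'$ in which every pair has a common neighbour at one and the same level would require a multicolour Ramsey argument on the edges of the co-neighbour clique, costing far more than a factor $r+2$ and blowing past the stated $f$. The paper sidesteps this by maintaining the $r+2$ sets $D^0_i,\dots,D^{r+1}_i$ throughout the iteration, with the invariant that all of $D^j_i\cup B_i$ is pairwise connected through level-$j$ predecessors; after $p+(r+2)\cdot g(q,n)$ steps either $|I|=p$ (outcome~2) or some $D^t$ has size $g(q,n)$ and Lemma~\ref{lem:0} applies to it directly. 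Your reading of the summand $(r+2)\cdot g(q,n)$ as slack for the cardinalities of the $\eta$-labels is likewise off: it is simply the pigeonhole over which of the $r+2$ sets $D^j$ eventually grows large.
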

\begin{proof}
  If there is a vertex $a\in A$ of degree at least $n+1$ we are
  done. So suppose all vertices $a\in A$ have degree at most $n$.  We
  are going to define a sequence of sets $B_i, I_i,
  D^0_i,\dots,D^{r+1}_i$ with the invariant that
  \begin{itemize} \squishlist
  \item  $|B_i| \geq (r+3)^{p+(r+2)\cdot g(q,n)-i}$,
  \item no two $v \neq v'\in
    I_i$ and no $v\in I$ and $u\in B_i$ have a common predecessor in $A$,
  \item for $j=0,\dots,r+1$, any two $v, v'\in D^j_i$ and any $v\in
    D^j_i$ and $u\in B_i$ have a common predecessor in $a \in A$ with
    $\lambda(a)=j$.
  \end{itemize}

  Set $I_0 = D^0_0 = \dots = D^{r+1}_0 := \emptyset$ and $B_0 := B$
  which clearly satisfies the invariant. Now suppose $i<p+(r+2)\cdot g(q,n)$ and
  $B_i, I_i, D^0_i,\dots,D^{r+1}_i$ have been defined. Choose $v\in B_i$. For $j=0,\dots,r+1$, define $A^j := \set{a \in N^-(v) \st \lambda(a) = j}$ and $B^j := N^+(A_j) \cap B_i$. Let $B' := B_i \setminus \bigcup_j B^j$.
  \begin{itemize}
  \item If $|B'| \geq \frac{1}{r+3} |B_i|$ then $B_{i+1} := B'$,
    $I_{i+1} := I_i \cup \set{v}$, and $D^j_{i+1} := D_i$. Clearly,
    this maintains the invariant.
  \item Otherwise, there must exist a $t$ s.t.\ $|B^t| \geq
    \frac{1}{r+3} |B_i|$; define $B_{i+1} := B^t$, $I_{i+1} := I_i$,
    $D^t_{i+1} := D^t_i \cup \{ v\}$, and $D^j_{i+1} := D^j_i$ for all
    $j \neq t$. Again, the invariant is maintained.
  \end{itemize}
  
  Now if for some $i$ we have $|I_i| = p$, then we have found a
  $1$-scattered set of size $p$. Otherwise, there must exist a $t$
  s.t.\ $|D^t_{p+(r+2)\cdot g(q,n)}| \geq g(q,n)$. Applying
  Lemma~\ref{lem:0} to this set results in a controlled crown of order
  $q$.\qed
\end{proof}

Finally, we are ready to prove Lemma~\ref{lem:rcdbg}, concluding this
section.

\begin{proof}[of Lemma~\ref{lem:rcdbg}]
  Let $\tilde F(r, p, q, 0) := q$ and $\tilde F(r, p, q, t) := f(r, p,
  q, \tilde F(r, p, q, t-1))$, where $f$ is as in
  Lemma~\ref{lem:1}. Define $F(r, p, q) := \tilde
  F(r,p,q,\binom{q}{2})$ and assume $|B| \geq F(r, p,q)$.

  We are going to construct a sequence $S_i \subseteq A, B_i\subseteq
  B$ as follows, where we will maintain the invariant that $|B_i| \geq
  \tilde F(r,p,q,\binom{q}{2}-i)$ and that for all $v\in S_i$,
  $B_i\subseteq N^+(v)$ and $\beta(v) \cap B_i = \emptyset$.

  Set $S_0 := \emptyset$ and $B_0 := B$. Now suppose $S_i, B_i$ have
  been defined.  If $A\setminus S_i$ contains a vertex $a$ such that
  $|N^+(v) \cap B_i| > \tilde F(r,p, q, \binom{q}{2}-i-1)$ then set
  $S_{i+1} := \set{v} \cup S_i$ and $B_{i+1} := (B_i \cap N^+(v))
  \setminus \beta(v)$. Clearly, this maintains the invariant.

  If we can continue this process until we obtain a set $S :=
  S_{\binom{q}{2}}$, then $S$ together with any subset of
  $B_{\binom{q}{2}}$ of size $q$ contain an $H\isom S_q$, which is in
  fact a controlled crown by Lemma~\ref{lem:cc2}.

  Otherwise, if at some point $A\setminus S_i$ does not contain a
  vertex $a$ such that $|N^+(v) \cap B_i| > \tilde F(r, p, q,
  \binom{q}{2}-i-1)$, then, applying Lemma~\ref{lem:1} to $(G' :=
  G[A\cup B_i \setminus S_i], \beta',\lambda,\eta)$, where $\beta'(v)
  := \emptyset$ if $\beta(v) \not\in B_i$ and $\beta'(v) := \beta(v)$
  otherwise, we obtain either a controlled crown of order $q$ or a
  $1$-scattered set $I \subseteq B_i$ of size $p$ in $G'$. But then,
  setting $S := S_i$ we see that $I$ is $1$-scattered in $G - S$.\qed
\end{proof}

\section{Algorithmic Results}
\label{sec:algo}

The aim of this section is to show that problems such as
directed dominating sets, directed independent dominating set (Kernel problem),
and many variations are fixed-parameter tractable on any class of
directed graphs which is \nwcrown. Domination problems on
nowhere dense classes of undirected graphs have been studied
in \cite{DawarKre09} and some of the ideas developed there can be
adapted to the directed setting.

As a first step towards algorithmic applications we show that
we can efficiently compute scattered sets in graph classes which are
effectively uniformly quasi-wide.

\begin{lemma}\label{lem:comp-S}
 Let $\CCC$ be effectively uniformly quasi-wide with margin $s, N$.
 The following problem is fixed-parameter tractable.
 \pprob{8cm}
 {$G\in\CCC$, $d, m\in\N$, $W\subseteq \vrtG$ such that $|W|\geq N(d,
   m)$}%
 {$d+m$}%
 {Compute $S\subseteq \vrtG$, $|S|\leq s(d)$ and $U\subseteq W$,
   $|U| = m$ such that $U$ is $d$-scattered in $G - S$}
\end{lemma}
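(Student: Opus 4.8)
The plan is to turn the \emph{existence} statement of uniform quasi-wideness (Theorem~\ref{thm:main_equivalence}, via the effective version of its margin functions $s,N$) into a search algorithm by the standard self-reduction / method of "guessing a witness through a polynomial-depth recursion". The obstacle is that a priori we only know a set $S$ of size $\leq s(d)$ and $U\subseteq W$ of size $m$ \emph{exist}; we must \emph{find} them in time $f(d+m)\cdot n^{\Oof(1)}$, and $s(d)$ need not be bounded by a function of the parameter alone --- wait, it is: $s$ depends only on $d$, so $s(d)$ is a parameter, and likewise $N(d,m)$ is a parameter. That is what makes the problem fpt rather than merely in XP.

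First I would observe that the two quantities $s(d)$ and $N(d,m)$ are computable since $\CCC$ is effectively uniformly quasi-wide, so the algorithm may compute them up front. The key structural fact is monotonicity of the wideness guarantee under vertex deletion and under passing to subsets: if $|W|\geq N(d,m)$ then deleting \emph{any single} vertex $x$ from $W$ still leaves, for the smaller target, enough vertices --- but more to the point, the right recursion is on building $S$ one vertex at a time. The plan: maintain a current "forbidden" set $S'$ (initially empty), with the invariant that $\CCC$-membership of $G-S'$ is preserved (uniform quasi-wideness is a property of the class, and $G - S'$ need not lie in $\CCC$, so I instead keep the invariant abstractly: there exist $S''$ with $|S''|\le s(d)-|S'|$ and $U\subseteq W$, $|U|=m$, with $U$ being $d$-scattered in $G-S'-S''$). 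Given $|W|\geq N(d,m)$ and the promise, I branch: either $W$ already contains a $d$-scattered set of size $m$ in $G-S'$ --- which I can \emph{test and extract greedily}, since $d$-scatteredness is checkable in polynomial time (compute $N_d^+(v)$ for all $v$, then a set $U$ is $d$-scattered in $G-S'$ iff no $N_d^+(v)$ in $G-S'$ contains two elements of $U$; greedily pick vertices of $W$ maintaining this) --- or else every size-$m$ subset of $W$ fails, and then by the promise some vertex must still be added to $S'$. In the latter case I would like to identify \emph{which} vertex to add; the clean way is to brute-force over all $n$ choices of the next vertex $x\in V(G)$ to put into $S'$, recurse with $S'\cup\{x\}$ and the \emph{same} $W$ (note $|W|\ge N(d,m)$ is untouched since we only shrink $S'$'s complement in $V(G)$, not $W$ --- and if some element of $W$ equals $x$ we drop it, but then $|W|$ might fall below $N(d,m)$; to avoid this subtlety I would at the outset take $|W|$ of size $N(d,m)+s(d)$, or note we may assume $W$ disjoint from the eventual $S$ by the definition, or simply start from a slightly larger $W$). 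The recursion depth is at most $s(d)$, and the branching factor at most $n$, giving running time $n^{s(d)}$ per level --- that is XP, not fpt.

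To get \textbf{fpt} I would instead use the characterisation's \emph{constructive} proof rather than its statement: the proof of the lemma "if $\CCC$ is \nwcrown then it is uniformly quasi-wide" (the last displayed proof in Section~\ref{subsec:equiv}) is already algorithmic --- it proceeds by $r$ rounds, in round $i$ invoking Lemma~\ref{lem:main-tec}, whose proof builds the controlled directed bipartite graph $H$ \emph{explicitly} in polynomial time, then invokes Lemma~\ref{lem:rcdbg}, whose proof (built from Lemmas~\ref{lem:0},~\ref{lem:1},~\ref{lem:clique}) is a sequence of polynomial-time greedy constructions and Ramsey extractions, each producing either the desired scattered set or a controlled crown. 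Since $\CCC$ is \nwcrown, the "controlled crown of order $f(i)$" alternative never materialises --- and this is detectable: if the construction ever outputs such a crown, it would yield $S_{f(i)}\dminor_i G$, which is impossible, so that branch is simply never taken. Hence the entire chain is a deterministic polynomial-time procedure whose bookkeeping (the sizes of intermediate sets $B_i, I_i, D^j_i$, the recursion bounds $\tilde N, \tilde F$, etc.) is governed purely by the computable functions $f, F, g$ and the parameters $d,m$; the final $S = Z_r$ has $|S| = s(d)$ and $U = I_r$ has $|U| = m$ with $U$ being $d$-scattered in $G - S$, exactly as required. The running time is $h(d+m)\cdot n^{\Oof(1)}$ where $h$ collects all the (computable, parameter-only) blow-ups, so the problem is fpt.

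The main obstacle, and the only real content beyond "unwind the proof", is verifying that \emph{every} step in the proof chain of Theorem~\ref{thm:main_equivalence} is genuinely constructive in polynomial time: in particular that Lemma~\ref{lem:clique}'s Ramsey extraction can be carried out by explicitly searching for the monochromatic clique (which takes time $n^{\Oof(1)}$ for \emph{fixed} target size $f(q)$, and $f(q)$ is a parameter, so this is fpt-time, not polynomial --- acceptable), that the greedy loops in Lemmas~\ref{lem:0} and~\ref{lem:1} are literally implementable as written, and that checking $d$-scatteredness and computing $d$-neighbourhoods is polynomial. I would also remark that detecting the "controlled crown" output and recognising it as impossible requires no extra work: one just follows the single branch the proof guarantees is available, and correctness of the whole is inherited from the correctness proof of the theorem. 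So the write-up is: (1) compute $s(d), N(d,m)$; (2) run the constructive version of the proof of Theorem~\ref{thm:main_equivalence}, tracking that each sub-lemma's proof is an algorithm with the stated fpt running time; (3) output $(S,U)$; (4) conclude by the theorem's correctness. \qed
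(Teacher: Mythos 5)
Your main route --- re-running the constructive proof of Theorem~\ref{thm:main_equivalence} --- does not prove the lemma as stated, because it does not respect the \emph{given} margin $s,N$. The lemma promises only $|W|\geq N(d,m)$ for the margin the class comes equipped with, and demands an output $S$ with $|S|\leq s(d)$ for that same $s$. Unwinding the proof that {\nwcrown} classes are uniformly quasi-wide produces a \emph{different} margin, namely $s'(d)=\sum_{i<d}\binom{f(i)}{2}$ and $N'(d,m)=\tilde N(d,m,0)$ built from a crown-exclusion function $f$ (which, since $\CCC$ is only given as uniformly quasi-wide, must itself first be derived from $s,N$ via the ``if'' direction of the theorem). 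In general $N'(d,m)\gg N(d,m)$, so the input set $W$ may be far too small to even start the iteration of Lemma~\ref{lem:main-tec}, and the set $Z_r$ you would output has size bounded by $s'(d)$, not by $s(d)$. At best your argument yields a variant of the lemma with a worse (though still computable) margin; that would in fact suffice for the applications in Section~\ref{sec:algo}, but it is not the stated lemma, and it drags in the entire machinery of Section~\ref{subsec:equiv} where none is needed.

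The idea you are missing is much simpler and is what the paper does: since $N(d,m)$ depends only on the parameter, fix an arbitrary $U'\subseteq W$ with $|U'|=N(d,m)$; the definition of uniform quasi-wideness already guarantees a good pair $(S,U)$ with $U\subseteq U'$. Now $U$ ranges over the at most $2^{N(d,m)}$ subsets of $U'$ --- fpt-many --- so it can be found by enumeration, and for each candidate $U$ the deletion set need not be searched for at all: one takes canonically $C:=\bigcup_{u\neq u'\in U}N_d^-(u)\cap N_d^-(u')$, computable in time $\Oof(|U|^2\cdot|G|)$, observes that $U\setminus C$ is $d$-scattered in $G-C$, and accepts whenever $|C|\leq s(d)$ and $|U\setminus C|\geq m$. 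Your first sketch circles around this (branching over all $n$ choices of the next vertex to delete, which you correctly reject as XP) but never makes the two moves that turn it into an fpt algorithm: shrink the search space for $U$ to a parameter-sized universe, and replace the search for $S$ by a polynomial-time canonical choice.
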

\begin{proof}
  To compute $U$, we first choose an arbitrary subset $U'\subseteq W$
  of size exactly $N(d, m)$.  By definition of uniformly
  quasi-wideness, we are guaranteed to find a set $S$ and a set
  $U\subseteq U'$ as required. But $|U'|$ depends only on the
  parameter and hence we can simply traverse through all subsets $U$
  of $U'$ of size at least $m$. For each such set $U$ we compute $C :=
  \bigcup_{u\not=u' \in U} N^-_d(u) \cap N^-_d(u')$. Clearly,
  $U\setminus C$ is $d$-scattered in $G - C$ and $C$ can be computed
  in time $\Oof(|U|^2\cdot |G|)$.  Hence, if $|U\setminus C| \geq m$
  and $|C|\leq s(d)$ we return $U$ and $S := C$. \qed
\end{proof}

We briefly recall the definition of the algorithmic problems we refer
to below. As defined in Section~\ref{sec:prelim}, a
\emph{$d$-dominating set} is a set $D \subseteq V(G)$ such that
$N_d^+(D) = V(G)$ and is just called a \emph{dominating set} for
$d=1$.  An \emph{independent dominating set} is a dominating set $D$
which is itself independent, i.e.~$(u,v)\not\in E(G)$ for all $u,v\in
D$. The corresponding decision problems are defined as the problem,
given a digraph $G$ and a number $k$, to decide whether $G$ contains
an (independent or unrestricted) $d$-dominating set of size $k$. The
parameter is $k+d$.

Recall that an important variant of the undirected dominating set
problem is the \emph{connected dominating set problem}, where we are
asked to find a dominating set $D$ of size $k$ such that $D$ induces a
connected subgraph. There are various natural translations of this
problem to the directed case: we can require the dominating set to
induce a strongly connected subgraph or we can simply require it to
induce an out-branching, i.e.~a directed tree whose edges are oriented
away from a root. The second variation, which we call \emph{dominating
  out-branching}, still captures the idea that information can flow
from the root to all vertices in the dominating set.

We now establish our main algorithmic result of the paper where we
show that many covering problems become fixed-parameter tractable on
classes of directed graphs which are \nwcrown.

\begin{theorem}\label{theo:main-alg}
  Let $\CCC$ be a class of directed graphs which is {\nwcrown}. Then
  the directed (independent or unrestricted) dominating set problem,
  the dominating out-branching problem, and the independent set problem
  as well as their distance-$d$-versions are fixed-parameter tractable
  on $\CCC$.
\end{theorem}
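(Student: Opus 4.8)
The plan is to reduce each of the covering problems to the same core subroutine, namely computing large scattered sets (Lemma~\ref{lem:comp-S}), and then to exploit the combinatorial structure of scattered sets to bound the solution. First I would observe that since $\CCC$ is \nwcrown, by Theorem~\ref{thm:main_equivalence} it is uniformly quasi-wide, and in fact the proof of Lemma~\ref{lem:rcdbg} (and the lemmata leading to it) is constructive, so $\CCC$ is \emph{effectively} uniformly quasi-wide; hence Lemma~\ref{lem:comp-S} applies. The heart of the argument is the standard ``domination $\Rightarrow$ bounded-size kernel by wideness'' scheme, adapted to the directed setting: suppose $D$ is a hypothetical $d$-dominating set of size $k$. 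Each vertex $v$ lies in $N_d^+(u)$ for some $u \in D$, so the map $v \mapsto$ (some dominator) partitions $V(G)$ into $k$ classes $N_d^+(u)$. Now take any set $W$ of vertices with $|W| > N(2d, f(k))$ for an appropriate function $f$; by Lemma~\ref{lem:comp-S} we find $S$ with $|S| \le s(2d)$ and $U \subseteq W$ with $|U| = f(k)$ such that $U$ is $2d$-scattered in $G - S$. A $2d$-scattered set can contain at most one vertex from each $N_d^+(u)$ with $u \notin S$ --- because if $u$ reaches $u_1, u_2 \in U$ within distance $d$ then $u_1, u_2 \in N_d^+(u)$, contradicting being $d$-scattered; so being $2d$-scattered (the stronger requirement) certainly rules this out --- hence $|U \setminus N_d^+(S)|$ is bounded in terms of $|S|$ and $k$. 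Choosing $f(k)$ large enough relative to $s(2d) \cdot d$-neighbourhood sizes forces a contradiction unless $W$ itself is small, which yields a polynomial-size kernel on which we can solve the problem by brute force.

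More concretely, for the $d$-dominating set problem I would proceed as follows. Run a greedy/iterative argument: maintain a candidate partial solution and a ``boundary'' set $W$ of not-yet-dominated vertices; as long as $|W|$ is large, apply Lemma~\ref{lem:comp-S} with radius $2d$ to extract a scattered set $U \subseteq W$; the key combinatorial claim is that any $d$-dominating set must, after removing the small set $S$, contain a vertex ``close to'' each element of $U$, and since $U$ is scattered these close-by dominators are essentially distinct, so $|U| \le g(k, s(d), d)$ for a computable $g$; taking $|W|$ above the corresponding threshold $N(2d, g(k,s(d),d)+1)$ gives a contradiction. This bounds $|W|$ and hence, after preprocessing away dominated vertices (``crown-type'' reduction rules: a vertex all of whose in-neighbourhood is already irrelevant can be removed), produces a kernel of size bounded by a function of $k$ and $d$ only, which is solved exhaustively in fpt time. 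The independent set problem is handled by the dual observation that a large $d$-scattered set is in particular (for $d \ge 1$) an independent set --- indeed a $d$-independent set --- so Lemma~\ref{lem:comp-S} directly produces large independent sets; if the instance has no large scattered set then $|W|$, hence $|G|$ after reductions, is bounded by a function of the parameter and we brute-force. For the \emph{independent dominating set} (kernel) problem one combines both: the domination argument bounds the kernel size, and within the kernel independence is checked directly.

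The dominating out-branching problem requires one extra idea: a solution is a dominating set $D$ of size $k$ together with an out-branching on $G[D]$, so $G[D]$ must have a root reaching all of $D$. The plan is to first apply the $d$-dominating-set kernelisation to bound the number of ``essential'' vertices, but connectivity/reachability is not preserved by arbitrary reductions, so I would instead argue that in an out-branching on $k$ vertices the underlying structure is a tree, hence the solution $D$ together with the (at most $k-1$) internal connecting vertices behaves like a dominating set of size $O(k)$ with an additional bounded-treewidth-like constraint; after reducing to a kernel of size $h(k,d)$ for computable $h$, one enumerates all candidate vertex subsets of size $\le 2k$ and checks in polynomial time whether each admits an out-branching spanning a dominating subset. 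Throughout, all the distance-$d$ versions are obtained uniformly by using radius parameter $2d$ (or $2d{+}1$) in the scattered-set extraction.

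I expect the main obstacle to be two-fold. First, making the kernelisation argument genuinely \emph{directed}: the classical undirected proof repeatedly uses that domination and scatteredness are symmetric, whereas here $N_d^+$ and $N_d^-$ differ, so one must be careful whether to scatter with respect to out- or in-neighbourhoods and to check that the ``at most one scattered vertex per dominator'' claim still goes through with the asymmetric neighbourhoods --- this is exactly why I use radius $2d$ and the direction bookkeeping in Lemma~\ref{lem:comp-S} needs to match the direction in which domination acts. Second, the dominating out-branching problem genuinely needs the connectivity constraint to survive kernelisation; the safe reduction rules must be shown not to destroy the existence of an out-branching on some optimal solution, which is the delicate part and where I expect the bulk of the case analysis to live. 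The independent set and plain dominating set cases, by contrast, should be essentially immediate corollaries of Lemma~\ref{lem:comp-S} plus the scattered-set/independence observations above.
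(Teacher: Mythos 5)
Your overall plan (invoke Theorem~\ref{thm:main_equivalence}, compute scattered sets via Lemma~\ref{lem:comp-S}, reduce to instances of bounded size) points in the right direction, but the central combinatorial step is wrong. You claim that if $W$ is large then the scattered set $U$ extracted from $W$ ``forces a contradiction unless $W$ itself is small,'' yielding a kernel. This is false: $U$ is only scattered in $G-S$, and the deleted set $S$ may itself $d$-dominate all of $U$ (and indeed all of $G$). A directed star with centre $c$ and $n$ out-leaves has a dominating set of size $1$, yet after deleting $S=\{c\}$ the leaves form an arbitrarily large $1$-scattered set --- no contradiction with $k=1$ ever arises, and no bound on $|W|$ follows. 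The correct conclusions one can draw from a scattered set of size $>k$ are weaker and are exactly what the paper uses: for the radius-$1$ problems, any dominating set of size $k$ must \emph{intersect} $S$, which supports a branching algorithm of depth $k$ and degree $s(1)$ (this is the paper's argument for independent dominating set and dominating out-branching, the latter with a directed Steiner tree subroutine for the base case); for the distance-$d$ versions one cannot even conclude intersection with $S$ --- a dominator outside $S$ may reach two elements of $U$ via paths through $S$, a problem your switch to radius $2d$ does not address --- and the paper instead imports the Dawar--Kreutzer irrelevant-vertex argument (Lemma~\ref{lem:d-dom-reduce}), whose threshold $(k+2)(d+1)^{s(d)}$ comes from classifying the elements of $U$ by their distance profiles to $S$ and discarding a vertex whose profile class is overpopulated. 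Neither of these two mechanisms appears in your write-up, and without one of them the domination cases do not go through.

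Two smaller points. The independent set case is fine: a $1$-scattered set in $G-S$ is independent in $G$, so Lemma~\ref{lem:comp-S} either answers yes directly or certifies $|V(G)|\leq N(1,k)$. For the dominating out-branching problem, note that in the paper's formulation the solution set itself must induce the out-branching (there are no extra Steiner vertices), and the connectivity is maintained not by showing reduction rules are safe but by carrying the partial branching $u_1,\dots,u_t$ through the recursion and solving the bounded-$|W|$ base case as a directed Steiner tree instance, which is fixed-parameter tractable; your proposal to ``enumerate subsets of size $\leq 2k$ in the kernel'' does not work because the out-branching's vertices need not lie in any kernel obtained by domination-based reductions.
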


We present the proof of Theorem~\ref{theo:main-alg} for the cases of
independent dominating set, $d$-dominating set, and dominating
out-branching. This captures the essence of the ideas involved in
designing such algorithms for directed uniformly quasi-wide
classes. The other problems and some variations such as Red-Blue
domination or Roman domination follow similarly. See \cite{Cesati06}
for detailed definitions of these problems.

\begin{proof}[of Theorem~\ref{theo:main-alg} for the case of independent dominating set]
  Let $G \in \CCC$ and $k$ be given. We solve a slightly more general
  version, where we are additionally given a set $Y \subseteq \vrtG$
  and require that the solution belongs to $\vrtG \setminus Y$.  If
  $|\vrtG| \leq N(1, k+1)$, then we solve the problem by exhaustively
  searching for all possible sets of $k$ vertices.
 
  Otherwise, we apply Lemma~\ref{lem:comp-S} to obtain a set
  $U\subseteq \vrtG$ of size $k+1$ and a set $S$ of at most $s(1)$
  vertices such that $U$ is $1$-scattered in $G - S$.  As $U$ is
  $1$-scattered and of size $k+1$, it follows that any dominating set
  must contain at least one vertex from $S$. Hence, it suffices that
  for each $v \in S \setminus Y$, we recursively solve the problem on
  $G' := G - s - N^+(s)$ and $Y' := Y \cup N^-(s)$ with $k' := k-1$
  and report a positive answer in case one is found. This results in a
  recursion tree of depth at most $k$ and degree at most $s(1)$ which
  are both bounded in terms of our parameters.\qed
\end{proof}

Next, we consider the $d$-dominating set problem. The following
lemmata are proved for the undirected version in~\cite{DawarKre09} and
the proofs work exactly as they are for the directed version as well.

\begin{lemma}\label{lem:d-dom-intermediate}
  The following problem is fixed-parameter tractable.
  
  \pprob{11cm}%
  {a directed graph $G$, $W \subseteq V(G)$ and $k,d \in \N$}%
  {$|W|, k, d$}%
  {Does there exist a set $X\in [V(G)]^{\leq k}$ such that $X$ $d$-dominates $W$?}
\end{lemma}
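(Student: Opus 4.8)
The plan is to view this as a hitting-set instance with only $|W|$ constraints and then to exploit that a bounded number of vertex \emph{types} suffices. Note first that a set $X\subseteq V(G)$ $d$-dominates $W$ exactly when, for every $w\in W$, there is some $x\in X$ with $w\in N_d^+(x)$, i.e.\ $x\in N_d^-(w)$. So we are looking for an $X$ with $|X|\le k$ that meets each of the $|W|$ sets $\{N_d^-(w):w\in W\}$.

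First I would compute, for each $w\in W$, the set $N_d^-(w)$ by a depth-$d$ breadth-first search from $w$ in the reverse of $G$; this costs $\Oof(|W|\cdot(|V(G)|+|E(G)|))$, which is polynomial. Then I assign to every vertex $v\in V(G)$ its \emph{type} $\chi(v):=\{w\in W: v\in N_d^-(w)\}=N_d^+(v)\cap W\subseteq W$. There are at most $2^{|W|}$ types, and two vertices of the same type are interchangeable for the purpose of $d$-dominating $W$: including either one in $X$ covers precisely the same elements of $W$. Consequently $X$ $d$-dominates $W$ if and only if $\bigcup_{x\in X}\chi(x)=W$, so a size-minimal solution never needs two vertices of the same type; it is therefore enough to retain one representative vertex per non-empty realised type.

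With at most $2^{|W|}$ representatives in hand, I would simply enumerate all sub-collections $\TTT$ of realised types with $|\TTT|\le k$ and accept if and only if some such $\TTT$ satisfies $\bigcup_{\chi\in\TTT}\chi=W$; each accepting $\TTT$ yields a witness $X$ of size $\le k$ by picking one representative per chosen type, and conversely every solution $X$ induces such a $\TTT$. The number of sub-collections considered is at most $\sum_{i=0}^{k}\binom{2^{|W|}}{i}\le(2^{|W|}+1)^k$, which depends only on the parameters $|W|$ and $k$, and each is tested in time polynomial in $|W|$ and $k$. Together with the polynomial preprocessing this gives the desired $f(|W|,k,d)\cdot|G|^{\Oof(1)}$ bound. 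There is essentially no obstacle here: the only points needing care are verifying that a minimum solution uses at most one vertex per type (so that the count $2^{|W|}$ is legitimate) and that the distance-$d$ reachability sets $N_d^-(w)$ are computed in polynomial time — which is exactly where the parameter $d$ would, if anywhere, enter. This is the same argument as in the undirected case, now carried out with directed reachability.
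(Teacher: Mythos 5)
Your proof is correct and is essentially the argument the paper relies on: the paper defers this lemma to the undirected version in \cite{DawarKre09}, whose proof is exactly this reduction to at most $2^{|W|}$ vertex types (the sets $N_d^+(v)\cap W$) followed by brute-force enumeration of at most $k$ types covering $W$. Nothing is missing.
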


\begin{lemma}\label{lem:d-dom-reduce}
  Let $\CCC$ be effectively uniformly quasi-wide with margin $s, N$.
  The following problem is fixed-parameter tractable. Furthermore,
  such a vertex $w$ always exists.
  \pprob{11cm}
  {$G\in\CCC$, $k, d\in\N$, $W\subseteq \vrtG$ such that $|W| > N(d,
    (k+2)(d+1)^{s(d)})$}%
  {$k+d$}%
  {Compute a vertex $w \in W$ such that for any set $X \in
    [V(G)]^{\leq k}$, $X$ $d$-dominates $W$ if and only if $X$
    $d$-dominates $W \setminus \set{w}$.}
\end{lemma}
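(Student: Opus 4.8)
The plan is to follow the same kernelisation strategy that works in the undirected case~\cite{DawarKre09}, adapted to the directed setting. Recall we are given $G \in \CCC$, integers $k, d$, and a set $W \subseteq V(G)$ which is large, specifically $|W| > N(d, (k+2)(d+1)^{s(d)})$; we want to find a vertex $w \in W$ that is \emph{redundant} in the sense that every size-$\leq k$ set $d$-dominates $W$ iff it $d$-dominates $W \setminus \set{w}$. The key structural fact is that the quasi-wideness margin of $\CCC$ forces, inside $W$, a large scattered set, and a scattered set cannot be $d$-dominated efficiently unless many dominators lie in a small ``hub'' set $S$; this pins down enough structure to exhibit a redundant vertex.

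Concretely, first I would invoke Lemma~\ref{lem:comp-S} with parameters $d$ and $m := (k+2)(d+1)^{s(d)}$: since $|W| > N(d,m)$ we obtain in fpt time a set $S \subseteq V(G)$ with $|S| \leq s(d)$ and a set $U \subseteq W$ with $|U| = m$ such that $U$ is $d$-scattered in $G - S$. Now fix any candidate set $X \in [V(G)]^{\leq k}$ that $d$-dominates $W$. Split $X$ into $X_S := X \cap S$ and $X_0 := X \setminus S$. Every $u \in U$ is reached within distance $d$ by some vertex of $X$; because $U$ is $d$-scattered in $G - S$, no single vertex of $X_0$ can $d$-reach two distinct elements of $U$, so $X_0$ covers at most $|X_0| \leq k$ elements of $U$. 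Hence at least $m - k = (k+2)(d+1)^{s(d)} - k \geq (d+1)^{s(d)}$ (crudely, using $k+2 \geq k+1$ and discarding) elements of $U$ are $d$-dominated via $S$ — that is, for each such $u$ there is $s_u \in X_S \subseteq S$ with $u \in N_d^+(s_u)$, and moreover (tracking a shortest path) a precise distance $\text{dist}(s_u, u) =: \ell_u \in \set{1, \dots, d}$, or really we should allow $\ell_u \in \set{0,\dots,d}$ when $s_u = u$.

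The main point, and where the counting function $(d+1)^{s(d)}$ comes from, is the following pigeonhole argument. Associate to each of these $\geq (d+1)^{s(d)}$ elements $u \in U$ its ``profile'' $\pi(u) := (\min\set{d+1} \cup \set{\text{dist}(s,u)} : s \in S) \in \set{0,1,\dots,d,d{+}1}^{S}$ recording, for every vertex $s \in S$, the distance from $s$ to $u$ capped at $d+1$ (with $d+1$ meaning ``farther than $d$''). There are at most $(d+2)^{|S|}$ such profiles — here I would either adjust the statement's constant to $(d+2)^{s(d)}$ or note that profiles of the relevant $u$'s take values in $\set{0,\dots,d}$ for at least one coordinate so $(d+1)^{s(d)}$ suffices — and since we have more $u$'s than profiles, two distinct $u_1, u_2 \in U$ share the same profile $\pi(u_1) = \pi(u_2)$. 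Then for any $X$ with $|X| \leq k$: $X$ $d$-dominates $u_1$ via some $s \in S$ iff $\text{dist}(s,u_1) \leq d$ iff (same profile) $\text{dist}(s,u_2) \leq d$ iff $X$ $d$-dominates $u_2$ via $S$; and $X$ could also dominate $u_i$ via $X_0$, but by scatteredness at most one of $u_1,u_2$ can be dominated via $X_0$ by any fixed vertex — this is the delicate case. I would handle it by choosing the output vertex $w$ among a group of \emph{three} (or more) elements of $U$ with a common profile rather than two: among $u_1,u_2,u_3$ sharing a profile, at most $|X_0|$ of them are covered by $X_0$, but the point is to show one of them is covered iff all are, purely through $S$; re-examining, the cleanest route is to note that domination of $u$ \emph{through $S$} is entirely determined by $\pi(u)$ and $X\cap S$, so among equal-profile vertices either all are $S$-dominated by a given $X$ or none are — and a vertex $u$ of $U$ that is $X_0$-dominated is, by the scattering bound, essentially unique per dominator, so if we pick $w$ to be an equal-profile vertex distinct from the $\leq k$ possible $X_0$-targets then $w$ is $X$-dominated iff $S$-dominated iff its profile-twin (also in $W$) is, giving redundancy of $w$.

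The fact that such a $w$ \emph{always exists} follows because the argument above is purely existential once $|W| > N(d,m)$, and the \emph{fixed-parameter tractability} of computing it follows because $S$ and $U$ are produced in fpt time by Lemma~\ref{lem:comp-S}, computing all capped distances from the $\leq s(d)$ vertices of $S$ to the $m$ vertices of $U$ is polynomial (BFS from each $s \in S$), and then pigeonholing on profiles and selecting $w$ is done in time depending only on $m$ and $|S|$, hence only on the parameters $k,d$. The main obstacle I expect is getting the bookkeeping of the counting constant exactly right — reconciling the ``at most one $u$ per dominator in $X_0$'' slack with the number of profiles so that $(k+2)(d+1)^{s(d)}$ (or whatever constant survives a careful accounting) is genuinely enough, and making sure the profile-twin of $w$ is itself a vertex of $W$ (which it is, since $U \subseteq W$) so that dropping $w$ from $W$ leaves its twin behind to witness domination. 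Once that combinatorial core is pinned down, the rest is routine.
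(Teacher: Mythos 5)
Your proposal is the same profile/pigeonhole argument that the paper relies on: it gives no proof of its own but states that the undirected proof of \cite{DawarKre09} carries over verbatim, and that proof is exactly what you reconstruct (extract $S$ and a $d$-scattered $U$ via Lemma~\ref{lem:comp-S}, classify $U$ by distance profiles to $S$, and observe that domination ``through $S$'' depends only on the profile while domination avoiding $S$ covers at most one element of $U$ per dominator). The two points you leave open both close cleanly. First, the constant: since $U$ lives in $G-S$ we have $U\cap S=\emptyset$, so the capped distance from $s\in S$ to $u\in U$ takes values in $\set{1,\dots,d}\cup\set{>d}$, i.e.\ $d+1$ values per coordinate, and $(d+1)^{s(d)}$ profiles suffice. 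Second, $w$ must not depend on $X$: pick the profile class of size at least $k+2$ guaranteed by pigeonhole and let $w$ be \emph{any} of its members; then for every $X\in[V(G)]^{\leq k}$ that $d$-dominates $W\setminus\set{w}$, the remaining $\geq k+1$ twins are all dominated, at most $k$ of them from $X\setminus S$ by scatteredness, so at least one is dominated via $S$, and equality of profiles transfers that domination to $w$ --- no case distinction on ``which vertices $X_0$ targets'' is needed.
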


In order to solve the $d$-dominating set problem, we start with $W =
V(G)$ and apply Lemma~\ref{lem:d-dom-reduce} until the size of $W$
becomes bounded by $N(d,(k+2)(d+1)^{s(d)}$. Then we can apply
Lemma~\ref{lem:d-dom-intermediate}.

Finally, let us consider the dominating out-branching problem. We first need the following lemma, which is proved by using the fixed-parameter tractability of \emph{directed Steiner tree} in general graphs.

\begin{lemma}\label{lem:dom-outb-intermediate}
  The following problem is fixed-parameter tractable.
  
  \pprob{11cm}%
  {a directed graph $G$, $u_1, \dots, u_t \in V(G)$ and
    $W\subseteq V(G)$, $j\in\N$}%
  {$|W|, t+j$}%
  {Do there exist distinct vertices $v_1, \dots, v_j$ such that $G[v_1, \dots,
    v_j, u_1, \dots, u_t]$ is an out-branching and $W \subseteq
    N^+(v_1, \dots, v_j)$?}
\end{lemma}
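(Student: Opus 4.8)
The plan is to reduce the problem to the fixed-parameter tractable \emph{directed Steiner tree} problem. Recall that in directed Steiner tree we are given a digraph, a root $r$, and a set of $t$ terminals, and we must find a minimum-size out-branching rooted at $r$ that reaches all terminals; this is known to be fpt parameterised by the number of terminals (by the Dreyfus--Wagner-style dynamic programming of Nederlof, or the classical result of Feldman--Ruhl, running in time $f(t)\cdot n^{O(1)}$). First I would guess, by brute force over the at most $t^t$ possibilities, which of the $u_i$ is the intended root of the out-branching (if the $G[v_1,\dots,v_j,u_1,\dots,u_t]$ is to be an out-branching, exactly one of its vertices is a source; I also guess whether the root is among the $u_i$ or is one of the to-be-determined $v_\ell$, and in the latter case I additionally guess nothing further since the $v_\ell$ are interchangeable). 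Since $t$ is a parameter this branching is affordable.

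Next, having fixed a candidate root $\rho$, I observe that the required structure has two jobs: (i) it must span all of $u_1,\dots,u_t$ as an out-branching, and (ii) every vertex of $W$ must be an out-neighbour of one of the newly chosen vertices $v_1,\dots,v_j$. To handle (ii), I would guess, for each $w\in W$, which vertex $v_\ell$ dominates it — more precisely I guess a function from $W$ into $[j]$; since $|W|$ and $j$ are both parameters there are at most $j^{|W|}$ such guesses. This partitions $W$ into classes $W_1,\dots,W_j$, where $W_\ell$ is to be dominated by $v_\ell$. For each class $W_\ell$, a single vertex must dominate all of $W_\ell$, so the set of \emph{admissible} choices for $v_\ell$ is exactly $C_\ell := \bigcap_{w\in W_\ell} N^-(w)$, which is computable in polynomial time. (Empty classes impose no constraint.) Now the task is precisely: find an out-branching of $G$ rooted at $\rho$ that contains $u_1,\dots,u_t$ and contains, for each $\ell$ with $W_\ell\neq\emptyset$, at least one vertex from $C_\ell$.

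To cast this as directed Steiner tree I would introduce, for each nonempty class $\ell$, a fresh terminal vertex $t_\ell$ together with edges $(v,t_\ell)$ for every $v\in C_\ell$; the terminal set then becomes $\{u_1,\dots,u_t\}\cup\{t_\ell : W_\ell\neq\emptyset\}$, which has size at most $t+j$, a parameter. Solving directed Steiner tree on this augmented graph with root $\rho$ yields an out-branching $T$ that reaches every $u_i$ and, for each nonempty $\ell$, reaches $t_\ell$ hence passes through some vertex of $C_\ell$; deleting the auxiliary terminals from $T$ gives an out-branching of $G$ of the desired kind, and the vertices it uses beyond the $u_i$ can be taken as the $v_\ell$'s (padding with arbitrary extra reachable vertices if the Steiner tree used fewer than $j$ new vertices, which is always possible as long as $|V(G)|\geq t+j$, and otherwise the whole problem is solved by brute force). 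Conversely any genuine solution gives a Steiner tree of the augmented instance. The main obstacle — and the only delicate point — is bookkeeping the correspondence between "number of new vertices $j$" and the Steiner tree's size: a Steiner tree may reuse a single vertex to serve several roles, so I must be careful that the reduction preserves the exact count $j$ (padding upward is fine; the lower bound is enforced because each nonempty $W_\ell$ forces at least one vertex in $C_\ell$, and distinctness can be arranged since the $u_i$ and the $v_\ell$ live in a graph large enough to absorb the padding). Assembling the pieces, the total running time is $t^{O(t)}\cdot j^{|W|}\cdot f(t+j)\cdot n^{O(1)}$, which is fpt in the stated parameter $|W|+t+j$. \qed
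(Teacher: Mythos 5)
Your reduction is essentially identical to the paper's: both enumerate the $j^{|W|}$ assignments of $W$ to the vertices $v_1,\dots,v_j$, compute for each class the set of common in-neighbours (your $C_\ell$ is the paper's $X_i$), attach one auxiliary terminal per class with in-edges from its candidates, and invoke the fixed-parameter tractability of directed Steiner tree on the resulting terminal set of size at most $t+j$. The additional care you take in guessing the root and in bookkeeping the exact count of $j$ new distinct vertices only makes explicit details the paper leaves implicit.
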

\begin{proof}
  Suppose there are distinct vertices $v_1, \dots, v_j$ such that
  $G[v_1, \dots, v_j, u_1, \dots, u_t]$ is an out-branching and $W
  \subseteq N^+(v_1, \dots, v_j)$. Then we can partition $W$ into
  (possibly empty) sets $W_1, \dots, W_j$ so that $W_i \subseteq
  N^+(v_i)$. Let
  \[
    X_i := \{ x \in V(G) \st W_i \subseteq N^+(x) \} \, ;
  \]
  then $v_i \in X_i$ for all $i$. Hence, to decide whether such
  vertices exists it suffices to compute all possible partitions of
  $W$ into sets $W_1, \dots, W_j$ (some of which may be empty), for
  each partition the sets $X_i$, and then check if an out-branching of
  size at most $t+j$ exists that includes $u_1,\dots,u_t$ and at least
  one element out of each $X_i$.

  But this problem can easily be transformed to a \emph{directed
    Steiner tree} instance: for each $X_i$ add a veretex $x_i$ and for
  each $v \in X_i$ a directed edge $vx_i$. Now, the goal is to find a
  directed Steiner tree, i.e.\ an out-branching, of size at most
  $t+2j$ on the terminal set $\set{u_1,\dots,u_t,x_1,\dots,x_j}$. Since the
  directed Steiner tree problem is fixed-parameter tractable (see
  e.g.\ Guo et al.~\cite{GuoNS09}) and the size of $W$ is part of the
  parameter, all steps can be erformed in fpt time. \qed

%
%
%
\end{proof}

\begin{proof}[of Theorem~\ref{theo:main-alg} for the case of
  dominating out-branching]
  Let $G \in \CCC$ and $k$ be given.  We will recursively solve a
  slightly more general problem where the input consists of a graph
  $G$, a set $W\subseteq V(G)$, a number $j$ and vertices $u_1, \dots,
  u_t \in V(G)$.  The problem is to decide whether there are $j$
  vertices $v_1, \dots, v_j$ such that $G[u_1, \dots, u_t, v_1, \dots,
    v_j]$ is an out-branching and $\{v_1, \dots, v_j\}$ dominate
  $W$. The parameter is $t+j$.

  Solving the problem for $W := V(G)$, $t:= 0$ and $j := k$ solves the
  dominating out-branching problem.

  If $|W| \leq N(1, t+j+1)$, then we can apply
  Lemma~\ref{lem:dom-outb-intermediate}.  Otherwise, let $S\subseteq
  V(G)$ and $A\subseteq W$ be such that $|A|>t+j$ and $A$ is
  $1$-scattered in $G - S$. This implies that every dominating set
  must contain a vertex of $S$. For each $u_{t+1}\in S$ we call the
  algorithm recursively on $G, u_1, \dots, u_{t+1}, W\setminus
  N^+(u_{t+1}), j-1$ to check whether $u_1, \dots, u_{t+1}$ can be
  extended to a dominating out-branching.  This results in a recursion
  tree of depth at most $k$ and degree at most $s(1)$ which are both
  bounded in terms of our parameters.\qed
\end{proof}

To adapt this proof idea to a solution of the strongly connected
dominating set problem we would need to solve the \emph{directed
  strongly connected Steiner subgraph} problem, i.e.~the problem,
given vertices $t_1, \dots, t_p$ in a digraph $G$ and a number $k$, to
decide whether there are $k$ vertices $v_1, \dots, v_k$ so that $t_1,
\dots, t_p, v_1, \dots, v_k$ induce a strongly connected
sub-graph. This problem is known \emph{not} to be fixed-parameter
tractable (with parameter $k+p$) on general directed graphs but it
might become fpt on {\nwcrown} classes of graphs. We leave this for
future research.


%





\section{Conclusion}
\label{sec:conclusion}

In this paper we have given a classification of classes of directed
graphs that is fundamentally different from the existing proposals for
directed width measures based on tree-width. We have seen that even
for the very general concept of {\nwcrown} classes of directed graphs,
``covering'' problems such as the (independent) dominating set problem become
fixed-parameter tractable, problems that are not tractable on classes
of bounded width for any other width measure (except bi-rank width,
see the introduction). 

We believe that our proposal here may lead to a new and promising
structural theory for directed graphs with algorithmic applications in
mind. However, this clearly is only a first step and many questions
remain open. 

In particular, linkage problems become (XP-)tractable on classes of
directed graphs of bounded directed tree-width
\cite{JohnsonRobSeyTho01} but we do not presently know whether they
are tractable on any of our classes. But from this it follows that if
we consider classes of directed graphs which have bounded directed
tree-width and at the same time are {\nwcrown}, then on such classes
we can solve a wide range of problems efficiently, both linkage and
covering problems. Clearly, this definition is extremely technical and
as such not very useful. But it shows that there are structural
concepts for directed graphs which do allow for a very rich
algorithmic theory and we believe it would be worth studying these
concepts in more detail.

Additionally, from a structural point of view, the perspective
presented in our work opens a number of intriguing problems to look
at. For example, a slight adaption of Theorem~\ref{thm:alternating}
shows that if a class of digraphs has unbounded undirected tree-width,
then it contains arbitrarily large alternating cycles, i.e.\ an
orientation of a cycle that contains a long alternating path minor. In
terms of the directed minor relation, this results in an
\emph{anti-chain} in the class. One question is, if we exclude such
anti-chains, does a minor-closed class become \emph{well
  quasi-ordered}? Since such a class must have bounded tree-width, it
seems plausible that the answer might be positive.

Another structural question relates to \emph{colourings} of our
classes. It is easily seen that classes of directed bounded expansion
can be coloured by a constant number of colours. However, could it be
possible that more powerful colourings are possible, such as acyclic
colourings or colourings that induce some other bounded width measure
(cf.\cite{NesetrilOss08})? This seems to be a quite difficult question as known
methods for undirected graphs do not generalise. However, it might be
that at least on the intersection of classes of directed bounded
expansion and crown-minor-free, one can obtain some results.

\bibliographystyle{alpha}
\bibliography{papers}

\newcommand{\etalchar}[1]{$^{#1}$}
\addcontentsline{toc}{chapter}{Bibliography}
\begin{thebibliography}{BDHK06}

\bibitem[Bar06]{Barat06}
J{\'a}nos Bar{\'a}t.
\newblock Directed path-width and monotonicity in digraph searching.
\newblock {\em Graphs and Combinatorics}, 22(2):161--172, 2006.

\bibitem[BDHK06]{BerwangerDawHunKre06}
D.~Berwanger, A.~Dawar, P.~Hunter, and S.~Kreutzer.
\newblock Dag-width and parity games.
\newblock In {\em Symp. on Theoretical Aspects of Computer Science (STACS)},
  2006.

\bibitem[Bod93]{Bodlaender93}
Hans~L. Bodlaender.
\newblock A tourist guide through treewidth.
\newblock {\em Acta Cybern.}, 11(1-2):1--22, 1993.

\bibitem[Bod97]{Bodlaender97}
H.~L. Bodlaender.
\newblock Treewidth: Algorithmic techniques and results.
\newblock In {\em Proc. of Mathematical Foundations of Computer Science
  (MFCS)}, volume 1295 of {\em Lecture Notes in Computer Science}, pages
  19--36, 1997.

\bibitem[Bod98]{Bodlaender98}
H.~Bodlaender.
\newblock A partial k-aboretum of graphs with bounded tree-width.
\newblock {\em Theoretical Computer Science}, 209:1 -- 45, 1998.

\bibitem[Bod05]{Bodlaender05}
Hans~L. Bodlaender.
\newblock Discovering treewidth.
\newblock In {\em SOFSEM}, pages 1--16, 2005.

\bibitem[Ces06]{Cesati06}
Marco Cesati.
\newblock Compendium of parameterized problems.
\newblock Technical report, University of Rome, 2006.

\bibitem[Daw10]{Dawar10}
Anuj Dawar.
\newblock Homomorphism preservation on quasi-wide classes.
\newblock {\em J. Comput. Syst. Sci.}, 76(5):324--332, 2010.

\bibitem[DF98]{DowneyFel98}
R.~Downey and M.~Fellows.
\newblock {\em Parameterized Complexity}.
\newblock Springer, 1998.

\bibitem[DGK09]{DankelmannGK09}
Peter Dankelmann, Gregory Gutin, and Eun~Jung Kim.
\newblock On complexity of minimum leaf out-branching problem.
\newblock {\em Discrete Applied Mathematics}, 157(13):3000--3004, 2009.

\bibitem[Die05]{Diestel05}
R.~Diestel.
\newblock {\em Graph Theory}.
\newblock Springer-Verlag, 3rd edition, 2005.

\bibitem[DK09]{DawarKre09}
A.~Dawar and S.~Kreutzer.
\newblock Domination problems in nowhere-dense classes of graphs.
\newblock In {\em Foundations of software technology and theoretical computer
  science (FSTTCS)}, 2009.

\bibitem[Epp95]{Eppstein95}
David Eppstein.
\newblock Finding common ancestors and disjoint paths in {DAG}s.
\newblock Technical Report 95-52, University of California, Irvine, CA, 1995.

\bibitem[FG06]{FlumGro06}
J.~Flum and M.~Grohe.
\newblock {\em Parameterized Complexity Theory}.
\newblock Springer, 2006.
\newblock ISBN 3-54-029952-1.

\bibitem[GHK{\etalchar{+}}]{GanianHliKneLanObdRos10}
Robert Ganian, Petr {Hlin\v en\'y}, Joachim Kneis, Alexander Langer, Jan {Obr\v
  z\'alek}, and Peter Rossmanith.
\newblock On digraph width measures in parameterized algorithmics.
\newblock available on arxiv.org.

\bibitem[GNS09]{GuoNS09}
Jiong Guo, Rolf Niedermeier, and Ondrej Such{\'y}.
\newblock Parameterized complexity of arc-weighted directed steiner problems.
\newblock In Yingfei Dong, Ding-Zhu Du, and Oscar~H. Ibarra, editors, {\em
  ISAAC}, volume 5878 of {\em Lecture Notes in Computer Science}, pages
  544--553. Springer, 2009.

\bibitem[HK08]{HunterKre08}
P.~Hunter and S.~Kreutzer.
\newblock Digraph measures: Kelly decompositions, games, and ordering.
\newblock {\em Theoretical Computer Science (TCS)}, 399(3), 2008.

\bibitem[JRST01]{JohnsonRobSeyTho01}
Thor Johnson, Neil Robertson, Paul~D. Seymour, and Robin Thomas.
\newblock Directed tree-width.
\newblock {\em J. Comb. Theory, Ser. B}, 82(1):138--154, 2001.

\bibitem[KLM08]{KaouriLamMit08}
Georgia Kaouri, Michael Lampis, and Valia Mitsou.
\newblock On the algorithmic effectiveness of digraph decompositions and
  complexity measures.
\newblock Submitted to and rejected from ICALP 2008. It is the ICALP version
  that the pdf link points to., 2008.

\bibitem[KO08]{KreutzerOrd08}
S.~Kreutzer and S.~Ordyniak.
\newblock Digraph decompositions and monotonocity in digraph searching).
\newblock In {\em 34th International Workshop on Graph-Theoretic Concepts in
  Computer Science (WG)}, 2008.

\bibitem[KR09]{KanteR09}
Mamadou~Moustapha Kant{\'e} and Micha{\"e}l Rao.
\newblock Directed rank-width and displit decomposition.
\newblock In {\em Workshop on Graph-Theoretical Concepts in Computer Science
  (WG)}, pages 214--225, 2009.

\bibitem[NO08a]{NesetrilOss08}
J.~{Ne\v{s}et{\v r}il} and P.~{Ossona de Mendez}.
\newblock Grad and classes with bounded expansion {I}--{III}.
\newblock {\em European Journal of Combinatorics}, 29, 2008.
\newblock Series of $3$ papers appearing in volumes $(3)$ and $(4)$.

\bibitem[NO08b]{NesetrilOss08d}
Jaroslav {Ne\v{s}et{\v r}il} and Patrice {Ossona de Mendez}.
\newblock On nowhere dense graphs.
\newblock {\em European Journal of Combinatorics}, 2008.
\newblock submitted.

\bibitem[NO10]{NesetrilOss10}
J.~{Ne\v{s}et{\v r}il} and P.~{Ossona de Mendez}.
\newblock First order properties on nowhere dense structures.
\newblock {\em Journal of Symbolic Logic}, 75(3):868--887, 2010.

\bibitem[Obd06]{Obdrzalek06}
Jan Obdrz{\'a}lek.
\newblock Dag-width: connectivity measure for directed graphs.
\newblock In {\em Symp. on Discrete Algorithms (SODA)}, pages 814--821, 2006.

\bibitem[PS78]{PerlShiloach78}
Y.~Perl and Y.~Shiloach.
\newblock Finding two disjoint paths between two pairs of vertices in a graph.
\newblock {\em J. ACM}, 1(25):1--9, 1978.

\bibitem[Ree99]{Reed99}
B.~Reed.
\newblock Introducing directed tree-width.
\newblock {\em Electronic Notes in Discrete Mathematics}, 3:222 -- 229, 1999.

\bibitem[RS86]{GMV}
N.~Robertson and P.~D. Seymour.
\newblock Graph minors {V}. {E}xcluding a planar graph.
\newblock {\em Journal of Combinatorial Theory, Series B}, 41(1):92--114, 1986.

\bibitem[RS  ]{GM-series}
N.~Robertson and P.D. Seymour.
\newblock Graph minors {I} -- {XXIII}, 1982 --.
\newblock Appearing in Journal of Combinatorial Theory, Series B since 1982.

\bibitem[Saf05]{Safari05}
M.~A. Safari.
\newblock D-width: A more natural measure for directed tree width.
\newblock In {\em Proc. of Mathematical Foundations of Computer Science
  (MFCS)}, number 3618 in Lecture Notes in Computer Science, pages 745 -- 756,
  2005.

\end{thebibliography}

\end{document}